\newcommand{\added}[1]{ { \color{blue}#1 } }
\renewcommand{\added}[1]{ #1 } % for remove the the blue color used in the revision 
\newtheorem{theorem}{Theorem}[section]
\newtheorem{assumption}[theorem]{Assumption}
\newtheorem{corollary}[theorem]{Corollary}
\newtheorem{definition}[theorem]{Definition}
\newtheorem{proposal}{Proposal}[section]
\newdefinition{remark}[theorem]{Remark}
\newcommand{\dt}{\triangle \tau}
\newcommand{\im}{{\rm Im}}
\newcommand{\lkd}{\mathcal{L}}
\newcommand{\potential}{\eta}
\newcommand{\expect}{\mathbb{E}}
\newcommand{\forward}{F}
\newcommand{\linear}{\mathrm J}
\newcommand{\gradientu}{\nabla_u\potential(u;y)}
\newcommand{\gradientv}{\nabla_v\potential(v;y)}
\newcommand{\gradientw}{\nabla_w\potential(w;y)}
\newcommand{\hessian}{\mathrm H}
\newcommand{\precondition}{\mathrm P}
\newcommand{\normal}{\mathcal{N}}
\newcommand{\hilbert}{\mathcal{H}}
\newcommand{\banach}{\mathcal{X}}
\newcommand{\dataspace}{\mathbb{R}^{N_y}}
\newcommand{\pocov}{\Gamma_{\mathrm{pos}}}
\newcommand{\prcov}{\Gamma_{\mathrm{pr}}}
\newcommand{\obscov}{\Gamma_{\mathrm{obs}}}
\newcommand{\prmean}{m_0}
\newcommand{\rand}{\xi}
\newcommand{\eigentrunc}{\varrho}
\newcommand{\N}{\mathbb{N}}
\newcommand{\R}{\mathbb{R}}
\newcommand{\cA}{\mathrm A}
\newcommand{\cB}{\mathrm B}
\newcommand{\cG}{\mathrm G}
\newcommand{\cD}{\mathrm D}
\newcommand{\cI}{\mathrm I}
\newcommand{\cW}{\mathrm W}
\newcommand{\basis}{\mathrm \Psi}
\newcommand{\cM}{\mathrm M}
\newcommand{\cS}{\mathrm S}
\newcommand{\cC}{\mathrm C}
\newcommand{\cL}{\mathrm L}
\newcommand{\cT}{\mathrm T}
\newcommand{\cQ}{\mathrm Q}
\newcommand{\cR}{\mathrm R}
\newcommand{\ray}{\mathcal{R}}
\newcommand{\white}{d\mathcal{W}}
\newcommand{\sub}{\mathcal{V}}
\newcommand{\half}{{1/2}}
\journal{Journal of Computational Physics}
\begin{document}

%%%%%%%%%%%%%%%%%%%%%%%%%%%%%%%%%%%%%%%%%%%%%%%%%%%%%%%%%%%%%%%%%%%%%%

% !TEX root = dili_main.tex

\begin{frontmatter}
  
  \title{Dimension-independent likelihood-informed MCMC}
    
  \author[mit]{Tiangang Cui}
  \ead{tcui@mit.edu}

  \author[oak]{Kody J.~H.~Law}
  \ead{lawkj@ornl.gov}
   
  \author[mit]{Youssef M.~Marzouk}
  \ead{ymarz@mit.edu}

  \address[mit]{Massachusetts Institute of Technology, Cambridge, MA 02139, USA}
  \address[oak]{Computer Science and Mathematics Division, Oak Ridge National Laboratory, Oak Ridge, TN 37934, USA}

  \begin{abstract}
    Many Bayesian inference problems require exploring the posterior
    distribution of high-dimensional parameters that represent the
    discretization of an underlying function.
    This work introduces a family of Markov chain Monte Carlo (MCMC)
    samplers that can adapt to the particular structure of a posterior
    distribution over functions.
    Two distinct lines of research intersect in the methods developed
    here.
    First, we introduce a general class of operator-weighted
    proposal distributions that are well defined on function space,
    such that the performance of the resulting MCMC samplers is
    independent of the discretization of the function.
    Second, by exploiting local Hessian information and any associated
    low-dimensional structure in the change from prior to posterior
    distributions, we develop an inhomogeneous discretization scheme
    for the Langevin stochastic differential equation that yields
    operator-weighted proposals adapted to the non-Gaussian structure
    of the posterior.
    The resulting \textit{dimension-independent} and
    \textit{likelihood-informed} (DILI) MCMC samplers may be useful
    for a large class of high-dimensional problems where the target
    probability measure has a density with respect to a Gaussian
    reference measure.
    Two nonlinear inverse problems are used to demonstrate the
    efficiency of these DILI samplers: an elliptic PDE coefficient
    inverse problem and path reconstruction in a conditioned diffusion.

  \end{abstract}

  \begin{keyword}
	Markov chain Monte Carlo \sep likelihood-informed subspace \sep infinite-dimensional inverse problems \sep Langevin SDE \sep conditioned diffusion
  \end{keyword}
  
\end{frontmatter}

% !TEX root = dili_main.tex

\section{Introduction}
\label{sec:intro}
Many Bayesian inference problems require sampling a posterior distribution of high-dimensional parameters that represent the discretization of some underlying \textit{function}. Examples include inverse problems governed by partial differential equations (PDEs) \cite{Tarantola_2004, Kaipio_2005, Stuart_2010} and path reconstructions of stochastic differential equations \cite{BPRF_2006, HSV_2011}. Yet sampling the posterior with standard Markov chain Monte Carlo (MCMC) algorithms can become arbitrarily slow as the representation of the unknown function is refined; the convergence rates of such algorithms typically degrade with increasing parameter dimension \cite{RGG_1997, Roberts_2001, Mattingly_2012, PST_2012}. Here we present a \textit{dimension-independent} and \textit{likelihood-informed} MCMC sampling framework that enables efficient exploration of posterior distributions over functions.
Our framework builds on recent work in dimension-independent MCMC sampling by identifying changes in the local curvature of the log-posterior, relative to the log-prior, and using this information to design more effective proposal distributions. We cast these proposals within a general class of operator-weighted proposals that guarantee dimension independence while allowing flexibility in the design of structure-exploiting MCMC algorithms.

Previous work \cite{BRSV_2008, CRSW_2012} has shown that it is possible to overcome degeneration of the convergence rates of standard MCMC algorithms by formulating them in function space. In particular, \cite{BRSV_2008, CRSW_2012} consider a posterior that is absolutely continuous with respect to a Gaussian reference measure, typically chosen to be the prior. By deriving proposals that yield valid Markov transition kernels on a function space, these efforts yield algorithms with performance that is \textit{dimension independent}, i.e., robust under mesh refinement.
Recent work has begun to integrate this function space perspective with other state-of-the-art sampling approaches such as sequential Monte Carlo~\cite{Chopin_2002, SMC_2006} and particle MCMC~\cite{ADH_2010} (which were originally designed for finite-dimensional parameters) along with multi-level Monte Carlo~\cite{Giles_2008}. 
Examples include \cite{BKJ_2013}, which uses function space proposals inspired by those in \cite{BRSV_2008, CRSW_2012} as building blocks within a sequential Monte Carlo scheme, and \cite{HSS_2013, KST_2013}, which employ function space algorithms within a multi-level MCMC scheme for variance reduction. An important takeaway from these efforts is that although proposals that are well-defined on function space provide mixing rates that are independent of discretization, they do not guarantee good mixing in absolute terms; additional analysis and exploitation of the structure of the posterior is essential.

To improve MCMC mixing in finite-dimensional settings, another important line of research has focused on using information about the local geometry of the posterior distribution to scale and guide proposals.
Examples include the stochastic Newton method \cite{Martin_2012}, which uses low-rank approximations of the Hessian of the log-posterior to build Gaussian proposals, and Riemannian manifold MCMC \cite{Girolami_2011}, which uses the Fisher information matrix (and the prior precision matrix) to construct a local metric for both Langevin proposals and Hamiltonian dynamics. Implicit in the low-rank approximations of \cite{Martin_2012} is the idea that the Hessian of the log-likelihood can be compared with the prior covariance to identify directions in parameter space along which the posterior distribution differs most strongly from the prior.  This idea is formalized for linear inverse problems in~\cite{Linear_Redu_2014}; at its heart lies the solution of a generalized eigenproblem involving both operators. In linear-Gaussian problems, a transformation of the resulting dominant eigendirections reveals the directions of greatest \textit{variance reduction}, relative to the prior. Solutions of the eigenproblem can also be used to construct optimal approximations of the posterior covariance, in the form of low-rank negative semidefinite \textit{updates} of the prior covariance \cite{Linear_Redu_2014}. The form of these approximations is crucial, as it suggests the following tie to dimension-independent algorithms: The proposals used in \cite{BRSV_2008, CRSW_2012} rely on sampling from the Gaussian prior measure on an infinite-dimensional space. It is natural to alter these proposals only in a \textit{finite} number of dimensions---where the posterior differs most strongly from the prior---to achieve better mixing while preserving dimension independence.

A first attempt in this direction is the operator-weighted proposal of \cite{Proposal_Law_2012}, which employs a relaxed version of the log-likelihood Hessian, evaluated at a single point near the posterior mode, to rescale the preconditioned Crank-Nicolson proposal of \cite{CRSW_2012}. Of course, in inverse problems with nonlinear forward operators or non-Gaussian noise, the log-likelihood Hessian varies over the parameter space. Nonetheless, if local Hessians have low rank and dominant eigenspaces that are somewhat aligned, changes from the prior to the posterior will be confined to a relatively low-dimensional subspace of the parameter space. We will call this subspace the ``likelihood-informed'' subspace (LIS). \cite{Cui_LIS_2014} introduced the notion of the LIS and constructed it from the dominant eigenspace of the posterior expectation of a preconditioned log-likelihood Hessian. There, the LIS was used to \textit{approximate} the posterior distribution in finite-dimensional nonlinear inverse problems, by replacing the posterior in complementary directions with the prior. In inverse problems with limited observations and smoothing forward operators, the LIS is relatively low dimensional and such approximations can be quite accurate. 

We will use the notion of an LIS in the present work, but extend it to the infinite-dimensional setting and seek exact sampling; in this context, the LIS can be seen as means of convergence acceleration that captures key features of the posterior. The goal of this paper is to design MCMC algorithms that are dimension independent, that sample from the exact posterior distribution, and that exploit a posterior structure wherein departures from the prior---including non-Gaussianity---are concentrated on a finite number of directions. To realize these goals, we introduce a general class of proposal distributions that simultaneously guarantee dimension-independent MCMC sampling and allow the essential structure of the posterior (i.e., departures from the Gaussian prior) to be captured in a flexible manner, via both local gradient information and a set of global covariance-like (i.e., bounded and self-adjoint) operators. We then discuss particular instances of this class for which the operators are derived from the LIS. The proposals themselves can be obtained from an underlying Langevin SDE by applying an inhomogeneous time discretization scheme on the LIS and its complement. We then incorporate these proposals into MCMC either as Metropolis or Metropolis-within-Gibbs updates.

The rest of this paper is organized as follows. Section~\ref{sec:bkgd} reviews some theoretical background on the infinite-dimensional Bayesian formulation of inverse problems and function-space MCMC methods, as well as previous work aimed at improving the efficiency of these samplers.
Section~\ref{sec:lkd_informed} introduces a general class of operator-weighted proposals that enables the design of dimension-independent and likelihood-informed (DILI) samplers. We then discuss the construction of a global LIS and describe several different DILI MCMC samplers based on the LIS. We also recall a few non-DILI algorithms that will subsequently be used for benchmarking.
In Section~\ref{sec:elliptic}, an elliptic PDE inverse problem is used to evaluate the performance of the various samplers introduced in Section \ref{sec:lkd_informed}. In Section~\ref{sec:cd}, we apply the same samplers to the problem of path reconstruction in a conditioned diffusion and highlight key differences in performance. Section \ref{sec:conc} offers some brief discussion and concluding remarks.

% !TEX root = dili_main.tex

\section{Theoretical background}
\label{sec:bkgd}

In this section we recall the Bayesian formulation of inverse problems in the function space setting and provide some background on the relevant MCMC methods. We then review preconditioned Crank-Nicolson (pCN) proposals, which yield dimension-independent MCMC algorithms for posterior distributions over function space equipped with an appropriate Gaussian prior. We close this section with a discussion of how local geometry information can be used to inform MCMC proposals.

\subsection{Bayesian inference framework}
\label{sec:bayes}

Suppose that the parameter of interest is some function $u \in \hilbert$, where $\hilbert$ is a separable Hilbert space.
Suppose further that the prior distribution $\mu_0$ is such that $\mu_0(\hilbert)=1$. We assume that the observation $y$ is in $\dataspace$.
Let $\lkd(u;y)$ denote the likelihood function of $y$ given $u$, which is assumed to be bounded away from zero $\mu_0$-almost surely.
Let $\mu_y$ denote the posterior probability measure of the unknown $u$ conditioned on the observation $y$.
The posterior distribution on infinitesimal volume elements $du$ in $\hilbert$ is given by 
\begin{equation}
\mu_y(du) \propto \lkd(u;y) \mu_0(du).
\label{eq:posterior}
\end{equation}
Let $\forward : \banach \rightarrow \dataspace$ denote the forward operator, defined on 
some Banach space $\banach \subseteq \hilbert$ with $\mu_0(\banach)=1$, 
and assume the noise on the observation is additive and Gaussian with zero mean.
Then the observation model has the form
\[
y = \forward(u) + e, \quad e \sim \normal(0,\obscov).
\]
We define the data-misfit functional 
\begin{equation}
\potential(u; y) = \frac{1}{2}\left( \forward(u) - y \right)^\top \obscov^{-1} \left( \forward(u) - y \right).
\label{eq:loglike}
\end{equation}
The likelihood function can then be written as
\begin{equation}
\lkd(u;y) \propto \exp(-\potential(u; y)).
\label{eq:like}\end{equation}

The inner products on the data space $\dataspace$ and the parameter space $\hilbert$ are denoted $\langle \cdot \, , \cdot \rangle$ and $\langle \cdot \, , \cdot \rangle_\hilbert$, respectively, with associated norms denoted by $\| \cdot \|$ and $\| \cdot \|_\hilbert$.
For brevity, where misinterpretation is not possible, we will drop the subscripts $\hilbert$.
Throughout this paper we adopt the following assumptions about the forward operator:
\begin{assumption}
\label{assum1}
The forward operator $\forward: \banach \rightarrow \dataspace$ satisfies the following:
\begin{enumerate}
\item For all $\epsilon_1 > 0$, there exists a constant $K_1(\epsilon_1) > 0$ such that, for all $u \in \banach$, 
\[
\| \forward(u) \| \leq \exp\left( K_1(\epsilon_1) + \epsilon_1\|u\|_\hilbert^2 \right) .
\]
\item For all $\epsilon_2 > 0$ there exists a constant $K_2(\epsilon_2) > 0$ such that, for all $u_1, u_2 \in \banach$ 
with $\|u_1\|_\hilbert < \epsilon_2$ and $\|u_2\|_\hilbert < \epsilon_2$,
\[
\| \forward(u_1) - \forward(u_2) \| \leq K_2(\epsilon_2) \|u_1 - u_2\|_\hilbert.
\]
\item For all $u \in X$, there exists a bounded linear operator $\linear: \banach \rightarrow \dataspace$ such that for all 
$\delta u \in \banach$
\[
\lim_{\delta u \rightarrow 0} \frac{\|\forward(u+\delta u) - \forward(u) - \linear(u) \delta u\|}{\|\delta u\|_\hilbert} = 0.
\]
\end{enumerate}
\end{assumption}
Given observations $y$ such that $\|y\| < \infty$ and a forward operator that satisfies the first two conditions in Assumption \ref{assum1}, \cite{Stuart_2010} shows that the resulting data-misfit function is sufficiently bounded and locally Lipschitz, and that the posterior measure is dominated by the prior measure.
The third condition in Assumption \ref{assum1} states that the forward model is first order Fr\'{e}chet differentiable, and hence the Gauss-Newton approximation of the Hessian of the data-misfit functional is bounded.

We assume that the prior measure is Gaussian with mean $m_0 \in \hilbert$ and a covariance operator $\prcov$ on $\hilbert$ that is self-adjoint, positive definite, and trace-class, so that the prior provides a full probability measure on $\hilbert$. 
The covariance operator $\prcov$ also defines the inner product $\langle \cdot , \,  \cdot \rangle_{\prcov} = \langle \prcov^{-1/2} \, \cdot \, , \prcov^{-1/2} \, \cdot \rangle$ and the associated norm $\|\cdot\|_{\prcov} $ on the Hilbert space $\hilbert$.
It is convenient to define the covariance operator through its inverse $\prcov^{-1}$, referred to as the prior precision and often chosen to be a Laplace-like differential operator \cite{Stuart_2010}. 

We note that the posterior distribution $\mu_y$ does not admit a density with respect to Lebesgue measure.
Instead of a ``posterior density,'' we can define the limiting ratio of the probabilities of shrinking balls at two points $u_1$ and $u_2$ in $\im(\prcov^\half)$ by the exponential of the difference of the Onsager-Machlup functional (OMF),
\begin{equation}
\potential(u;y) + \|u - m_0\|_{\prcov}^2,
\label{eq:omf}
\end{equation}
evaluated at $u_1$ and $u_2$.
Then the maximizer of the probability over vanishing balls, the {\it maximum a posteriori} (MAP) estimator, is the minimizer of this functional,\footnote{The functional is extended to $\hilbert$ by prescribing it to take the value $+\infty$ outside of $\im(\prcov^\half)$.} 
as shown in \cite{DLSV_2013} for problems like those considered here.

\added{Here we assume that the data $y$ are finite dimensional. 
For suitable combinations of forward model and observational noise, this setting can be generalized to problems with infinite-dimensional data sets \cite{Stuart_2010, DLSV_2013}. 
}

\subsection{MCMC on function space}
\label{sec:mcmc}

MCMC methods, in particular the Metropolis-Hastings (MH) algorithm \cite{Metropolis, Hastings}, can be employed to sample the posterior distribution $\mu_y$ over $\hilbert$.
The MH algorithm defines a Markov chain by prescribing an accept-reject move based on a proposal distribution $q$. One step of the MH algorithm is given as follows:
\begin{definition}
[Metropolis-Hastings kernel] Given the current state $u_k$, a candidate state $u^{\prime}$ is drawn from the proposal distribution $q(u_k,\cdot)$.
Define the pair of measures
\begin{equation}
\begin{array}{rll}
\label{eq:nus}
\nu(du,du^{\prime}) &=& q(u,du^{\prime}) \mu(du) \\
\nu^\bot(du,du^{\prime}) &=& q(u^{\prime},du) \mu(du^{\prime}).
\end{array}
\end{equation}
Then the next state of the Markov chain is set to $u_{k+1} = u^{\prime}$ with probability
\begin{equation} 
\alpha(u_k,u^{\prime}) = {\rm min} 
\left \{1, 
\frac{d\nu^\bot}{d\nu}(u_k,u^{\prime})
 \right \} ; 
\label{acceptance} 
\end{equation}
otherwise, with probability $1-\alpha(u_k,u^{\prime})$, it is set to $u_{k+1}=u_k$.
\end{definition}

The MH algorithm requires the absolute continuity condition $\nu^\bot \ll \nu$ to define a valid acceptance probability and hence a valid transition kernel \cite{Tierney_1998}. We will refer to a MH algorithm as \textbf{well-defined} if this absolute continuity condition is satisfied. Most of the MCMC proposals used in current literature---e.g., vanilla random walks, Metropolis adjusted Langevin (MALA) \cite{Besag_1994}, simplified manifold MALA \cite{Girolami_2011}---violate this condition in the limit $\text{dim}(\hilbert) \rightarrow \infty$. For these proposals, the MH algorithm is defined only in finite dimensions \cite{CRSW_2012} and the convergence rate degenerates under mesh refinement. Extensive investigations have been carried out to study the rate of degeneration of various proposals and optimal scaling relative to this. See \cite{RGG_1997, Roberts_1998, Roberts_2001} and references therein.

For target probability measures over function space, the sequence of papers \cite{BRSV_2008, CRSW_2012, HSV_2012, Stuart_2010} provide a viable way of constructing well-defined MH algorithms using appropriate discretizations of Langevin SDEs. 
The key step in the construction of such algorithms is to ensure the absolute continuity condition
$\nu^\bot \ll \nu$. We will revisit these constructions in the next subsection.

\subsection{Preconditioned semi-implicit proposal}
\label{sec:pcn}

Following \cite{BRSV_2008, CRSW_2012, HSV_2012}, we recall the preconditioned Crank-Nicolson (pCN) proposal. Consider the Langevin SDE 
\begin{equation}
du = - \precondition \left( \prcov^{-1} (u-m_0) + \gamma \gradientu \right)d\tau + \sqrt{2\precondition} \white(\tau),
\label{eq:langevin}
\end{equation}
where $\precondition$ is a symmetric, positive-definite, and bounded operator and $\white(\tau)$ is space-time white noise over $\hilbert \times \R_+$.
For any such $\precondition$, the SDE \eqref{eq:langevin} has invariant distribution $\mu_y$ for $\gamma=1$ and invariant distribution $\mu_0$ for $\gamma=0$; see \cite{CRSW_2012} and references therein. 

Following \cite{BRSV_2008,CRSW_2012} we let $\precondition = \prcov$.  A first order semi-implicit discretization of \eqref{eq:langevin}, parameterized by $\vartheta \in (0, 1]$, defines a family of proposal distributions:
\begin{equation}
u^{\prime} = \frac{1 - (1-\vartheta) \dt}{1 + \vartheta \dt} (u-m_0) + m_0 - \frac{\gamma \dt}{1 + \vartheta \dt} \prcov \gradientu + \frac{\sqrt{2\dt}}{1 + \vartheta \dt} \prcov^{\half} \rand,
\label{eq:implicit}
\end{equation}
where $\rand \sim \normal(0, \cI) $, $\dt$ is the time step, and $\gamma = \{0, 1\}$ is a tuning parameter to switch between the full Langevin proposal ($\gamma = 1$) and the random walk (Ornstein-Uhlenbeck) proposal ($\gamma = 0$).
Simulating the Langevin SDE thus provides a method for approximately sampling from 
$\mu_y$ when $\gamma=1$ or from $\mu_0$ when $\gamma=0$.
Under certain technical conditions, Theorem 4.1 of \cite{BRSV_2008} shows that $\vartheta = 1/2$ (a Crank-Nicolson scheme) is the unique choice that delivers well-defined MCMC algorithms when ${\rm dim}(\hilbert) \rightarrow \infty$.
Putting $\vartheta = 1/2$ and letting $a = (2 - \dt)/(2 + \dt)$, \eqref{eq:implicit} can be rewritten as
\begin{equation}
\label{eq:pcn}
u^{\prime} = a (u - m_0) + m_0 -  \gamma (1-a) \prcov \gradientu + \sqrt{1 - a^2} \prcov^{\half} \rand,
\end{equation}
where it is required that $a \in (-1,1)$. If $\gamma=0$, the invariant distribution of \eqref{eq:pcn} remains $\mu_0$.  However, if $\gamma=1$, 
the invariant distribution of the discretized system is different from the invariant distribution of \eqref{eq:langevin}. 
To remove the bias, one can employ the Metropolis correction, as discussed by \cite{Besag_1994}.

The pCN proposal \eqref{eq:pcn} has the desired dimension independence property, in that the autocorrelation of the resulting samples does not increase as the discretization of $u$ is refined.
Since the scale of the posterior distribution necessarily tightens, relative to the prior, along directions in the parameter space that are informed by the data, maintaining a reasonable acceptance probability requires that the step size used in the proposal \eqref{eq:pcn} be dictated by these likelihood-informed directions. 
As shown in \cite{Proposal_Law_2012}, the Markov chain produced by MH with the standard pCN proposal then decorrelates more quickly in the parameter subspace that is data-informed than in the subspace that is dominated by the prior. The proposed moves in the prior-dominated directions are effectively too small or conservative, resulting in poor mixing.
The underlying issue is that the scale of the proposal \eqref{eq:pcn} is uniform in all directions with respect to the norm induced by the prior covariance, 
and hence it does not adapt to the posterior scaling induced by the likelihood function.

It is therefore desirable to design proposals that adapt to this anisotropic structure of the posterior while retaining dimension independence. A first example is the operator-weighted proposal of \cite{Proposal_Law_2012}, which specifically (though without loss of generality) targets a white-noise prior, $m_0=0$ and $\prcov=\cI$:
\begin{equation}
\label{eq:mpcn}
u^{\prime} = \cA u  + \left(\cI - \cA^2\right)^{\half} \rand.
\end{equation}
Some intuition for this proposal can be had by comparing to \eqref{eq:pcn} in the case $\gamma = 0$: the new proposal replaces the scalar $a$ with a symmetric operator $\cA$. 
This operator is constructed from a regularized approximation of the Hessian of the OMF \eqref{eq:omf}
at a single posterior sample, chosen close to or at the MAP estimate. 
The dimension independence property of the proposal \eqref{eq:mpcn} can be seen simply by noticing its reversibility with respect to the prior,
$$
q(u,du^{\prime}) \mu_0(du) = q(u^{\prime},du) \mu_0(du^{\prime}),
$$
which implies
$$
\nu(du,du^{\prime}) = \exp\{\potential(u;y) - \potential(u^{\prime};y)\} \nu^\bot(du,du^{\prime})
$$
and hence the absolute continuity condition $\nu^\bot \ll \nu$ (given that Conditions (1) and (2) of Assumption \ref{assum1} also hold).
In comparison with \eqref{eq:pcn}, the proposal \eqref{eq:mpcn} produces direction-dependent step sizes and, in particular, different step sizes for parameter directions that are prior-dominated than for directions that are influenced by the likelihood---insofar as this anisotropy is captured by $\cA$. The autocorrelation time 
of the resulting MH chain can thus be substantially smaller than that of a chain using the original pCN proposal \cite{Proposal_Law_2012}.

Despite these performance improvements, this simple operator-weighted proposal has several limitations, especially when compared to state-of-the-art MCMC approaches for finite-dimensional problems. First, we note that $\cA$ only uses information from a single point in the posterior, which may not be representative of posterior structure in a non-Gaussian problem. Also, \eqref{eq:mpcn} uses no gradient information and more broadly no local information, e.g., to describe the geometry of the target measure around each $u$. Also, the particular form of $\cA$ used above does not necessarily identify directions of greatest \textit{difference} between posterior and the prior or reference measure. (This point will be explained more fully in Section~\ref{sec:scale_lis}.) A more general class of operator-weighted proposals, designed in part to address these limitations, will be presented in Section~\ref{sec:lkd_informed}.

\subsection{Explicit proposals using local geometry}

Before describing our general class of operator-weighted proposals, we briefly recall the simplified manifold MALA \cite{Girolami_2011} and stochastic Newton \cite{Martin_2012} algorithms, which are related to an explicit discretization of the Langevin SDE \eqref{eq:langevin}. Both are well-established MCMC methods that use Hessian or Hessian-related information from the posterior distribution to achieve more efficient sampling. Simplified manifold MALA uses the proposal:
\begin{equation}
u^{\prime} = u - \precondition(u) \left( \gradientu + \prcov^{-1} u \right) \dt + \sqrt{2\dt} \precondition(u)^{\half} \rand,
\label{eq:smala}
\end{equation}
where the preconditioning operator $\precondition(u)$ 
is derived from the expected Fisher information.
Stochastic Newton prescribes $\dt = 1$ and discards the factor of $\sqrt{2}$ above, which gives the proposal 
\begin{equation}
u^{\prime} = u - \precondition(u) \left( \gradientu + \prcov^{-1} u \right) + \precondition(u)^{\half} \rand,
\label{eq:sn}
\end{equation}
where the preconditioning operator $\precondition(u)$ 
is the inverse of the posterior Hessian (or a low-rank approximation thereof; see \cite{Martin_2012, Linear_Redu_2014}) at the current sample $u$.
We note that simplified manifold MALA and stochastic Newton are closely related for a likelihood function with additive Gaussian noise, as the expected Fisher information is equivalent to the Gauss-Newton approximation of the data-misfit Hessian in this case. 
Even though operations using the local preconditioner can be computationally expensive, both methods provide significant reductions in autocorrelation time per MCMC sample over standard random walks and even non-preconditioned MALA algorithms. 

In a sense, the Hessian information---or more generally, the local Riemannian metric~\cite{Girolami_2011}---used by these proposals provides two useful insights into the structure of the posterior distribution. First is an approximation of the local geometry, e.g., relative scaling along different directions in the parameter space, curvature, etc. Second, however, is insight into how the posterior distribution locally \textit{differs from a base measure}, in particular, the prior distribution. When the Hessian of the data-misfit functional is compact, departures from the prior will be limited to a finite number of directions. One can explicitly identify these directions and partition the parameter space accordingly, into a finite-dimensional subspace and its complement. Of course, this particular partition is based only on local information, so one must consider methods for globalizing it---effectively incorporating the dependence of the Hessian on $u$. Given such an extension, this partitioning suggests a useful way of designing efficient MCMC proposals in infinite dimensions: use any effective MCMC proposal on a finite-dimensional subspace, and revert to a simpler function-space sampling framework on the infinite-dimensional complement of this subspace, thus yielding a well-defined MCMC algorithm in infinite dimensions and ensuring discretization-invariant mixing. The next section will explain how to realize this idea in the context of a more general operator-weighted MCMC.

%%%%%%%%%%%%%%%%%%%%%%%%%%%%%%%%%%%%%%%%%%%%%%%%%%%%%%%%%%%%%%%%%%%%%%

\section{Likelihood-informed MCMC}
\label{sec:lkd_informed}

\added{
In this section, we first introduce a general class of operator-weighted MCMC proposals that guarantee dimension-independent sampling; within this class, there is considerable flexibility to capture the structure of the posterior by choosing the associated operators appropriately.
We then take a detour to discuss the notion of a likelihood-informed subspace (LIS), which describes the parameter subspace where the posterior distribution departs most strongly from the prior.
Based on the LIS, we derive a low-rank posterior covariance approximation scheme and then use both of these building blocks to guide the inhomogeneous discretization of a preconditioned Langevin SDE.
This discretization naturally leads to the construction of various operator-weighted proposals that fall within our general class.
We provide several specific examples of such proposals and then describe an adaptive posterior sampling framework that combines LIS construction, low-rank posterior covariance approximation, and automatic proposal adaptation.  
}

% !TEX root = dili_main.tex

\subsection{Operator-weighted proposals}
\label{sec:op_weighted}
We consider a general class of operator-weighted proposals from $u$ to $u'$ given by
\begin{equation}
u' =  \prcov^{\half}\cA\prcov^{-\half} (u-u_{\rm ref}) + u_{\rm ref} - \prcov^{\half} \cG \prcov^{\half} \gradientu + \prcov^{\half}\cB \rand,
\label{eq:operator_weighted}
\end{equation}
where $\cA$, $\cB$, and $\cG$ are bounded self-adjoint operators on $\im(\prcov^{-\half})$,
$\rand \sim \normal(0, \cI)$, $u_{\rm ref} = \prmean + m_{\rm ref}$ with $m_{\rm ref} \in \im(\prcov^{\half})$, and $\prcov$ is the prior covariance operator defined in Section~\ref{sec:bayes}.  
Suppose further that these operators are defined through a complete orthonormal system $(\psi_i)$ in $\im(\prcov^{-\half})$ and sequences $(a_i)$, $(b_i)$, and $(g_i)$ of positive numbers such that
\[
\cA \psi_i = a_i \psi_i, \quad \cB \psi_i = b_i \psi_i, \quad \cG \psi_i = g_i \psi_i,\quad i \in \N .
\]
Using matrix notion, let $\basis = [\psi_1, \psi_2,\ldots]$ and $\basis^{\ast} = [\psi_1,\psi_2,\ldots]^{\ast}$, where we define 
\[
\basis^{\ast}v \equiv [\langle \psi_1,v \rangle, \langle \psi_2,v \rangle, \ldots]^{\ast}.
\]
Defining the diagonal operators 
$\cD_{\cA,ij} = \delta_{ij}a_i$, $\cD_{\cB,ij} = \delta_{ij}b_i$, and $\cD_{\cG,ij} = \delta_{ij}g_i$, 
we can represent the spectral decompositions of the operators above as
$\cA = \basis \cD_\cA \basis^{\ast}$, $\cB = \basis \cD_\cB \basis^{\ast}$, and $\cG = \basis \cD_\cG \basis^{\ast}$, respectively.
The following theorem establishes conditions on the operators $\cA$, $\cB$, and $\cG$ such that the proposal \eqref{eq:operator_weighted} yields a well-defined Metropolis-Hastings algorithm for probability measures over function space.

\begin{theorem} 
\label{theo:1}
Suppose the self-adjoint operators $\cA$, $\cB$, and $\cG$ are defined by a common set of eigenfunctions $\psi_i \in \im(\prcov^{-\half})$, $ i \in \mathbb{N}$, and sequences of eigenvalues $(a_i)$, $(b_i)$, and $(g_i)$, respectively.
Suppose further that the posterior measure $\mu_y$ is equivalent to the prior measure $\mu_0 = \normal(\prmean, \prcov)$. 
The proposal \eqref{eq:operator_weighted} delivers a well-defined MCMC algorithm if the following conditions hold:
\begin{enumerate}
\item All the eigenvalues $(a_i)$, $(b_i)$, and $(g_i)$ are real and bounded, i.e., $a_i, b_i, g_i \in \R$, and there exists some $C<\infty$ such that $|a_i|, |b_i|, |g_i| < C$, $\forall i \in \N$.
\item For any index $i \in \N$ such that $b_i = 0$, we have $a_i = 1$ and $g_i = 0$.
\item For any index $i \in \N$ such that $b_i \neq 0$ we have $|b_i| \geq c $ for some $c>0$.
\item The sequences of eigenvalues $(a_i)$ and $(b_i)$ satisfy 
\[ 
\sum_{i = 1}^\infty
\left( a_i^2 + b_i^2 - 1 \right)^2  < \infty. 
\]
\end{enumerate}
\end{theorem}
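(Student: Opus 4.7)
The plan is to verify the absolute continuity condition $\nu^\bot \ll \nu$ required for the MH rule~\eqref{acceptance} to be well defined. Since $\mu_y \sim \mu_0 = \normal(\prmean, \prcov)$ by hypothesis, it suffices to establish the analogous equivalence of the prior-based auxiliary measures $\eta(du,du^\prime) = q(u,du^\prime)\mu_0(du)$ and $\eta^\bot(du,du^\prime) = q(u^\prime,du)\mu_0(du^\prime)$; the full Radon--Nikodym derivative then picks up the additional factor $\exp\{\potential(u;y) - \potential(u^\prime;y)\}$, analogous to the situation of~\eqref{eq:mpcn}.

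First I would pass to whitened coordinates $w = \prcov^{-\half}(u - \prmean)$, under which $\mu_0$ becomes formal Gaussian white noise and the proposal~\eqref{eq:operator_weighted} reduces to
\begin{equation*}
w^\prime = \cA w + (\cI - \cA) w_{\rm ref} - \cG\,\prcov^{\half}\gradientu + \cB \rand, \qquad w_{\rm ref} = \prcov^{-\half} m_{\rm ref}.
\end{equation*}
Because $\cA$, $\cB$, $\cG$ share the eigenbasis $(\psi_i)$, the dynamics decouple coordinate-wise: $w_i^\prime = a_i w_i + (1-a_i)w_{{\rm ref},i} - g_i (\prcov^{\half}\gradientu)_i + b_i \rand_i$. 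This is the key reduction that turns the infinite-dimensional question into a product-measure problem modulo the $u$-dependent drift.

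Next I would partition $\N = I_0 \sqcup I_+$ with $I_0 = \{i : b_i = 0\}$. Condition~2 forces $a_i = 1$ and $g_i = 0$ on $I_0$, so $w_i^\prime = w_i$; the projections of $\eta$ and $\eta^\bot$ onto these coordinates both concentrate on the diagonal with the same standard-Gaussian marginal and therefore coincide. On $I_+$, condition~3 gives a uniform lower bound $|b_i| \geq c > 0$, so every one-dimensional transition is non-degenerate Gaussian. Momentarily setting the drift to zero, the joint law of $(w_i,w_i^\prime)$ under $\eta$ is bivariate Gaussian with mean $(0,(1-a_i)w_{{\rm ref},i})$, marginal variances $1$ and $a_i^2 + b_i^2$, and cross-covariance $a_i$; under $\eta^\bot$ the two diagonal entries of the covariance are swapped and the mean becomes $((1-a_i)w_{{\rm ref},i},0)$. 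Both mean shifts are square-summable because $m_{\rm ref} \in \im(\prcov^{\half})$ and condition~1 bounds the $a_i$. Feldman--H\'{a}jek applied to the product Gaussian on $\bigoplus_{i \in I_+}\R^2$ gives equivalence iff the mean shift lies in the Cameron--Martin space and the normalised covariance difference is Hilbert--Schmidt; a routine $2\times 2$ computation shows that the squared Frobenius norm of the latter is bounded coordinate-wise by a constant multiple of $(a_i^2 + b_i^2 - 1)^2$, so the summability in condition~4 is exactly what is needed.

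Finally I would reinstate the gradient drift. By Assumption~\ref{assum1}(3), $\prcov^{\half}\gradientu \in \hilbert$ pointwise in $u$; conditions~1 and~3 together give $\sum_{i \in I_+} g_i^2(\prcov^{\half}\gradientu)_i^2 / b_i^2 < \infty$, so $\cB^{-1}\cG\,\prcov^{\half}\gradientu$ belongs to the Cameron--Martin space of $\cB\rand$. Disintegrating $\eta$ by conditioning on $u$ turns the drift into a state-dependent mean shift of the Gaussian proposal kernel $q(u,\cdot)$, and a Cameron--Martin translation yields a strictly positive density between the drifted and undrifted conditional laws. Recombining with the prior marginal gives equivalence of $\eta$ and $\eta^\bot$, hence of $\nu$ and $\nu^\bot$. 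I expect the main obstacle to be precisely this last coupling step: the state-dependent drift destroys the literal product structure, so a direct Kakutani argument is not available, and one must compare $\eta$ to its drift-free counterpart via Cameron--Martin before invoking Feldman--H\'{a}jek on the product Gaussian.
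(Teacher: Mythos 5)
Your proposal is correct and follows essentially the same route as the paper's proof: reduce to the prior-based auxiliary measures using $\mu_y \sim \mu_0$, strip off the gradient drift by a Cameron--Martin equivalence of the conditional Gaussian kernels, split coordinates according to $b_i = 0$ versus $b_i \neq 0$, and apply Feldman--H\'{a}jek to the joint Gaussian, where the Hilbert--Schmidt condition reduces exactly to $\sum_i (a_i^2 + b_i^2 - 1)^2 < \infty$. The only differences are presentational: you treat the drift at the end rather than the outset and verify the Hilbert--Schmidt bound blockwise on $2\times 2$ coordinates, whereas the paper removes the drift first and computes $\mathrm{trace}(\cT^{\ast}\cT)$ for the full block operator.
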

\begin{proof}
See \ref{sec:proof:t1} for the detailed proof.
\end{proof}
%\\
We note that Condition (2) of Theorem \ref{theo:1} is designed for situations where the parameter $u$ is partially updated, conditioned on those components corresponding to $b_i = 0$.
The other conditions in Theorem \ref{theo:1} ensure that the operators $\cA$, $\cB$, and $\cG$ are bounded, and that the measures $\nu$ and $\nu^\bot$ in \eqref{eq:nus} are equivalent.

To simplify subsequent derivations of acceptance probabilities and various examples of operator-weighted proposals, we now define the following useful transformations.
\begin{definition}[Transformation that diagonalizes the prior covariance operator]
\label{def:trans_v}
\begin{equation}
v = \prcov^{-\half} (u - u_{\rm ref}),
\label{eq:trans_v}
\end{equation} 
where $u_{\rm ref} = \prmean + m_{\rm ref}$ with $m_{\rm ref} \in \im(\prcov^{\half})$.
\end{definition}
\begin{definition}[Transformation that diagonalizes operators $\cA$, $\cB$, and $\cG$]
\label{def:trans_w}
\begin{equation}
w = \basis^{\ast} v.
\label{eq:trans_w}
\end{equation}
\end{definition}

By applying the transformation \eqref{eq:trans_v} to $u$, which has prior measure $\mu_0 = \normal(\prmean,\prcov)$, we obtain a prior measure on $v$ of the form $\tilde{\mu}^v_0 \equiv \normal(-v_{\rm ref},\cI)$, where $v_{\rm ref} = \prcov^{-\half}m_{\rm ref}$.
We can then define a reference measure $\mu^v_0=\normal(0,\cI)$ associated with the parameter $v$. 
Since we require $m_{\rm ref} \in \im(\prcov^{\half})$, by the Feldman-Hajek theorem (see \ref{sec:FH_theorem}) we have
$\tilde{\mu}^v_0 \ll \mu^v_0$ with 
\[
\frac{d\tilde{\mu}^v_0}{d\mu^v_0}(v) = 
\exp\left(- \frac12 \|v_{\rm ref}\|^2 - \langle v_{\rm ref} , v\rangle \right).
\]
The potential function $\potential$ has  input $u = \prcov^{\half} v + u_{\rm ref}$, which is by definition almost surely in $\hilbert$.
Therefore, $\mu_y^v \ll \mu^v_0$ by transitivity of absolute continuity. 
And thus the posterior measure defined on the parameter $v$ has the form: 
\begin{equation}
\frac{d \mu_y}{d \mu_0}(v) \propto \exp\left(-\potential(v ; y) - \frac12 \|v_{\rm ref}\|^2 - \langle v_{\rm ref}, v \rangle \right),
\label{eq:tranpost}
\end{equation}
where $\potential(v;y) = \potential(\prcov^{\half} v + u_{\rm ref} ; y)$. 

The proposal distribution \eqref{eq:operator_weighted} can then be rewritten as
\begin{equation}
v' =  \cA v - \cG \gradientv + \cB \rand ,
\label{eq:operator_weighted_v}
\end{equation}
where $\gradientv = \prcov^{\half} \gradientu$.
Applying the unitary transformation \eqref{eq:trans_w}, we can simultaneously diagonalize the operators $\cA$, $\cB$, and $\cG$. 
The resulting proposal distribution on $w$ is then
\begin{equation}
w' =  \cD_\cA w - \cD_\cG \gradientw + \cD_\cB \rand ,
\label{eq:operator_weighted_w}
\end{equation}
where we again use the degenerate notation $\gradientw = \basis^{\ast} \gradientv$, with $v = \basis w$.

\begin{remark}
Covariance operators in infinite dimensions are typically taken to be trace-class on $\hilbert$, so that draws have finite $\hilbert$-norm. For the transformed parameters $v$ and $w$, the associated proposals use a Gaussian measure with identity covariance operator.
Even though the identity operator is not trace-class on $\hilbert$ and hence draws from this Gaussian measure
do not have finite $\hilbert$ norm, they are still square-integrable in the weighted space $\im(\prcov^{-\half})$.
Furthermore, \eqref{eq:operator_weighted_v} or \eqref{eq:operator_weighted_w} still yield a well-defined function space proposal for the parameter $u$ after the inverse transformation is applied.
\end{remark}

For brevity of notation, the rest of this paper will define our proposals using the transformed parameter $v$. 
Dealing explictly with the transformed parameter $v$ also allows for a more computationally efficient implementation of the operator-weighted proposal in finite dimensions.
This is because each update only requires applying the operator $\prcov^{\half}$ once to transform the sample $v$ back to $u$ for evaluating the forward model, and another time to compute $\gradientv$. 
In comparison, the operator-weighted proposal defined for parameter $u$ requires applying the operator $\prcov^{\half}$ four times and applying the operator $\prcov^{-\half}$ once for each update.

The following corollary gives the acceptance probability of proposal \eqref{eq:operator_weighted}.
\begin{corollary}
\label{coro:1}
Consider the proposal
\[
v' = \cA v - \cG \gradientv + \cB \rand,
\]
where $\cA$, $\cB$, and $\cG$ satisfy the conditions of Theorem \ref{theo:1}.
The acceptance probability of this proposal is 
\[
\alpha(v, v') = \min \left\{ 1, \exp\left( \rho(v',v) - \rho(v,v') \right) \right\},
\]
where
\begin{eqnarray*}
\rho(v, v') & = & - \potential(v; y) - \left\langle v_{\rm ref}, v \right\rangle 
- \half \left\langle v, {\cB_0}^{-2}\left({\cA_0}^2 + {\cB_0}^2 - \cI_0 \right) v  \right\rangle \\
&& - \left \langle {\cB_0}^{-1} \cG_0  \gradientv , {\cB_0}^{-1} \left( \cI_0 v' - \cA_0 v \right) \right \rangle - \frac{1}{2} \left \| {\cB_0}^{-1} \cG_0 \gradientv \right \|^2  ,
\end{eqnarray*}
and where ${\cB}_0$, $\cA_0$, $\cI_0$, $\cG_0$ denote the respective operators projected onto $\{\psi_i\}_{b_i \neq 0}$.
\end{corollary}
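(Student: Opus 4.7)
The plan is to compute the Radon--Nikodym derivative $d\nu^\bot/d\nu$ directly in coordinates where the proposal decouples. First I would pass to $v = \prcov^{-\half}(u-u_{\rm ref})$, so that by \eqref{eq:tranpost} the posterior has density proportional to $\exp(-\potential(v;y) - \frac{1}{2}\|v_{\rm ref}\|^2 - \langle v_{\rm ref},v\rangle)$ with respect to the reference Gaussian $\mu_0^v = \normal(0,\cI)$, and then to $w = \basis^{\ast} v$, which simultaneously diagonalises $\cA$, $\cB$, and $\cG$ (Definitions~\ref{def:trans_v}--\ref{def:trans_w}) and leaves the reference measure invariant. In these coordinates the proposal \eqref{eq:operator_weighted_w} becomes the componentwise update $w_i' = a_i w_i - g_i (\gradientw)_i + b_i \rand_i$ with the $\rand_i$ independent standard normals, so both $\nu$ and $\nu^\bot$ factorise.

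Condition (2) of Theorem~\ref{theo:1} forces $a_i = 1$ and $g_i = 0$ on the indices $b_i = 0$, so the kernel fixes those coordinates pointwise and they contribute identical factors to $\nu$ and $\nu^\bot$ that cancel in the derivative; this is precisely what reduces the answer to the projected operators $\cA_0, \cB_0, \cG_0, \cI_0$ on $\{\psi_i : b_i \neq 0\}$. On the active subspace, conditions (1) and (3) make each one-dimensional kernel an honest Gaussian density,
\[
q(w,dw') \;\propto\; \prod_{b_i\neq 0} \frac{1}{b_i}\exp\!\left(-\frac{(w_i' - a_i w_i + g_i (\gradientw)_i)^2}{2 b_i^2}\right)dw_i',
\]
so I would apply the Metropolis--Hastings formula \eqref{acceptance} and write $\log(d\nu^\bot/d\nu)$ as the sum of three pieces: the posterior log-density ratio from \eqref{eq:tranpost}, the standard Gaussian reference ratio $-\tfrac{1}{2}(\|w'\|^2 - \|w\|^2)$ on the active coordinates, and the proposal ratio coming from the Gaussian densities above.

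Expanding the squares $(w_i' - a_i w_i + g_i(\gradientw)_i)^2$ and subtracting the reverse, the purely quadratic pieces combine with the $-\tfrac{1}{2}\|w\|^2$ term to give coefficient $-(a_i^2 + b_i^2 - 1)/(2 b_i^2)$, which re-assembles into the operator form $-\tfrac{1}{2}\langle v, \cB_0^{-2}(\cA_0^2 + \cB_0^2 - \cI_0) v\rangle$ appearing in $\rho$; the cross and quadratic terms involving $(\gradientw)_i$ collect into $-\langle \cB_0^{-1}\cG_0\gradientv, \cB_0^{-1}(\cI_0 v' - \cA_0 v)\rangle - \tfrac{1}{2}\|\cB_0^{-1}\cG_0\gradientv\|^2$, and the rest is $-\potential(v;y) - \langle v_{\rm ref},v\rangle$. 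Antisymmetrising in $(v,v')$ yields the stated form $\rho(v',v) - \rho(v,v')$. The main obstacle is showing that each resulting term converges as $\dim \hilbert \to \infty$: condition (4) of Theorem~\ref{theo:1} is exactly what makes $\cB_0^{-2}(\cA_0^2 + \cB_0^2 - \cI_0)$ Hilbert--Schmidt, so its quadratic form is finite $\mu_0$-a.s.; conditions (1) and (3) make $\cB_0^{-1}\cA_0$ and $\cB_0^{-1}\cG_0$ bounded, so the gradient terms are finite whenever $\gradientv \in \im(\prcov^{\half})$ under Assumption~\ref{assum1}. These are the same estimates that underlie Theorem~\ref{theo:1}, so the computation and its convergence rest on a common set of bounds.
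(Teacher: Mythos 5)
Your proof is correct and follows essentially the same route as the paper's: diagonalize via the transformations of Definitions~\ref{def:trans_v}--\ref{def:trans_w}, discard the $b_i=0$ coordinates using condition (2) of Theorem~\ref{theo:1}, and assemble the acceptance ratio from the posterior density \eqref{eq:tranpost}, the Gaussian reference, and the Gaussian proposal kernels, with condition (4) together with $|b_i|\ge c$ supplying the summability of the resulting quadratic forms. The only real difference is bookkeeping: the paper factors the ratio through the driftless reference pair $\tilde{\nu},\tilde{\nu}^\bot$ (whose equivalence was already established in the proof of Theorem~\ref{theo:1}) and invokes the Cameron--Martin theorem for the $\cG\,\gradientv$ drift correction, whereas you expand explicit one-dimensional transition densities --- which is fine provided the individually divergent $\pm\tfrac12\|w\|^2$ reference pieces are grouped with the proposal quadratics into the coefficient $(a_i^2+b_i^2-1)/b_i^2$ \emph{before} passing to the infinite-dimensional limit, as you correctly do at the end.
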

\begin{proof}
The detailed proof is given in \ref{sec:proof:c1}.  
\end{proof}

Theorem \ref{theo:1} and Corollary \ref{coro:1} provide a wide range of possibilities for constructing valid 
operator-weighted proposals for infinite-dimensional problems.
As described in Section~\ref{sec:bkgd}, one important advantage of operator-weighted proposals over simple pCN proposals is the ability to prescribe different step sizes in different directions of the parameter space. 
By utilizing information from the Hessian of the data-misfit functional and the prior covariance, one can construct operators $\cA$, $\cB$, and $\cG$ that adapt to the structure of the posterior.
%

% !TEX root = dili_main.tex

\subsection{Likelihood-informed subspace}
\label{sec:scale_lis}

\added{
The first step in constructing the operators appearing in proposal distribution \eqref{eq:operator_weighted} is to identify a suitable set of basis functions $\basis$. We will choose these basis functions in order to expose a posterior structure wherein departures from the prior may be limited to a low-dimensional subspace of the parameter space. This structure is a consequence of several factors typical of inverse problems and other distributed-parameter inference configurations:  the smoothing action of the forward operator, the limited accuracy or number of observations, and smoothing from the prior.
The parameter subspace capturing this prior-to-posterior update will be called the {\it likelihood-informed subspace} (LIS). 
In designing function-space MCMC proposals, identification of the LIS will provide a useful way to adapt to posterior structure through the construction of the operators $\cA$, $\cB$, and $\cG$ in \eqref{eq:operator_weighted}.

For finite-dimensional inverse problems, the LIS can be derived from the dominant eigenvectors of the prior-preconditioned Hessian of the data misfit function, which is a finite dimensional version of the potential $\potential(u; y)$ in \eqref{eq:loglike}.
This construction was used in \cite{Flath_etal_2011} to build low-rank update approximations to the posterior covariance matrix for large-scale linear inverse problems. 
It has also been generalized to inverse problems with infinite-dimensional parameters, for the purpose of constructing a Laplace approximation of the posterior measure \cite{Bui_etal_2012, Bui_InfLinear_2013}. \cite{Linear_Redu_2014} proved the optimality of this approximation in the linear-Gaussian case---in the sense of minimizing the F\"{o}rstner-Moonen \cite{Forstner} distance from the \textit{exact} posterior covariance matrix over the class of positive definite matrices that are rank-$r$ negative semidefinite updates of the prior covariance, for any given $r$. These optimality results also extend to optimality statements between Gaussian distributions~\cite{Linear_Redu_2014}. 

The investigations mentioned above compute the LIS {\it locally}, through the eigendecomposition of the prior-preconditioned Hessian at a given parameter value $u$. 
When the forward model is nonlinear, however, the Hessian of the potential varies over the parameter space and hence a local LIS may not be sufficient to capture the entire prior-to-posterior update. 
One approach for dealing with a non-constant Hessian,  introduced in \cite{Cui_LIS_2014}, is to ``globalize'' the LIS by combining local LIS information from many points in the posterior.
\cite{Cui_LIS_2014}  uses this global LIS to approximate a non-Gaussian posterior in finite dimensions, where the posterior distribution in the complement of the global LIS is taken to be equal to the prior. In other words, the approximation decomposes the posterior into a finite subspace of likelihood-informed directions and a \textit{complementary subspace} (CS) of prior-dominated directions, with mutually independent distributions in each subspace.
Here we will extend the notion of a global LIS to the infinite-dimensional setting, where the CS becomes infinite-dimensional. 
Then we will incorporate the global LIS into the operator-weighted MCMC presented in Section~\ref{sec:op_weighted}, in order to achieve exact posterior sampling rather than posterior approximation.  

In the remainder of Section~\ref{sec:scale_lis}, we review the computation of the local LIS and then discuss the globalization of the LIS through the estimation of posterior expectations.
Based on the global LIS, we will also present a structure-exploiting posterior covariance approximation for non-Gaussian posteriors.
This covariance approximation naturally integrates the global LIS with the second-order geometry of the posterior, and will be used in later subsections to design operator-weighted proposals. 
We conclude this subsection by discussing iterative construction procedures for the global LIS and the posterior covariance approximation. 
}

\subsubsection{Local likelihood-informed subspace}
\label{sec:scale}

Let the forward model be Fr{\'e}chet differentiable. Then the linearization of the forward model at a given parameter $u$, $\linear(u) = \nabla_u \forward(u)$, yields the local sensitivity of the parameter-to-data map.
\added{The Gauss-Newton approximation of the Hessian of the data-misfit functional \eqref{eq:loglike} at $u$ (hereafter referred to as the GNH) then becomes}
\begin{equation}
\hessian(u) = \linear^\ast (u) \obscov^{-1} \,  \linear(u),
\label{eq:gn_hessian}
\end{equation}
and can be used to construct a local Gaussian approximation $\normal(m(u), \pocov(u))$ of the posterior measure, where  
\begin{equation}
\pocov^{-1}  (u)= \hessian(u) + \prcov^{-1} \quad {\rm and} \quad
m(u) = u - \pocov(u) \gradientu.  
\label{eq:pos_cov}
\end{equation}
Along a function $\varphi \in \im(\prcov^\half)$, consider the local Rayleigh ratio
\begin{equation}
\ray(\varphi;u) = \frac{\langle \varphi, \hessian(u) \varphi\rangle}{ \langle \varphi,  \prcov^{-1} \varphi \rangle} .
\label{eq:ray_quo_u}
\end{equation}
This ratio quantifies how strongly the likelihood constrains variation in the $\varphi$ direction relative to the prior. When this ratio is large, the likelihood limits variation in the $\varphi$ direction more strongly than the prior, and vice-versa. 
We note that similarly to the Onsager-Machlup functional, \eqref{eq:ray_quo_u} can be extended to $\hilbert$ by setting $R(\varphi;u)=0$ outside $\im(\prcov^{\half})$.  We will in particular be concerned with $\varphi$ such that $\|\prcov^{-\half}\varphi\|=1$.

Applying the change of variables $\phi = \prcov^{-\half}\varphi \in \hilbert$, the Rayleigh ratio \eqref{eq:ray_quo_u} becomes
\begin{equation}
\widetilde{\ray}(\varphi;u) = \ray(\phi ;u) = \frac{ \left \langle \prcov^{\half} \phi, \hessian(u) \prcov^{\half} \phi \right \rangle}{ \left \langle \phi, \phi \right \rangle }.
\label{eq:ray_quo_v}
\end{equation}
This naturally leads to considering the eigendecomposition of the prior-preconditioned GNH (ppGNH) 
\begin{equation}
\left(\prcov^{\half} \hessian(u) \prcov^{\half} \right) \phi_i = \lambda_i \phi_i,
\label{eq:eig}
\end{equation}
which is positive semidefinite by construction. 
The dominant eigenfunctions of the ppGNH are local maximizers of the Rayleigh quotient \eqref{eq:ray_quo_v}. Indeed, for the basis function $\varphi_i = \prcov^\half \phi_i \in \im(\prcov^\half)$, the Rayleigh quotient has value $\lambda_i$, i.e., 
\[
\ray(\varphi_i ;u) = \lambda_i.
\]
Thus, the ppGNH captures the (local) balance of likelihood and prior information described above: the largest eigenvalues of \eqref{eq:eig} correspond to directions along which the likelihood dominates the prior, and the smallest eigenvalues correspond to directions along which the posterior is determined almost entirely by the prior. 
The basis functions $\{\varphi_1, \ldots, \varphi_l\}$ corresponding to the $l$ leading eigenvalues of \eqref{eq:eig}, such that $\lambda_1 \geq \lambda_2 \geq \ldots \geq \lambda_l \geq \eigentrunc_{\rm l} > 0$, span the \textit{local} likelihood-informed subspace (LIS).
The resulting approximation of the local posterior covariance $\pocov(u)$ has the form
\begin{equation}
\pocov(u) \approx \pocov^{(l)} =  \prcov - \sum_{i = 1}^{l} \frac{\lambda_i}{\lambda_i+1}\varphi_i^{} \varphi_i^\ast \, ,
\label{eq:opt_linear}
\end{equation}
where $\varphi_i$ and $\lambda_i$ depend on the point $u$. Note that a Gaussian measure with this covariance and with mean $m(u)$ \eqref{eq:pos_cov} still endows the Hilbert space $\hilbert$ with full measure. In a finite-dimensional setting, this covariance corresponds precisely to the approximation shown to be optimal in \cite{Linear_Redu_2014}. Note also that the number of nonzero eigenvalues in \eqref{eq:eig} is most $N_y$, and thus the local LIS is indeed low-dimensional compared to the infinite-dimensional parameter space. If the nonzero eigenvalues $\lambda_i$ decay quickly, then one can reduce the dimension of the local LIS even further, with an appropriate choice of truncation threshold $\eigentrunc_{\rm l} > 0$.

\begin{remark}
\label{rem:whitening}
The change of variables $\phi = \prcov^{-\half} \varphi$ has the same role as the whitening transformation in Definition \ref{def:trans_v}.
For the parameter $v = \prcov^{\half}(u - u_{\rm ref})$, the GNH w.r.t.\ $v$ is
\[
\hessian(v) = \prcov^{\half} \hessian(u) \prcov^{\half},
\]
which is identical to the ppGNH in \eqref{eq:eig}.
The associated local Rayleigh quotient for parameter $v$ has the form
\begin{equation}
\widetilde{\ray}(\phi ;v) = \frac{\langle \phi, \hessian(v) \phi \rangle}{ \langle \phi, \phi \rangle }.
\label{eq:ray_quo_local}
\end{equation}
In this case, the orthonormal basis $\{ \phi_1, \ldots, \phi_l \}$ forms the local LIS for the parameter $v$.
Henceforth, for brevity and to maintain notation consistent with our exposition of operator-weighted proposals above, we will discuss the global LIS and the associated posterior covariance approximation in terms of the transformed parameter $v$.
\end{remark}

\subsubsection{Global likelihood-informed subspace}
\label{sec:lis}

To extend the pointwise criterion \eqref{eq:ray_quo_local} into a global criterion for identifying likelihood-informed directions for nonlinear problems, we consider the posterior expectation of the local Rayleigh quotient,
\[
\expect\left[\widetilde{\ray}(\phi; v)\right] = \frac{\langle \phi, S \phi\rangle}{ \langle \phi, \phi \rangle},
\]
where $S$ is the expectation of the local GNH (working in whitened coordinates):
\begin{equation}
\cS = \int_{\im(\prcov^{-\half})} \hessian(v) \mu_y(dv).
\label{eq:expected_hessian}
\end{equation}
\added{As in the local case, it is natural to then construct a global LIS from the eigendecomposition
\begin{equation}
\cS \,\theta_j = \rho_j \theta_j.
\label{eq:glis}
\end{equation}
The eigenfunctions corresponding to the $r$ leading eigenvalues $\rho_1 \geq \rho_2 \geq \ldots \geq \rho_{r} \geq \eigentrunc_{\rm g} > 0$, for some threshold $\eigentrunc_{\rm g} > 0$, form a basis $\Theta_r = [\theta_1, \ldots, \theta_r]$ of the global LIS (again in the whitened coordinates).
We typically choose a truncation threshold $\eigentrunc_{\rm g} \approx 0.1$, thus retaining parameter directions where the ``impact'' of likelihood is at least one tenth that of prior. 
The expected GNH can be approximated using a Monte Carlo estimator based on samples adaptively drawn from the posterior. This adaptive procedure is described in Section~\ref{sec:adaptsamp}.
}

\begin{remark}
Since we assume that the Fr{\'e}chet derivative of the forward operator exists, the GNH operator is defined on function space. We take the approach of differentiating and then discretizing, so that dimension independence is conserved assuming a convergent discretization of the linearization.  
\end{remark}

\added{
\subsubsection{Low-rank posterior covariance approximation}
\label{sec:cov}

The posterior covariance describes important aspects of the global structure of the posterior (e.g., its scale and orientation) and thus is often used in constructing efficient MCMC samplers for the finite dimensional setting---for example, the adaptive Metropolis \cite{Haario_2001} and adaptive Metropolis-adjusted Langevin \cite{Atchade_2006} methods.
When the posterior covariance matrix arises from the discretization of an infinite-dimensional posterior, however, computing, storing, and factorizing this matrix---all standard operations required by most MCMC methods---are not feasible.
Suppose that the discretized parameter space has dimension $N$.  Then estimating the posterior covariance from samples (without assuming any special structure) requires at least $O (N )$ posterior samples; the storage cost is $O \! \left (N^2 \right )$; and the computational cost of a covariance factorization is $O \! \left (N^3 \right )$. 
Here, we will instead use the global LIS to derive a covariance approximation scheme for which the costs of estimating, storing, and factorizing the posterior covariance all scale linearly with the discretization dimension $N$. 

Since the global LIS captures those parameter directions where the posterior differs most strongly from the prior, we can approximate the posterior measure by projecting the argument of the likelihood function onto the finite-dimensional global LIS spanned by $\Theta_r$.
Given the rank-$r$ orthogonal projection operator $\Pi_r^{} = \Theta_r^{} \Theta_r^\ast$, the approximation takes the form 
\begin{equation}
\mu_y(dv) \approx  \tilde{\mu}_y(dv) \propto \lkd(\Pi_r v; y) \mu_0(dv).
\label{eq:approx_post}
\end{equation} 
The orthogonal projector $\Pi_r$ decomposes the parameter as
\[
v = \Pi_r v + \left(\cI - \Pi_r \right)v,
\]
where $\Pi_r v$ and $\left(\cI - \Pi_r \right)v$ are projections of the parameter $v$ onto the low-dimensional LIS and the infinite-dimensional CS, respectively. 
Since the prior measure $\mu_0(dv)$ is Gaussian with zero mean and identity covariance, $\Pi_r v$ and $\left(\cI - \Pi_r \right)v$ are independent under the prior. 
Furthermore, the argument of the likelihood function in \eqref{eq:approx_post} depends only on the LIS-projected parameter $\Pi_r v$. 
Therefore, $\Pi_r v$ and $\left(\cI - \Pi_r \right)v$ are also independent under the approximated posterior $\tilde{\mu}_y(dv)$.
This independence and the LIS projection lead to an approximate posterior covariance (for the parameter $v$) that takes the form 
\[
{\rm Cov}_{\mu_y} [ v ] \approx \Sigma := {\rm Cov}_{\mu_y} [ \Pi_r v ] + {\rm Cov}_{\mu_0} [ \left(\cI - \Pi_r \right) v ].
\]
This covariance $\Sigma$ is the sum of the marginal posterior covariance within the LIS,
\[
{\rm Cov}_{\mu_y} [ \Pi_r v ] = \Theta_r^{} \Sigma_r^{} \Theta_r^\ast,
\]
where the $r\times r$ matrix $\Sigma_r$ is the posterior covariance projected onto the LIS basis $\Theta_r$, and the marginal prior covariance within the CS,
\[
{\rm Cov}_{\mu_0} [ \left(\cI - \Pi_r \right) v ] = \cI - \Theta_r^{} \Theta_r^\ast.
\]
Introducing the eigendecomposition $\Sigma_r^{} = \cW_r^{} \cD_r^{} \cW_r^\ast$ and reweighing the basis $\Theta_r$ using the orthogonal matrix $\cW_r$,
\begin{equation}
\basis_r = \Theta_r \cW_r, 
\label{eq:reweigh_basis}
\end{equation}
we can diagonalize the covariance approximation and write it as a low-rank update of the identity:
\begin{eqnarray}
\Sigma & = & \Theta_r^{} \Sigma_r^{} \Theta_r^\ast + ( \cI - \Theta_r^{} \Theta_r^\ast ) \nonumber \\
& = & \basis_r^{} \left( \cD_r^{} - \cI_r^{} \right) \basis_r^\ast + \cI \, . 
\label{eq:low_rank_cov}
\end{eqnarray}
where $\cI_r$ denotes the $r\times r$ identity matrix. 
The corresponding approximate posterior covariance for the parameter $u$ is $\prcov^{\half} \,\Sigma \, \prcov^{\half}$.
Note that $\Pi_r^{} =\Theta_r^{} \Theta_r^\ast=\basis_r ^{} \basis_r^\ast$.

Given a basis for the LIS, the number of samples required to estimate the low-rank approximation \eqref{eq:low_rank_cov} now depends on the rank of the LIS rather than on the discretized parameter dimension $N$.
Furthermore, the cost of storing this approximated covariance is dominated by the storage of the LIS basis, which is proportional to the discretization dimension $N$.
The  cost of computing the eigendecomposition of $\Sigma_r$ is of order $r^3$, which is asymptotically negligible compared with the discretization dimension.
Similarly, the factorization of the diagonalized posterior covariance \eqref{eq:low_rank_cov} only involves the $r\times r$ diagonal matrix $\cD_r$, and hence these operations are also asymptotically negligible.
The dominant computational cost here is reweighing the LIS basis as in \eqref{eq:reweigh_basis}, which is proportional to the discretization dimension $N$. 

}

\added{
\subsubsection{Iterative construction}
\label{sec:adaptsamp}

The low-rank approximation of the posterior covariance \eqref{eq:low_rank_cov} will be used to design operator-weighted proposals. Yet building this covariance approximation requires posterior samples, both for computing the global LIS and for constructing the projected posterior covariance $\Sigma_r$. These computations must therefore be carried out on-the-fly during posterior sampling. 
Here we discuss two building block algorithms---for estimating the global LIS and the projected posterior covariance from a given set of posterior samples---which are later used in the adaptive sampling algorithms of Section~\ref{sec:algo_opt}.

We first consider the global LIS. Suppose that we have a set of $m$ posterior samples $v_k \sim \mu_y$ for $k = 1 \ldots m$. 
The expected GNH \eqref{eq:expected_hessian} could be directly approximated using the Monte Carlo estimator  
\begin{equation}
\cS_m = \frac{1}{m} \sum_{k = 1}^{m} \hessian(v_k).
\label{eq:mc_ppgnh}
\end{equation}
If the discretized parameter has dimension $N$, the cost of storing $\cS_m$ and the computational cost of updating it are both proportional to $N^2$.
Furthermore, the cost of a direct eigendecomposition of $\cS_m$ for computing the global LIS basis is of order $N^3$. 
Thus, it is not feasible to directly apply the Monte Carlo estimate and eigendecomposition here. 
Instead, we again use the low-rank structure of the problem to design an iterative updating scheme for computing and updating the eigendecomposition of $\cS_m$ as more posterior samples are drawn.

As discussed earlier for the local LIS, each local GNH $\hessian(v_k)$ is well captured by a low-rank decomposition
\begin{equation}
\hessian(v_k) \approx \sum_{i = 1}^{l(v_k)} \lambda_i(v_k) \phi_i(v_k) \phi_i^\ast(v_k), 
\label{eq:local_hessian}
\end{equation}
where $\lambda_i(v_k)$ and $\phi_i(v_k)$ are eigenvalues and eigenvectors depending on the parameter $v_k$.
The eigenspectrum is truncated at some local threshold $\eigentrunc_{\rm l}$, i.e., $\lambda_1(v_k) \geq \lambda_2(v_k) \geq \ldots \geq \lambda_{l(v_k)}(v_k) \geq \eigentrunc_{\rm l} > 0$. 
Similar to the global LIS truncation threshold, we typically choose the local truncation threshold $\eigentrunc_{\rm l} = \eigentrunc_{\rm g} =  0.1$.
Now let $\Lambda(v_k)$ be the $l(v_k)\times l(v_k)$ diagonal matrix such that $\Lambda_{ij}(v_k) = \delta_{ij} \lambda_i(v_k)$, and let $\Phi(v_k) = \left[\phi_1(v_k), \ldots, \phi_{l(v_k)}(v_k) \right]$ be the rank-$l(v_k)$ local LIS basis. 
Using %the 
matrix notation, the decomposition of the local GNH can be rewritten as
\[
\hessian(v_k) \approx  \Phi(v_k) \,\Lambda(v_k) \Phi^\ast(v_k).
\]
By computing the action of the local GNH on functions---each action requires one forward model evaluation and one adjoint model evaluation---the eigendecomposition can be computed using either Krylov subspace methods \cite{Golub_2012} or randomized algorithms \cite{SVD_HMT_2011, SVD_Liberty_etal_2007}.
This way, the number of forward and adjoint evaluations used to obtain \eqref{eq:local_hessian} is proportional to the target rank $l(v_k)$. 

For the current sample set $\{v_k\}_{k = 1}^{m}$, suppose that the expected GNH also admits a 
rank-$M$ approximation for some $M(m)$, in the form of 
\[
\cS_m^{} \approx \Theta_m^{} \Xi_m^{} \Theta_m^\ast,
\]
where $\Xi_m \in \R^{M\times M}$, with $[\Xi_m]_{ij} = \delta_{ij} \rho_i$ and $\Theta_m = [\theta_1, \ldots, \theta_{M}]$ containing the eigenvalues and eigenfunctions of $\cS_m$, respectively. 
To maintain an accurate representation of the expected GNH and its eigendecomposition during the adaptive construction, we truncate this eigendecomposition conservatively rather than applying the $\eigentrunc_{\rm g} = 0.1$ threshold used to truncate the global LIS. 
In this paper, we set the truncation threshold to $\eigentrunc_{\rm s} = 10^{-4}$. 
Given a new sample $v_{m+1}$, the eigendecomposition can be updated using the incremental SVD method \cite{inc_SVD}; this procedure is described in Steps 6--9 of Algorithm \ref{algo:adaptS}.
In the discretization of the infinite-dimensional problem, the bases $\Theta_m$ and $\Phi(v_{m+1})$ are represented by ``tall and skinny'' matrices, and thus the computational cost of this iterative update is proportional to the discretization dimension $N$.
The major computational cost in this procedure is the eigendecomposition of the local GNH---which involves a number of linearized forward model and adjoint model solves---in Step 2 of Algorithm \ref{algo:adaptS}.
To initialize the global LIS, we choose the MAP estimate (defined in Section~\ref{sec:bayes}) as the initial sample at which to decompose the local GNH.
\renewcommand{\algorithmicrequire}{\textbf{Input:}}
\renewcommand{\algorithmicensure}{\textbf{Output:}}
\begin{algorithm}[h]
 \begin{algorithmic}[1]
 \Require{$\;\;$ $\Theta_m$, $\Xi_m$, $v_{m+1}$} 
 \Ensure{$\Theta_{m+1}$, $\Xi_{m+1}$, $\Theta_r$, $\Xi_r$}
 \Procedure{UpdateLIS}{$\Theta_m$, $\Xi_m$, $v_{m+1}$, $\Theta_{m+1}$, $\Xi_{m+1}$, $\Theta_r$, $\Xi_r$} 
 \State Compute the local LIS basis through $\hessian(v_{m+1}) \approx \Phi(v_{m+1}) \,\Lambda(v_{m+1}) \, \Phi^\ast(v_{m+1})$
 \If{ $m = 0$ } 
 \State Initialize $\cS_1$ by setting $\Theta_1 = \Phi(v_1)$ and $\Xi_1 = \Lambda(v_1)$
 \Else
 \State Compute the QR decomposition $[\Theta_m , \Phi(v_{m+1}) ] = \cQ \cR$ 
 \State Compute the new eigenvalues through the eigendecomposition 
 \[
 \frac{1}{m+1} \cR \left[ \begin{array}{cc} m\,\Xi_m & 0 \\ 0 & \Lambda(v_{m+1}) \end{array} \right]\cR^\ast = \cW \, \Xi_{m+1} \, \cW^\ast
 \]
 \State Compute the new basis $\Theta_{m+1} = \cQ \cW$
 \State Return new decomposition $\Theta_{m+1}$ and $\Xi_{m+1}$ such that all $\rho_i \geq \eigentrunc_{\rm s}$
 \EndIf
 \State Return LIS basis $\Theta_r = \Theta_{m+1}(:, \, 1\!:\!r)$ and LIS eigenvalues $\Xi_r = \Xi_{m+1}(1\!:\!r,\, 1\!:\!r)$, \newline 
such that $\rho_1, \ldots, \rho_r \geq \eigentrunc_{\rm g}$ \Comment{MATLAB notation is used here}
 \EndProcedure
 
 \end{algorithmic}
 \caption{Incremental update of the expected GNH and global LIS. }
 \label{algo:adaptS}
\end{algorithm}

Algorithm \ref{algo:adaptS} provides an efficient way of updating the expected GNH given a new posterior sample $v_{m+1}$. In Section~\ref{sec:algo_opt} we will describe how to embed Algorithm \ref{algo:adaptS} into an adaptive framework for posterior exploration, where new samples for each update are generated by operator-weighted MCMC. This approach thus \textit{automatically} explores the landscape of the likelihood; depending on this landscape, the number of samples needed to give a reasonable estimate of the expected GNH (and in particular, its dominant eigendirections) may vary significantly. (For instance, if the log-likelihood is quadratic, only one sample is needed!) In general, to limit the total number of expensive eigendecompositions of the local GNH, we would like to terminate this adaptive construction process as soon as the estimated LIS stabilizes.

To this end, let $\{\Theta_{r}, \Xi_{r}\}$ and  $\{\Theta_{r^\prime},\Xi_{r^\prime}\}$ denote the global likelihood informed subspaces and associated eigenvalues evaluated at successive iterations of the LIS construction process. And let
\[
\cS_r^{} = \Theta_r^{}\, \Xi_r^{}\, \Theta_r^{\ast}, \quad {\rm and} \quad \cS_{r^\prime}^{} = \Theta_{r^\prime}^{}\, \Xi_{r^\prime}^{}\, \Theta_{r^\prime}^{\ast},
\]
denote the (truncated) expected GNHs at the same two iterations.
We use the F\"{o}rstner distance \cite{Forstner} between $(\cI + \cS_r) $ and $( \cI + \cS_{r^\prime} )$,
\begin{equation}
\label{eq:lis_conv}
d_{\mathcal{F}} \left( \cI + \cS_r , \, \cI +  \cS_{r^\prime} \right) = 
\left( \sum_{i = 1}^{\infty} \ln^2 \left( \lambda_i \left( \cI + \cS_r, \, \cI + \cS_{r^\prime} \right) \right) \right)^{\frac12},
\end{equation}
to monitor the converge of the LIS. 
Here $\lambda_i( \cA, \cB )$ is the $i$th generalized eigenvalue of the pencil $(\cA,\cB)$.
The F\"{o}rstner distance was introduced in  \cite{Forstner} as a metric for positive definite covariance matrices, and is also related to the Kullback-Leibler divergence or Hellinger distance between Gaussian distributions, as described in \cite{Linear_Redu_2014}. 
In the present setting, the infinite-dimensional generalized eigenproblem in \eqref{eq:lis_conv} can be reformulated as a finite-dimensional generalized eigenproblem, so that the F\"{o}rstner metric becomes directly applicable.
Since the expected GNHs $\cS_r$ and $\cS_{r^\prime}$ are symmetric and finite-rank operators, the corresponding operators $(\cI + \cS_r)$ and $( \cI + \cS_{r^\prime} )$ differ only on the finite dimensional subspace $\sub$ that is the sum of the ranges of $\cS_r$ and $\cS_{r^\prime}$.
In other words, there are only a finite number of nonzero terms in the summation \eqref{eq:lis_conv}. Letting $\sub = {\rm span}(\Theta_r) + {\rm span}(\Theta_{r^\prime})$, $d_{\mathcal{F}}$ can be equivalently and efficiently computed by projecting the operators $(\cI + \cS_r )$ and $( \cI + \cS_{r^\prime} )$ onto $\sub$. Recall that in the discretized space both $\Theta_r$ and $\Theta_{r^\prime}$ are ``tall and skinny'' matrices, and therefore the computational cost of evaluating $d_{\mathcal{F}}$ scales linearly with the discretization dimension $N$. 

An alternative criterion for monitoring the convergence of the global LIS is the weighted subspace distance from \cite{Distance_Li_2009}, which is used in \cite{Cui_LIS_2014}. In practice, we find that both criteria produce consistent results. 

\begin{algorithm}[h]
 \begin{algorithmic}[1]
 \Require{$\;\;$ $\Theta_r$, $\Sigma_r$} 
 \Ensure{$\Psi_{r}$, $\cD_r$}
 \Procedure{UpdateCov}{$\Theta_r$, $\Sigma_r$, $\Psi_{r}$, $\cD_r$} 
 \State Compute eigendecomposition %$\Sigma_r^{} = 
 $\cW_r^{} \cD_r^{} \cW_r^\ast = \Sigma_r^{}$
 \State Return the reweighted LIS basis $\Psi_r = \Theta_r \, \cW_r$ and the diagonalized covariance $\cD_r$
 \EndProcedure
 \end{algorithmic}
 \caption{Update of the low-rank posterior covariance approximation.}
 \label{algo:updateCov}
\end{algorithm}

Given a global LIS basis $\Theta_r$ and an empirical estimate of the posterior covariance projected onto the LIS, $\Sigma_r$, the low-rank approximation \eqref{eq:low_rank_cov} of the posterior covariance can be constructed using the procedure described in Algorithm \ref{algo:updateCov}.
When the LIS is fixed, estimation of $\Sigma_r$ from posterior samples only requires storage of order $r^2$ for the covariance matrix itself and an empirical mean. Given a new posterior sample $v$, $\Sigma_r$ can be updated by a rank-one modification using the $r$-dimensional vector $\Theta_r^\ast v$. 
Since the factorization and inversion of the posterior covariance \eqref{eq:low_rank_cov} only involve the $r\times r$ diagonal matrix $\cD_r$, their computational costs are negligible compared with the discretization dimension $N$.
The computational costs of projecting the sample $v$, reweighing the LIS basis as in \eqref{eq:reweigh_basis}, and computing the eigendecomposition of $\Sigma_r$ scale as $N r$, $N r^2$, and $r^3$, respectively. 
The sample projection $\Theta_r^\ast v$ has negligible cost compared with other operations such as PDE solves, and hence does not require special treatment. 

\algnewcommand{\IIf}[1]{\State\algorithmicif\ #1\ \algorithmicthen}
\algtext*{EndFor}

\begin{algorithm}[h]
 \begin{algorithmic}[1]
 \Require{$\;\;$ $\Theta_r$, $\Sigma_r$, $\Theta_{r^\prime}$} 
 \Ensure{$\Sigma_{r^\prime}$}
 \Procedure{ProjectCov}{$\Theta_r$, $\Sigma_r$, $\Theta_{r^\prime}$, $\Sigma_{r^\prime}$} 
 \State $ \cT = \Theta^\ast_{r^\prime} \Theta^{}_{r^{}} $
 \State $ \Sigma_{r^\prime} = \cT ( \Sigma_r^{} - \cI_r^{}) \cT^\ast + \cI_{r^\prime} $
 \EndProcedure
 \end{algorithmic}
 \caption{Posterior covariance re-projection for each LIS update.}
 \label{algo:projcov}
\end{algorithm}

Finally, we note that each time the global LIS is updated, the estimate $\Sigma_r$ of the posterior covariance in the LIS must itself be updated. Suppose that the current LIS basis is $\Theta_r$ and that the associated projected posterior covariance is $\Sigma_r$. Given a new LIS basis $\Theta_{r^\prime}$, the new projected posterior covariance $\Sigma_{r^\prime}$ can be obtained by projecting the current posterior covariance estimate $\Sigma =  \Theta_r^{} ( \Sigma_r^{} - \cI_r^{}) \Theta_r^\ast + \cI $ onto the new LIS, as shown in Algorithm \ref{algo:projcov}.
The computational cost of this procedure is dominated by the multiplication of the LIS bases in step 2, which is proportional to $N r r^\prime$.

}

% !TEX root = dili_main.tex

%%%%%%%%%%%%%%%%%%%%%%%%%%%%%%%%%% 
\subsection{DILI proposals}
\label{sec:lis_operators}

\added{
Now we use the global LIS and the low-rank posterior covariance approximation \eqref{eq:low_rank_cov} to develop operator-weighted MCMC proposals that are efficient and that satisfy the conditions of Theorem~\ref{theo:1}, thus ensuring discretization-invariant mixing in a function space setting.

The low-rank posterior covariance $\Sigma = \basis_r^{} \left( \cD_r^{} - \cI_r^{} \right) \basis_r^{\ast} + \cI$ is entirely specified by the global LIS basis $\basis_r$ and the $r$-dimensional diagonal matrix $\cD_r$.
Given these objects, the transformed parameter space $\hilbert_v \equiv \im(\prcov^{-\half})$ can be partitioned as $\hilbert_v = \hilbert_r \oplus \hilbert_\perp$, where the LIS is $\hilbert_r = \rm span (\basis_r)$ and has finite Hilbert dimension, i.e.,  ${\rm card}(\basis_r) = r < \infty$.  
We can associate the infinite-dimensional CS\footnote{In whitened coordinates, i.e., in terms of the transformed parameter $v$, the CS is the \textit{orthogonal} complement of the LIS.} $\hilbert_\perp \equiv \left ( \hilbert_v \right )^\perp$ with a set of eigenfunctions 
$\basis_\perp = [\psi_{r+1}, \psi_{r+2},\ldots]$ such that $\hilbert_\perp = \rm span (\basis_\perp)$. $\basis = [\basis_r, \basis_\perp]$ thus forms a complete orthonormal system in $\hilbert_v$.
Recall that the LIS captures parameter directions where the posterior differs most strongly from the prior. This partition of the parameter space, along with the approximate posterior covariance $\Sigma$, can then be used to assign eigenvalues to the basis $\basis$ in order to construct the operators $\cA$, $\cB$, and $\cG$ in the proposal \eqref{eq:operator_weighted}.

An appropriate discretization of the Langevin SDE \eqref{eq:langevin} naturally leads to an eigenvalue assignment that satisfies Theorem \ref{theo:1}.
Following the sequence of transformations in Definitions \ref{def:trans_v} and \ref{def:trans_w}, the Langevin SDE \eqref{eq:langevin} preconditioned by the low-rank approximation of the posterior covariance $\prcov^{\half} \Sigma \prcov^{\half}$ (for the parameter $u$) takes the form
\begin{equation}
\left(\begin{array}{l} d w_r^{} \\ d w_\perp^{} \end{array}\right) = 
- \left(\begin{array}{l} \cD_r^{} \, w_r^{} \, d\tau \\ w_\perp^{} \, d\tau \end{array}\right) 
- \left(\begin{array}{l} \gamma_r^{}\, \cD_r^{} \left( {\basis_r^\ast} \, \gradientw \right) d\tau \\ \gamma_\perp^{} \, {\basis_\perp^\ast} \, \gradientw \, d\tau \end{array}\right) 
+ \left(\begin{array}{l} \sqrt{2\cD_r^{}} \, \white_r^{}(\tau)\\ \sqrt{2} \, \white_\perp^{}(\tau) \end{array}\right),
\label{eq:langevin_time}
\end{equation}
where $w_r^{} = {\basis_r^\ast}v$ and $w_\perp^{} = {\basis_\perp^\ast} v$ are coefficients of the projection of $v$ onto the LIS and CS, respectively. 
The two tuning parameters $\gamma_r$ and $\gamma_\perp$ take values $\{0, 1\}$ to switch between a Langevin proposal and random walk proposal in the LIS and CS.
Now, discretizing this SDE is equivalent to assigning eigenvalues to the operators $\cA$, $\cB$, and $\cG$ of the operator-weighted proposal \eqref{eq:operator_weighted}.
We will employ an inhomogeneous time discretization strategy: a time discretization scheme tailored to the non-Gaussian structure---that is not necessarily dimension-independent---is used in the finite-dimensional LIS, while a standard Crank-Nicolson scheme---to ensure dimension-independence---is used in the infinite-dimensional CS.

In the rest of this subsection, we will provide specific examples of these operator-weighted proposals.
We will prescribe two different time steps to discretize \eqref{eq:langevin_time}, denoted by $\dt_r$ and $\dt_\perp$.
All the proposals we present depend only on the global LIS basis $\basis_r$, the $r \times r$ diagonalized covariance $\cD_r$, and the time steps $\dt_r$ and $\dt_\perp$.
Given the partition defined by the LIS, we will also present extensions of those proposals to Metropolis-within-Gibbs schemes. 

}

\subsubsection{Basic proposals}
\label{sec:basic_opt}

Here we consider all-at-once proposals that update all the parameter in a single iteration. 
Two examples will be provided, one which does not use gradient information on the LIS $(\gamma_r = 0)$
and one which does $(\gamma_r = 1)$.  
On the CS, both proposals use the pCN update with $\gamma_\perp = 0$.
Thus the proposal in the CS is given by
\begin{equation}
w_\perp' = a_\perp w_\perp + b_\perp \rand_\perp,
\label{eq:cs_propose}
\end{equation}
where 
\[
\rand_\perp \sim \normal(0, \cI_\perp), \quad a_\perp = \frac{2 - \dt_\perp }{2 + \dt_\perp }, 
\quad {\rm and} \quad b_\perp = \sqrt{1 - {a_\perp}^2}.
\]
Here we use $\cI_\perp$ to denote the identity operators associated with $\hilbert_\perp$, whereas the identity operator associated with $\hilbert_r$ is an $r\times r$ identity matrix $\cI_r$. 
For the corresponding eigenfunctions $\basis_\perp$, the operators $\cA$, $\cB$, and $\cG$ have uniform eigenvalues $a_\perp$, $b_\perp$, and $0$, respectively. 

Various discretization schemes can be used for the Langevin SDE in the LIS.
Let the eigenfunctions $\basis_r$ be associated with sequences of eigenvalues 
$(a_i)$, $(b_i)$, and $(g_i)$, $i = 1,\ldots, r$, 
giving rise to the $r \times r$ diagonal matrices 
$\cD_{\cA_r}, \cD_{\cB_r}$, and $\cD_{\cG_r}$.
The operators $\cA$, $\cB$, and $\cG$ can then be written as
\begin{eqnarray}
\label{eq:finite_splitting}
\cA & = & \basis_r \left(\cD_{\cA_r} - a_\perp \cI_r\right) {\basis_r^\ast} + a_\perp \cI, \nonumber \\
\cB & = & \basis_r \left(\cD_{\cB_r} - b_\perp \cI_r\right) {\basis_r^\ast} + b_\perp \cI, \\
\cG & = & \basis_r \cD_{\cG_r} {\basis_r^\ast}. \nonumber
\end{eqnarray}
Theorem \ref{theo:1} and Corollary \ref{coro:1} can be used to show when the operator-weighted proposals given by \eqref{eq:finite_splitting} yield well-defined MCMC algorithms, and to derive the associated acceptance probability.

Our  first operator-weighted proposal employs a Crank-Nicolson scheme in the LIS, with $\gamma_r = 0$.
The resulting discretized Langevin SDE in the LIS has the form
\[
w_r' - w_r = - \displaystyle  \frac12 \cD_r (w_r' + w_r) \dt_r + \sqrt{2\cD_r\dt_r} \rand_r ,
\]
where $\rand_r \sim \normal(0, \cI_r)$.
Together with the CS proposal \eqref{eq:cs_propose}, we have the following complete proposal:

\begin{proposal}[\textbf{LI-Prior:} Likelihood-informed proposal that leaves the shifted prior distribution invariant]
\label{prop:lis_prior}
\[
v' = \cA v + \cB \rand ,
\]
where $\cA$, $\cB$, and $\cG$ are given by  \eqref{eq:finite_splitting}, with 
\begin{eqnarray}
\nonumber
\cD_{\cA_r} & = & \left(2 \cI_r + \dt_r \cD_r \right)^{-1} \left(2 \cI_r - \dt_r \cD_r \right), \\
\label{eq:dadb}
\cD_{\cB_r} & = & \left(\cI_r - \cD_{\cA_r}^2 \right)^\half, \\
\cD_{\cG_r} &= & 0.
\nonumber
\end{eqnarray}
Since this proposal has $\cD_{\cA_r}^2 + \cD_{\cB_r}^2 = \cI_r^{}$, the acceptance probability can be simplified to 
\begin{equation}
\alpha(v, v') = \min \left\{ 1, \exp\left(\potential(v) - \potential(v') + \langle v_{\rm ref}, v-v' \rangle \right) \right\},
\label{eq:std_acc}
\end{equation}
where $v_{\rm ref} = \prcov^{-\half} m_{\rm ref}$.
\end{proposal}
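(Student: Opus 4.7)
The plan is to show that the spectral data for $\cA$, $\cB$, $\cG$ prescribed by \eqref{eq:finite_splitting}--\eqref{eq:dadb} satisfies the hypotheses of Theorem \ref{theo:1}, so that the proposal defines a valid MH algorithm in function space, and then to collapse the general acceptance probability from Corollary \ref{coro:1} to \eqref{eq:std_acc} using the specific algebraic structure of \eqref{eq:dadb}.

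First I would assemble the common eigenvalue sequences produced by \eqref{eq:finite_splitting}. On the $r$-dimensional LIS the eigenvalues of $\cA$, $\cB$, $\cG$ are $a_i = (2 - \dt_r \rho_i)/(2 + \dt_r \rho_i)$, $b_i = \sqrt{1 - a_i^2}$, and $g_i = 0$, where $\rho_i > 0$ are the diagonal entries of $\cD_r$; on the CS they are constant, equal to $a_\perp$, $b_\perp = \sqrt{1-a_\perp^2}$, and $0$, respectively. Conditions (1)--(3) of Theorem \ref{theo:1} then follow by direct inspection: boundedness is immediate since $|a_i|, |a_\perp| < 1$, $b_i, b_\perp \in (0,1]$, and $g_i = 0$; Condition (2) is vacuous because no $b_i$ vanishes whenever $\dt_r, \dt_\perp > 0$; and $c = \min\bigl(b_\perp, \min_{i \le r} b_i\bigr) > 0$ gives the uniform lower bound required by Condition (3), using the fact that the LIS is finite-dimensional. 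The summability in Condition (4) is the only point worth flagging, and it holds trivially because $\cD_{\cB_r}^2 = \cI_r - \cD_{\cA_r}^2$ forces $a_i^2 + b_i^2 = 1$ on the LIS, and $b_\perp^2 = 1 - a_\perp^2$ does the same on the CS, so every summand $(a_i^2 + b_i^2 - 1)^2$ is identically zero.

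Next I would substitute into the formula for $\rho(v, v')$ in Corollary \ref{coro:1}. Because $\cG = 0$ one has $\cG_0 = 0$, which annihilates both $\langle \cB_0^{-1}\cG_0\gradientv, \cB_0^{-1}(\cI_0 v' - \cA_0 v)\rangle$ and $\half \|\cB_0^{-1}\cG_0\gradientv\|^2$. Because the same identity $\cA_0^2 + \cB_0^2 = \cI_0$ that delivered Condition (4) makes $\cB_0^{-2}(\cA_0^2 + \cB_0^2 - \cI_0)$ the zero operator, the quadratic correction $-\half \langle v, \cB_0^{-2}(\cA_0^2 + \cB_0^2 - \cI_0) v\rangle$ also disappears, rendering the potentially singular inverse $\cB_0^{-2}$ harmless. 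What survives is $\rho(v, v') = -\potential(v;y) - \langle v_{\rm ref}, v\rangle$, and the symmetric combination $\rho(v', v) - \rho(v, v')$ inside the $\min$ reduces to $\potential(v;y) - \potential(v';y) + \langle v_{\rm ref}, v - v'\rangle$, which is exactly \eqref{eq:std_acc}.

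I expect the main obstacle to be organizational rather than technical: one must index the LIS and CS eigenvalues as a single sequence in $\mathbb{N}$ and verify Theorem \ref{theo:1}'s four conditions uniformly across both regimes, with particular attention to Condition (4). The saving feature is that both the Crank-Nicolson discretization on the LIS and the pCN scheme on the CS have been engineered to enforce $a_i^2 + b_i^2 = 1$ exactly, which simultaneously delivers Condition (4) and, via Corollary \ref{coro:1}, eliminates every non-trivial term of $\rho$ except the target potential and the prior-mean shift.
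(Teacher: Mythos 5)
Your proposal is correct and follows essentially the same route the paper intends: verify the four conditions of Theorem \ref{theo:1} for the eigenvalue sequences induced by \eqref{eq:finite_splitting}--\eqref{eq:dadb} (with Condition (4) holding trivially because the Crank--Nicolson and pCN discretizations both enforce $a_i^2+b_i^2=1$ exactly), then specialize Corollary \ref{coro:1}, where $\cG=0$ and $\cA_0^2+\cB_0^2=\cI_0$ annihilate all terms of $\rho$ except $-\potential(v;y)-\langle v_{\rm ref},v\rangle$, yielding \eqref{eq:std_acc}. The only nit is notational: the diagonal entries of $\cD_r$ are eigenvalues of the projected posterior covariance $\Sigma_r$, not the $\rho_i$ of the expected GNH in \eqref{eq:glis}, but this does not affect your argument since positivity of those entries and of $\dt_r$ is all that is needed for $|a_i|<1$ and $b_i>0$.
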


We can instead consider a mixed discretization scheme, which uses an explicit discretization of the Langevin proposal ($\gamma_r = 1$) in the LIS. The discretized Langevin SDE in the LIS has the form
\[
w_r' - w_r = - \dt_r \cD_r \left( w_r +{\basis_r^\ast} \gradientw \right) + \sqrt{2\cD_r\dt_r} \rand_r, 
\]
where $\rand_r \sim \normal(0, \cI_r)$.
Combined with the CS proposal \eqref{eq:cs_propose}, this leads to the proposal:
\begin{proposal}[\textbf{LI-Langevin:} Likelihood-informed proposal that uses the Langevin proposal in the LIS]
\label{prop:lis_mala}
\[
v' = \cA v - \cG \gradientv + \cB \rand ,
\]
where $\cA$, $\cB$, and $\cG$ are given by  \eqref{eq:finite_splitting}, with
\begin{eqnarray}
\nonumber
\cD_{\cA_r} & = &  \cI_r - \dt_r \cD_r, \\
\label{eq:redexlang}
\cD_{\cB_r} & = & \sqrt{2 \dt_r \cD_r}, \\
\cD_{\cG_r} &= & \dt_r \cD_r.
\nonumber
\end{eqnarray}
The acceptance probability of this proposal has the form 
\begin{equation}
\label{eq:lis_prop}
\alpha(v, v') = \min \left\{ 1, \exp(\rho(v', v) - \rho(v, v')) \right\},
\end{equation}
where $\rho(v, v')$ is given by 
\begin{eqnarray*}
\rho(v, v') & = & - \potential(v; y) - \langle v_{\rm ref}, v \rangle - \frac12 \left\| {\basis_r^\ast} v \right\|^2 \\ 
&& - \frac12 \left \| {\cD_{\cB_r}}^{-1} \left( {\basis_r^\ast} v' - \cD_{\cA_r} {\basis_r^\ast} v + \cD_{\cG_r} {\basis_r^\ast} \gradientv \right) \right \|^2,
\end{eqnarray*}
where $v_{\rm ref} = \prcov^{-\half} m_{\rm ref}$.
\end{proposal}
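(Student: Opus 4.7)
The plan is to apply Corollary \ref{coro:1} to the LI-Langevin proposal and then verify that the resulting expression matches the claimed $\rho(v,v')$ up to a remainder that is symmetric in $(v,v')$ and thus cancels in the acceptance ratio. First I would verify the hypotheses of Theorem \ref{theo:1}. On the LIS the eigenvalues $(a_i,b_i,g_i)_{i=1}^r$ are bounded real numbers and the $b_i$ are strictly positive since $r$ is finite, $\cD_r$ has positive diagonal entries, and $\dt_r>0$; on the CS the uniform eigenvalues $(a_\perp, b_\perp, 0)$ satisfy $a_\perp^2+b_\perp^2=1$. Consequently the summability condition (4) of Theorem \ref{theo:1} reduces to a finite sum over the LIS, and Conditions (1)--(3) hold trivially, so Theorem \ref{theo:1} guarantees a well-defined MCMC algorithm.

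Next I would apply Corollary \ref{coro:1}. The key observation is that on the CS one has $\cA_0^2+\cB_0^2-\cI_0=0$ and $\cG_0=0$, so the operators $\cB_0^{-2}(\cA_0^2+\cB_0^2-\cI_0)$ and $\cB_0^{-1}\cG_0$ are supported on the LIS and act through their diagonalizations $\cD_{\cA_r},\cD_{\cB_r},\cD_{\cG_r}$. Writing $w_r=\basis_r^\ast v$ and using \eqref{eq:redexlang}, a short computation gives $\cD_{\cA_r}^2+\cD_{\cB_r}^2-\cI_r=(\cI_r-\dt_r\cD_r)^2+2\dt_r\cD_r-\cI_r=\dt_r^2\cD_r^2$, hence $\cD_{\cB_r}^{-2}(\cD_{\cA_r}^2+\cD_{\cB_r}^2-\cI_r)=\half\dt_r\cD_r$. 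Substituting these into Corollary \ref{coro:1} yields an explicit expression $\rho_{\rm Cor}(v,v')$ in terms of $w_r$, $w_r'=\basis_r^\ast v'$, and $\basis_r^\ast\gradientv$.

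The remaining step is to reconcile $\rho_{\rm Cor}$ with the claimed $\rho$ by completing the square. Expanding $\half\|\cD_{\cB_r}^{-1}(w_r'-\cD_{\cA_r}w_r+\cD_{\cG_r}\basis_r^\ast\gradientv)\|^2$ reproduces the cross-term and gradient-norm already in $\rho_{\rm Cor}$, together with an extra $\half\|\cD_{\cB_r}^{-1}(w_r'-\cD_{\cA_r}w_r)\|^2$. Using commutativity of the diagonal LIS matrices to compute $\cD_{\cB_r}^{-2}\cD_{\cA_r}^2=(2\dt_r\cD_r)^{-1}-\cI_r+\half\dt_r\cD_r$, one then obtains
\[
\rho_{\rm Cor}(v,v')-\rho(v,v') = \half\|\cD_{\cB_r}^{-1}w_r'\|^2 + \half\|\cD_{\cB_r}^{-1}w_r\|^2 - \langle\cD_{\cB_r}^{-1}w_r',\cD_{\cB_r}^{-1}\cD_{\cA_r}w_r\rangle,
\]
which is symmetric in $(w_r,w_r')$ because $\cD_{\cB_r}^{-2}\cD_{\cA_r}$ is self-adjoint (diagonal). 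This symmetric remainder cancels in $\rho(v',v)-\rho(v,v')$, so the acceptance probability derived from Corollary \ref{coro:1} coincides with the stated expression.

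The main obstacle I expect is the algebraic bookkeeping in this final step, where several quadratic forms in $w_r$ must be tracked and shown to combine into a symmetric remainder. A cleaner alternative would sidestep Corollary \ref{coro:1} entirely and instead compute $d\nu^\bot/d\nu$ directly from the LIS/CS product structure of the proposal: the CS factor drops out by pCN-reversibility of \eqref{eq:cs_propose} with respect to $\normal(0,\cI_\perp)$ (guaranteed by $a_\perp^2+b_\perp^2=1$), and the finite-dimensional LIS factor is a ratio of Gaussian proposal densities on $\R^r$ whose Lebesgue densities immediately produce the $-\half\|w_r\|^2$ and $-\half\|\cD_{\cB_r}^{-1}(\cdot)\|^2$ terms appearing in the stated $\rho$.
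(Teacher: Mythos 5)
Your proof is correct and follows essentially the route the paper intends: the paper gives no separate proof of Proposal~\ref{prop:lis_mala}, stating only that Theorem~\ref{theo:1} and Corollary~\ref{coro:1} are to be applied to the operators in \eqref{eq:finite_splitting}, which is exactly what you do. Your extra step — checking that the Corollary's $\rho$ and the stated $\rho$ differ by the symmetric remainder $\half\|\cD_{\cB_r}^{-1}w_r'\|^2 + \half\|\cD_{\cB_r}^{-1}w_r\|^2 - \langle\cD_{\cB_r}^{-1}w_r',\cD_{\cB_r}^{-1}\cD_{\cA_r}w_r\rangle$ via the identity $\cD_{\cB_r}^{-2}(\cI_r-\cD_{\cA_r}^2)=\cI_r-\tfrac{1}{2}\dt_r\cD_r$ — is algebraically correct and fills in a detail the paper leaves implicit.
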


\subsubsection{Metropolis-within-Gibbs scheme}
\label{sec:mwg_opt}

Given a basis $\basis_r$ for the LIS, the parameter $v$ can be split as
\[
v = v_r + v_\perp, \quad v_r =  {\basis_r} w_r, \quad v_\perp = \basis_\perp w_\perp,
\] 
and hence a Metropolis-within-Gibbs update can be quite naturally introduced for $v_r$ and $v_\perp$.
This update consists of a pair of proposals:
\begin{subequations} 
\begin{eqnarray}
{\rm Update\;} v_r {\rm \; conditioned \; on\;} v_\perp & : & v' = \cA_r v - \cG_r \gradientv+ \cB_r \rand,\label{eq:gibbs1} \\
{\rm Update\;} v_\perp {\rm \; conditioned \; on\;} v_r & : & v' = \cA_\perp v + \cB_\perp \rand, \label{eq:gibbs2}
\end{eqnarray}
\end{subequations}
where $\rand \sim \normal(0, \cI)$. 
By setting $a_\perp = 1$ and $b_\perp = 0$ in  \eqref{eq:finite_splitting}, we can derive the proposal \eqref{eq:gibbs1}, which updates $v_r$ conditional on $v_\perp$.
This leads to the operators
\begin{equation}
\cA_r = \basis_r \left( \cD_{\cA_r} - \cI_r \right) {\basis_r^\ast} + \cI, \quad \cB_r = \basis_r \cD_{\cB_r} {\basis_r^\ast} , \quad {\rm and} \quad \cG_r = \basis_r \cD_{\cG_r} {\basis_r^\ast} .
\label{eq:mg1}
\end{equation}
Similarly, by setting $\cD_{\cA_r} = \cI_r$, $\cD_{\cB_r} = 0$, and $\cD_{\cG_r} = 0$, we have the proposal \eqref{eq:gibbs2} which updates $v_\perp$ conditional on $v_r$.
This leads to the operators 
\begin{equation}
\cA_\perp = \basis_r {\basis_r^\ast} + a_\perp \left( \cI - \basis_r {\basis_r^\ast} \right), \quad {\rm and} \quad \cB_\perp = b_\perp \left( \cI - \basis_r {\basis_r^\ast} \right).
\label{eq:mg2}
\end{equation}
These proposals both satisfy the conditions of Theorem \ref{theo:1}, 
and therefore lead to a Metropolis-Hastings scheme that is well-defined on function space. 
The following two Metropolis-within-Gibbs schemes are derived from Proposals \ref{prop:lis_prior} and \ref{prop:lis_mala}.

\begin{proposal}
[\textbf{MGLI-Prior:} Metropolis-within-Gibbs update using a likelihood-informed proposal that is invariant w.r.t.\ the shifted prior distribution]
\label{prop:mglis_prior}  
Let $\cD_{\cA_r}$, $\cD_{\cB_r}$, and $\cD_{\cG_r}$ be given by  \eqref{eq:dadb}.
Define $\cA_r$, $\cB_r$, and $\cG_r$ as in  \eqref{eq:mg1} and $\cA_\perp$,  $\cB_\perp$ as in  \eqref{eq:mg2}.  
The updates are given by \eqref{eq:gibbs1} and \eqref{eq:gibbs2} 
and both have acceptance probability \eqref{eq:std_acc}.
\end{proposal}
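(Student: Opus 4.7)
The plan is to verify the two assertions of Proposal \ref{prop:mglis_prior} separately: that both Gibbs updates define well-posed Metropolis-Hastings kernels on function space (via Theorem \ref{theo:1}), and that both reduce to the acceptance probability \eqref{eq:std_acc} (via Corollary \ref{coro:1}). Since the two updates have a symmetric structure---one uses a pCN-type move in the LIS and the identity in the CS, the other uses the identity in the LIS and pCN in the CS---the arguments are parallel, so I would treat the $v_r$ update in detail and then indicate the analogous steps for $v_\perp$.

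For the $v_r$ update \eqref{eq:gibbs1}, the eigenvalues of $\cA_r$, $\cB_r$, $\cG_r$ along the CS basis $\basis_\perp$ are identically $(1,0,0)$, which directly satisfies Condition (2) of Theorem \ref{theo:1}. Along the LIS basis $\basis_r$ the eigenvalues $(a_i,b_i,g_i)$ obtained from \eqref{eq:dadb} are bounded with $|a_i|<1$ (Condition (1)), and because the LIS is finite-dimensional their nonzero $b_i$ are bounded below (Condition (3)); summability (Condition (4)) holds trivially, since the CS contributes zero terms (as $a_i^2+b_i^2-1=0$ there) and the LIS contributes only finitely many. The argument for the $v_\perp$ update \eqref{eq:gibbs2} is completely analogous: in the LIS $(a_i,b_i,g_i)=(1,0,0)$ fulfils Condition (2); in the CS the scalars $(a_\perp,b_\perp,0)$ are bounded, $b_\perp>0$, and $a_\perp^2+b_\perp^2=1$ by definition of $b_\perp$, so Conditions (1), (3), (4) all hold.

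To derive the acceptance probability I would apply Corollary \ref{coro:1} with the operators restricted to the subspace $\{\psi_i : b_i \neq 0\}$. For the $v_r$ update this subspace is the LIS, and the key identity $\cD_{\cA_r}^2 + \cD_{\cB_r}^2 = \cI_r$ built into \eqref{eq:dadb} kills the quadratic term $-\tfrac12\langle v,\cB_0^{-2}(\cA_0^2+\cB_0^2-\cI_0)v\rangle$, while $\cD_{\cG_r}=0$ kills both gradient terms, leaving $\rho(v,v')=-\potential(v;y)-\langle v_{\rm ref},v\rangle$. For the $v_\perp$ update the relevant subspace is the CS, on which the identity $a_\perp^2+b_\perp^2=1$ plays the same role, and $\cG\equiv 0$ again removes the gradient contributions; $\rho$ collapses to the same expression. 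In both cases the difference $\rho(v',v)-\rho(v,v')$ is exactly $\potential(v;y)-\potential(v';y)+\langle v_{\rm ref},v-v'\rangle$, so the acceptance probability is \eqref{eq:std_acc}. The only real bookkeeping hurdle is correctly identifying which directions carry $b_i=0$ for each Gibbs move and confirming that the two Pythagorean identities hold on the complementary subspace, so that both the curvature and gradient terms in Corollary \ref{coro:1} vanish; beyond that, the proposal is a transparent consequence of the results already proved.
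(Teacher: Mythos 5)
Your proof is correct and follows exactly the route the paper intends: the paper justifies Proposal~\ref{prop:mglis_prior} only by asserting that the operators in \eqref{eq:mg1}--\eqref{eq:mg2} satisfy Theorem~\ref{theo:1} and that the acceptance probability follows from Corollary~\ref{coro:1}, and you supply precisely the missing verification (the $(1,0,0)$ eigenvalues on the inactive subspace triggering Condition~(2), finite-dimensionality of the LIS giving the lower bound in Condition~(3), and the Pythagorean identities $\cD_{\cA_r}^2+\cD_{\cB_r}^2=\cI_r$ and $a_\perp^2+b_\perp^2=1$ together with $\cG=0$ collapsing $\rho$ to \eqref{eq:std_acc}). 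No gaps.
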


\begin{proposal}[\textbf{MGLI-Langevin:} Metropolis-within-Gibbs update using the Langevin proposal in the LIS] 
\label{prop:mglis_mala} 
Let $\cD_{\cA_r}$, $\cD_{\cB_r}$, and $\cD_{\cG_r}$ be given by  \eqref{eq:dadb}.
Define
$\cA_r$, $\cB_r$, and $\cG_r$ as in  \eqref{eq:mg1} and
$\cA_\perp$ and $\cB_\perp$ 
as in \eqref{eq:mg2}.  
The updates are given by  \eqref{eq:gibbs1} and \eqref{eq:gibbs2}, 
with acceptance probabilities given by  \eqref{eq:lis_prop} and 
\eqref{eq:std_acc}, respectively.
\end{proposal}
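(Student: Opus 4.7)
The plan is to reduce the claim to two invocations of Theorem \ref{theo:1} (for well-posedness on function space) and Corollary \ref{coro:1} (for the explicit acceptance probabilities), one for each of the Gibbs sub-updates \eqref{eq:gibbs1} and \eqref{eq:gibbs2}, and then appeal to the standard theory of Metropolis-within-Gibbs to conclude that composing the two $\mu_y$-reversible kernels leaves $\mu_y$ invariant on $\hilbert$.

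For the LIS update \eqref{eq:gibbs1}, I would read off the eigenvalues of $(\cA_r, \cB_r, \cG_r)$ in the common eigenbasis $[\basis_r, \basis_\perp]$. From \eqref{eq:mg1}, these equal the Langevin diagonals of $\cD_{\cA_r}, \cD_{\cB_r}, \cD_{\cG_r}$ (i.e., those of \eqref{eq:redexlang}) on the LIS and $(1,0,0)$ on the CS. I would then verify the four conditions of Theorem \ref{theo:1}: (1) boundedness holds because only $r$ eigenvalues deviate from $(1,0,0)$, and these are bounded by continuity in the bounded spectrum of $\cD_r$; (2) $b_i = 0 \Rightarrow a_i = 1,\ g_i = 0$ holds trivially on the CS; (3) $|b_i| = \sqrt{2 \dt_r \rho_i} \ge \sqrt{2 \dt_r \eigentrunc_{\rm g}} > 0$ on the LIS by the global LIS truncation threshold; (4) a direct computation gives $(a_i^2 + b_i^2 - 1)^2 = (\dt_r \rho_i)^4$ on the LIS and $0$ on the CS, so the summability holds with only $r$ nonzero terms. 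Corollary \ref{coro:1} then applies with $\{\psi_i\}_{b_i \ne 0}$ identified as the LIS, so that $(\cA_0, \cB_0, \cG_0) = (\cD_{\cA_r}, \cD_{\cB_r}, \cD_{\cG_r})$; collecting and regrouping the quadratic and gradient-dependent terms yields \eqref{eq:lis_prop}. The CS coordinates vanish from $\rho(v,v')$ because $a_i^2 + b_i^2 - 1 = 0$ there and $\cG_r$ annihilates the CS, so the only CS-dependent contributions are through $\potential$ and $\langle v_{\rm ref}, v \rangle$, which appear unchanged.

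For the CS update \eqref{eq:gibbs2}, the eigenvalues of $(\cA_\perp, \cB_\perp, 0)$ are $(1,0,0)$ on the LIS and $(a_\perp, b_\perp, 0)$ on the CS. Conditions (1)--(3) of Theorem \ref{theo:1} are immediate (the LIS has $b_i = 0,\ a_i = 1$, while on the CS $|b_\perp|$ is a positive constant under the pCN scaling); condition (4) is trivial since $a_\perp^2 + b_\perp^2 = 1$ and every LIS term vanishes. Because $\cG \equiv 0$ and $\cA_\perp^2 + \cB_\perp^2 - \cI$ is zero on the support of $\cB_\perp$, the quadratic and gradient terms in Corollary \ref{coro:1} disappear, leaving the simplified form \eqref{eq:std_acc}.

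The main obstacle will be the bookkeeping of which eigenbasis subspace plays the role of $\{\psi_i\}_{b_i \ne 0}$ in each case---the LIS for \eqref{eq:gibbs1}, the CS for \eqref{eq:gibbs2}---together with verifying that the uniform lower bound on $|b_i|$ in Condition (3) really does follow from the LIS truncation $\eigentrunc_{\rm g} > 0$ on the LIS side and from the pCN scaling on the CS side. A minor point is ensuring that $\dt_r$ is small enough that $\cI_r - \dt_r \cD_r$ has bounded spectrum and $\cD_{\cB_r} = \sqrt{2 \dt_r \cD_r}$ is invertible on the LIS, both of which are automatic under the finite-rank structure of $\cD_r$.
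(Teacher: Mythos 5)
Your proof is correct and follows essentially the same route the paper intends: the paper simply invokes Theorem \ref{theo:1} and Corollary \ref{coro:1} separately on the two Gibbs sub-updates \eqref{eq:gibbs1} and \eqref{eq:gibbs2} (reading the Langevin diagonals \eqref{eq:redexlang} into \eqref{eq:mg1}, despite the statement's reference to \eqref{eq:dadb}), exactly as you do. One small slip worth noting: the entries of $\cD_r$ are eigenvalues of the projected posterior covariance $\Sigma_r$, not the expected-GNH eigenvalues $\rho_i$ thresholded at $\eigentrunc_{\rm g}$, so the uniform lower bound on $|b_i|$ over the LIS in Condition (3) comes from strict positive-definiteness of the finite-dimensional matrix $\Sigma_r$ rather than from the truncation threshold---the conclusion is unaffected.
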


% !TEX root = dili_main.tex
\added{
\subsection{Adaptive posterior sampling framework}
\label{sec:algo_opt}

Now we describe how the construction of the global LIS, the low-rank posterior covariance approximation, and the operator-weighted proposals can be integrated into a single adaptive posterior sampling framework. The building blocks have already been given; in particular, Section~\ref{sec:adaptsamp} describes how to construct and iteratively update the global LIS basis $\Theta_r$ (or equivalently $\basis_r$ \eqref{eq:reweigh_basis}) and the posterior covariance $\Sigma$ from posterior samples, while Section~\ref{sec:lis_operators} shows how to construct several different operator-weighted proposals from the LIS and covariance information. All that remains is to embed the first two operations within an MCMC approach. The overall framework is described in Algorithm~\ref{algo:adaptM}. At specified intervals (controlled by the parameter $n_{\text{lag}}$) and until a final tolerance (controlled by $\Delta_{\text{LIS}}$ and $n_{\text{max}}$) is reached, the LIS is updated from previous posterior samples. Every $n_{\text{b}}$ samples and/or when the LIS basis is updated, and for the duration of the algorithm, the posterior covariance $\Sigma$ and the operators in the DILI proposal are updated from previous samples. Thus the algorithm adaptively captures the global structure of the posterior from local Hessian information and from samples, and tunes the resulting operator-weighted proposals accordingly.

\algnewcommand{\IElse}{\algorithmicelse \unskip\ }
\algnewcommand{\EndIIf}{\unskip\ \algorithmicend\ \algorithmicif}
\begin{algorithm}[h]
 \begin{algorithmic}[1]
 \Require{During the LIS construction, we retain (1) $\{\Theta_m, \Xi_m\}$ to store the expected GNH evaluated from $m$ samples; and (2) the value of $d_\mathcal{F}$ \eqref{eq:lis_conv} between the most recent two updates of the expected GNH, for LIS convergence monitoring.}
 \Require{At step $n$, given the state $v_n$, LIS basis $\Theta_r$, projected empirical posterior covariance $\Sigma_r$, and operators $\{\cA, \cB, \cG\}$ induced by $\{\basis_r, \cD_r, \dt_r, \dt_\perp\}$, one step of the algorithm is:} 
 \State Compute a candidate $v' = q(v_n, \cdot\,;\, \cA, \cB, \cG)$ using either LI-prior or LI-Langevin
 \State Compute the acceptance probability $\alpha(v_n, v')$ 
 \IIf{${\rm Uniform}(0, 1] < \alpha(v_n, v')$} $v_{n+1} = v'$ 
 \IElse $v_{n+1} = v_n$
 \Comment{Accept/reject $v'$}
 \State $\textsc{UpdateFlag} = {\bf false}$
 \If{ ${\rm rem}(n\! + \!1, n_{\rm lag}) \! = \! 0 \And m \! < \! n_{\text{max}} \And d_\mathcal{F} \! \geq \!\Delta_{\rm LIS}$ }
 \Comment{Update LIS using selected samples}
 \State $\Call{UpdateLIS}{\Theta_m, \Xi_m, v_{n+1}, \Theta_{m+1}, \Xi_{m+1}, \Theta_{r'}, \Xi_{r'}}$
 \State $\Call{ProjectCov}{\Theta_r, \Sigma_r, \Theta_{r'}, \Sigma_{r'}}$
 \State Update the LIS convergence diagnostic $d_\mathcal{F}$ as in \eqref{eq:lis_conv}
 \State $\Theta_r = \Theta_{r'}$, $\Xi_{r} = \Xi_{r'}$, $\Sigma_r = \Sigma_{r'}$, $m = m+1$
 \State $\textsc{UpdateFlag} = {\bf true}$
 \Else \Comment{The LIS is fixed}
 \State Perform rank-$1$ update of the empirical covariance $\Sigma_r$ given $\Theta_r^* v_{n+1}^{}$
 \IIf{$ {\rm rem}(n\!+\!1, n_{\rm b}) = 0 $} $\textsc{UpdateFlag} = {\bf true}$ 
 \EndIf
 \If{$\textsc{UpdateFlag}$}
 \State $\Call{UpdateCov}{\Theta_r,\Sigma_r,\Psi_{r},\cD_r}$
 \State Update the operators $\{\cA, \cB, \cG\}$
 \EndIf
 \end{algorithmic}
 \caption{Adaptive function space MCMC with operator-weighted proposals.}
 \label{algo:adaptM}
\end{algorithm}

Generally speaking, the major computational cost of Algorithm \ref{algo:adaptM} is the evaluation of the acceptance probability at Step 2, which involves a computationally expensive likelihood evaluation and---when the Langevin proposals are used---one extra adjoint solve for the gradient evaluation. 
The proposal evaluation in Step 1 only incurs a minor computational cost, as the cost of applying the low-rank decompositions of the operators $\cA$, $\cB$, and $\cG$ is proportional to the discretization dimension $N$.
The computational cost of the LI-Prior proposal is therefore comparable to that of a pCN random walk proposal---or other types of random walk proposals used in the finite dimensional setting---and the computational cost of the LI-Langevin proposal is comparable to standard Metropolis-adjusted Langevin proposals. 
Purely local methods, such as stochastic Newton \cite{Martin_2012} and manifold MCMC \cite{Girolami_2011}, are much more computationally expensive than both LI-Prior and LI-Langevin, as they require high-order derivatives of the posterior at every step. 

Another non-negligible computational burden comes from the adaptive construction of LIS (Steps 5--10), where each iteration involves a number of linearized forward model and adjoint model evaluations for computing the eigendecomposition of the local GNH, as well as the incremental update of the factorized expected GNH in Algorithm \ref{algo:adaptS}. 
Here we employ a fixed-size adaptation strategy that limits the total number of local GNH eigendecompositions to be no more than $n_{\text{max}}$, and we terminate the iterative updates once the subspace convergence diagnostic \eqref{eq:lis_conv} drops below a specified threshold $\Delta_{\rm LIS}$.
Furthermore, to ensure that essentially uncorrelated posterior samples are used to estimate the expected GNH, sub-sampling with lag $n_{\rm lag}$ is used in Step 7. 
Since the LIS basis is low-rank, the computational costs of updating the posterior covariance (Step 16) and the operators (Step 17) are only proportional to the discretization dimension $N$, and thus do not dominate the cost of the forward solve.
These costs are further suppressed by only renewing these two terms every $n_{\text{b}}$ iterations during MCMC sampling. In the numerical experiments below, we choose $n_{\text{b}} = 50$.

We also note that once the additive construction of LIS terminates, we simulate the Markov chain using the operator-weighted proposal built from a \textit{fixed} LIS, where the empirical covariance is henceforth updated only within this fixed $r$-dimensional subspace. 
In this situation, existing adaptive MCMC theory \cite{Haario_2001,  AM_adapt_2006, Atchade_2006, Roberts_2007, saksman2010ergodicity} can be used to establish ergodicity of the chain. 

\begin{algorithm}[h]
 \begin{algorithmic}[1]
 \Require{During the LIS construction, we have $\{\Theta_m, \Xi_m\}$ and $d_\mathcal{F}$ as in Algorithm \ref{algo:adaptM}.}
 \Require{At step $n$, given the state $v_n$, LIS basis $\Theta_r$, projected empirical posterior covariance $\Sigma_r$, and operators $\{\cA_r, \cB_r, \cG_r\}$ and $\{\cA_\perp, \cB_\perp\}$ induced by $\{\basis_r, \cD_r, \dt_r, \dt_\perp\}$, one step of the algorithm is:} 
 \State Compute a LIS candidate $v' = q(v_n, \cdot\,;\, \cA_r, \cB_r, \cG_r)$ using either MGLI-Prior or MGLI-Langevin
 \State Compute the acceptance probability $\alpha(v_n, v')$ 
 \IIf{${\rm Uniform}(0, 1] < \alpha(v_n, v')$} $\tilde{v} = v'$ 
 \IElse $\tilde{v} = v_n$
 \EndIIf
 \Comment{Accept/reject $v'$}
  \State Compute a CS candidate $v'' = q(\tilde{v}, \cdot\,;\, \cA_\perp, \cB_\perp)$ using \eqref{eq:gibbs2}
 \State Compute the acceptance probability $\alpha(\tilde{v}, v'')$ 
 \IIf{${\rm Uniform}(0, 1] < \alpha(\tilde{v}, v'')$} $v_{n+1} = v''$ 
 \IElse $v_{n+1} = \tilde{v}$
 \Comment{Accept/reject $v''$}
 \State Follow Steps 4--14 of Algorithm \ref{algo:adaptM} to update the LIS and sample covariance within the LIS  
 \If{$\textsc{UpdateFlag}$}
 \State $\Call{UpdateCov}{\Theta_r,\Sigma_r,\Psi_{r},\cD_r}$
 \State Update the operators $\{\cA_r, \cB_r, \cG_r\}$ and $\{\cA_\perp, \cB_\perp\}$
 \EndIf
 \end{algorithmic}
 \caption{Adaptive function space MCMC with operator-weighted proposals, using Metropolis-within-Gibbs updates.}
 \label{algo:adaptMWG}
\end{algorithm}

Just as the all-at-once proposals of Section~\ref{sec:basic_opt} are integrated into the adaptive framework of Algorithm~\ref{algo:adaptM}, the Metropolis-within-Gibbs proposals given in Section~\ref{sec:mwg_opt} can be integrated into Algorithm \ref{algo:adaptMWG}. Here the operators $\{\cA_r, \cB_r, \cG_r\}$ and $\{\cA_\perp, \cB_\perp\}$---for the LIS update and the CS update, respectively---are determined by $\{\basis_r, \cD_r, \dt_r, \dt_\perp\}$.
Two proposal and accept/reject steps are involved in each iteration of Algorithm \ref{algo:adaptMWG}, and thus it doubles the number of likelihood evaluations per iteration relative to Algorithm \ref{algo:adaptM}.
The computational cost of all the other components remains the same. 

}

\subsection{Benchmark algorithms}
\label{sec:bench}
In the numerical examples of Sections~\ref{sec:elliptic} and Section~\ref{sec:cd}, we will benchmark our operator-weighted proposals by comparing their performance with two representative proposals from the literature, one that is dimension independent and another that exploits posterior Hessian information.

\begin{proposal}[\textbf{pCN-RW:} Prior-preconditioned Crank-Nicolson random walk proposal \cite{BRSV_2008}]
\label{prop:pcn}
By setting $\gamma = 0$ in \eqref{eq:pcn} and applying the transformation in Definition \ref{def:trans_v}, we obtain the proposal: 
\[
v' = a v + \sqrt{1 - a^2} \rand .
\]
where $a \in (-1,1).$
\end{proposal}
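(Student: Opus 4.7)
The statement is a derivation rather than a substantive theorem: it simply records the form the pCN random walk takes after the whitening change of variables in Definition \ref{def:trans_v}. My plan is a direct substitution. I begin with the pCN proposal \eqref{eq:pcn}, set $\gamma = 0$ to suppress the gradient drift, and obtain
\[
u' = a(u - \prmean) + \prmean + \sqrt{1-a^2}\, \prcov^{\half} \rand.
\]
Since this random walk leaves the prior $\mu_0 = \normal(\prmean, \prcov)$ invariant, the natural choice of reference point in Definition \ref{def:trans_v} is $u_{\rm ref} = \prmean$, i.e., $m_{\rm ref} = 0$; this keeps the transformed reference measure a standard normal centered at the origin.

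Next I subtract $\prmean$ from both sides and apply $\prcov^{-\half}$. Using $v = \prcov^{-\half}(u - \prmean)$ and likewise for $v'$, the expression collapses immediately to
\[
v' = a v + \sqrt{1 - a^2}\, \rand,
\]
which is exactly the stated proposal. The admissibility condition $a \in (-1,1)$ is inherited from $a = (2 - \dt)/(2 + \dt)$ with $\dt > 0$, as already noted just after \eqref{eq:pcn}.

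The only nontrivial bookkeeping concerns the interpretation of $\prcov^{-\half}$, which is not defined on all of $\hilbert$: one should view it as the inverse of $\prcov^{\half}$ on $\im(\prcov^{\half})$, so that $\prcov^{-\half}(\prcov^{\half} \rand) = \rand$ is consistent with the remark in Section \ref{sec:op_weighted} that draws $\rand \sim \normal(0, \cI)$, while not in $\hilbert$, are square-integrable in the weighted space $\im(\prcov^{-\half})$. I do not anticipate any real obstacle beyond this point of care. As a side observation, retaining a nonzero $m_{\rm ref}$ in Definition \ref{def:trans_v} would produce an extra drift $(a-1)\prcov^{-\half} m_{\rm ref}$ on the right-hand side, reflecting that the chain in the shifted whitened coordinate would have invariant measure $\normal(-\prcov^{-\half} m_{\rm ref}, \cI)$ rather than the standard normal; the stated form of \textbf{pCN-RW} therefore implicitly selects $m_{\rm ref} = 0$.
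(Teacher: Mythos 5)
Your derivation is correct and is exactly the route the paper intends: set $\gamma=0$ in \eqref{eq:pcn}, subtract the reference point, and apply $\prcov^{-\half}$ as in Definition \ref{def:trans_v}; the paper states the result without writing out these steps. Your side remark that the stated form implicitly takes $m_{\rm ref}=0$ (otherwise an extra drift $(a-1)\prcov^{-\half} m_{\rm ref}$ appears) is accurate and a worthwhile clarification, but it does not change the substance.
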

Note that the gradient term could have been included by setting $\gamma = 1$ in \eqref{eq:pcn}. In the numerical examples described below, however, the performance of such a pCN Langevin proposal \cite{CRSW_2012} is very close to that of the pCN-RW proposal. For brevity, we will therefore only report on pCN-RW in our numerical comparisons.
\begin{proposal}[\textbf{H-Langevin:} Hessian-preconditioned explicit Langevin proposal] 
\label{prop:h_mala}
Let $\hessian_{\rm M}$ denote the Hessian of the data-misfit functional evaluated at the MAP estimate of the parameter $v$.
Precondition the Langevin SDE \eqref{eq:langevin} with the inverse Hessian of the OMF, $ \left ( \hessian_{\rm M} + \cI \right )^{-1}$, and discretize this SDE explicitly, arriving at a proposal of the form:
\begin{eqnarray*}
v' & = & \left( \cI - \dt (\hessian_{\rm M} + \cI)^{-1} \right) v - \dt (\hessian_{\rm M} + \cI)^{-1} D_v \Phi(v; y) + \sqrt{2 \dt (\hessian_{\rm M} + \cI)^{-1}} \rand \\
 & = & \cA_H v - \cG_H D_v \Phi(v; y) + \cB_H \rand.
\end{eqnarray*}
Now take a low rank approximation of $\hessian_{\rm M}$,
\[
\hessian_{\rm M} \approx \basis_{\rm M} \cD_{\rm M} {\basis_{\rm M}^\ast}.
\]
to obtain
\begin{eqnarray*}
\cA_H & = & \cI - \dt \left( \basis_{\rm M} \left(- \cD_{\rm M} (\cD_{\rm M} + \cI_r)^{-1} \right) {\basis_{\rm M}^\ast} + \cI \right), \\
\cB_H & = & \sqrt{2\dt} \left( \basis_{\rm M} \left( (\cD_{\rm M} + \cI_r)^{-\half} - \cI_r \right) {\basis_{\rm M}^\ast} + \cI \right), \\
\cG_H & = & \dt \left( \basis_{\rm M} \left( - \cD_{\rm M} (\cD_{\rm M} + \cI_r)^{-1} \right) {\basis_{\rm M}^\ast} + \cI \right) .
\end{eqnarray*}
\end{proposal}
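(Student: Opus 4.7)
The plan is to verify the stated formulas for $\cA_H$, $\cB_H$, and $\cG_H$ by (i) explicitly discretizing the preconditioned Langevin SDE in whitened coordinates to obtain the unreduced operators, and then (ii) using the spectral decomposition of the rank-$r$ approximation of $\hessian_{\rm M}$ to evaluate operator functions of $\hessian_{\rm M} + \cI$ in closed form. Because the statement is a derivation of the proposal's explicit form rather than a deep theorem, the argument is essentially algebraic; no well-definedness claim on function space is being asserted here beyond what the low-rank structure provides.

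First I would write the Langevin SDE \eqref{eq:langevin} in the whitened parameter $v = \prcov^{-\half}(u - u_{\rm ref})$, where the prior precision becomes $\cI$, giving $dv = -\precondition\left(v + \gradientv\right)d\tau + \sqrt{2\precondition}\,\white(\tau)$. Applying the preconditioner $\precondition = (\hessian_{\rm M} + \cI)^{-1}$ and an explicit Euler step with time step $\dt$ yields
\[
v' = \bigl(\cI - \dt(\hessian_{\rm M} + \cI)^{-1}\bigr)v - \dt(\hessian_{\rm M} + \cI)^{-1}\gradientv + \sqrt{2\dt}\,(\hessian_{\rm M} + \cI)^{-\half}\rand,
\]
which identifies $\cA_H = \cI - \dt(\hessian_{\rm M} + \cI)^{-1}$, $\cG_H = \dt(\hessian_{\rm M} + \cI)^{-1}$, and $\cB_H = \sqrt{2\dt}\,(\hessian_{\rm M} + \cI)^{-\half}$.

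Next I would substitute $\hessian_{\rm M} \approx \basis_{\rm M}\cD_{\rm M}\basis_{\rm M}^{\ast}$ with $\basis_{\rm M}$ having orthonormal columns, so that for any analytic function $f$ on the spectrum one has the functional-calculus identity
\[
f(\hessian_{\rm M} + \cI) \;=\; \basis_{\rm M}\bigl(f(\cD_{\rm M} + \cI_r) - f(1)\,\cI_r\bigr)\basis_{\rm M}^{\ast} + f(1)\,\cI.
\]
Applied with $f(x) = x^{-1}$, the diagonal identity $(\cI_r + \cD_{\rm M})^{-1} - \cI_r = -\cD_{\rm M}(\cD_{\rm M} + \cI_r)^{-1}$ recovers exactly the stated form of $\cA_H$ and $\cG_H$; applied with $f(x) = x^{-\half}$ it yields the stated form of $\cB_H$.

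The main (and only mild) obstacle is verifying that the functional-calculus identity above is legitimate in the operator-theoretic sense used here: it relies on the fact that, since $\basis_{\rm M}$ has orthonormal columns, $\hessian_{\rm M} + \cI$ acts as $\cD_{\rm M} + \cI_r$ on $\rm span(\basis_{\rm M})$ and as the identity on its orthogonal complement, so that the inverse and square root split along this direct sum. Once that splitting is written down, the remaining algebra is routine verification and the proof closes.
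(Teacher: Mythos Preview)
Your derivation is correct and is precisely the natural route to the stated formulas. The paper itself does not supply a separate proof of this proposal; it simply states the discretized form and the low-rank expressions for $\cA_H$, $\cB_H$, $\cG_H$ as a definition, leaving the verification implicit. Your two steps---explicit Euler on the whitened Langevin SDE with $\precondition=(\hessian_{\rm M}+\cI)^{-1}$, followed by the functional-calculus identity $f(\hessian_{\rm M}+\cI)=\basis_{\rm M}\bigl(f(\cD_{\rm M}+\cI_r)-f(1)\cI_r\bigr)\basis_{\rm M}^\ast+f(1)\cI$---are exactly what is needed, and the diagonal algebra $(\cD_{\rm M}+\cI_r)^{-1}-\cI_r=-\cD_{\rm M}(\cD_{\rm M}+\cI_r)^{-1}$ is verified correctly.
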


This H-Langevin proposal essentially combines preconditioned MALA \cite{Roberts_2002} with the low-rank Hessian approximation used in the stochastic Newton algorithm \cite{Martin_2012}.
But instead of using a location-dependent preconditioner as in simplified manifold MALA~\cite{Girolami_2011} and stochastic Newton, it uses a constant preconditioner $(\hessian_{\rm M} + \cI)^{-1}$, thus avoiding the computational burden of evaluating the local Hessian at each iteration. 
The same principle has recently been used in \cite{Petra_stochnewton2013} to modify the stochastic Newton algorithm.
The main difference between the H-Langevin proposal and the variants of stochastic Newton proposed in \cite{Petra_stochnewton2013} is that H-Langevin employs an adjustable discretization step $\dt$, whereas \cite{Petra_stochnewton2013} uses $\dt = 1$.

%%%%%%%%%%%%%%%%%%%%%%%%%%%%%%%%%%%%%%%%%%%%%%%%%%%%%%%%%%%%%%%%%%%%%%

% !TEX root = dili_main.tex

\section{Example 1: Elliptic PDE}
\label{sec:elliptic}

Our first numerical example is an inverse problem of an elliptic PDE, i.e., inferring the transmissivity field of the Poisson equation. 
We use this test case to compare the efficiency of the new operator-weighted proposals described in Section~\ref{sec:lis_operators} with the benchmark proposals of Section~\ref{sec:bench}.
We also demonstrate the advantages of using multiple posterior samples to construct the LIS and the invariance of the LIS under grid refinement.

\subsection{Problem setup}
Consider the spatial domain $\Omega = [0, 1]^2$  with boundary $\partial \Omega$. We denote the spatial coordinate by $s \in \Omega$. Let $\kappa(s)$ be the transmissivity field, $p(s)$ be the potential function, and $f(s)$ the forcing term. The potential function for a given realization of the transmissivity field is governed by 
\begin{equation}
\left\{ 
\begin{array}{lcll}
-\nabla \cdot \left( \kappa(s) \nabla p(s) \right) & = & f(s), & s \in \Omega \\
\left \langle \kappa(s) \nabla p(s), \vec{n}(s)  \right \rangle & = & 0, & s \in \partial \Omega
\end{array}
\right.
\label{eq:forward_e}
\end{equation}
where $\vec{n}(s)$ is the outward normal to the boundary. To make a well-posed boundary value problem, a further condition
\begin{equation}
\int_{\partial \Omega} p(s) d l(s)= 0,
\label{eq:bnd}
\end{equation}
is imposed.
The source/sink term $f(s)$ is defined by the superposition of four weighted Gaussian plumes with standard deviation $0.05$, centered at $[0.3, 0.3]$, $[0.7, 0.3]$, $[0.7, 0.7]$, $[0.3, 0.7]$, with weights $\{2, -3, -2, 3\}$.
The system of equations (\ref{eq:forward_e}) is solved by the finite element method with bilinear elements on a uniform $40\times 40$ grid. 

The transmissivity field is endowed with a log-normal prior distribution, i.e.,
\begin{equation}
\label{eq:prior_e}
\kappa(s) = \exp(u(s)), \; {\rm and} \; u(s) \sim \normal\left(0, \cC \right),
\end{equation}
where the covariance operator $\cC$ is defined through an exponential kernel function
\begin{equation}
\label{eq:corr}
c(s, s') = \sigma_u^2 \exp\left( - \frac{\|s - s'\|}{2 s_0} \right), \; {\rm for} \; s, s' \in \Omega.
\end{equation}
In this example, we set the prior standard deviation $\sigma_u = 1.25$ and the correlation length $s_0 = 0.0625$.
To make the inverse problem more challenging, we use a ``true'' transmissivity field that is not directly drawn from the prior distribution. The true transmissivity field---i.e., the field used to generate noisy observations---and the corresponding potential are shown in Figures \ref{fig:setup_e}(a) and (b), respectively. 

\begin{figure}[h!]
\centerline{\includegraphics[width=0.75\textwidth]{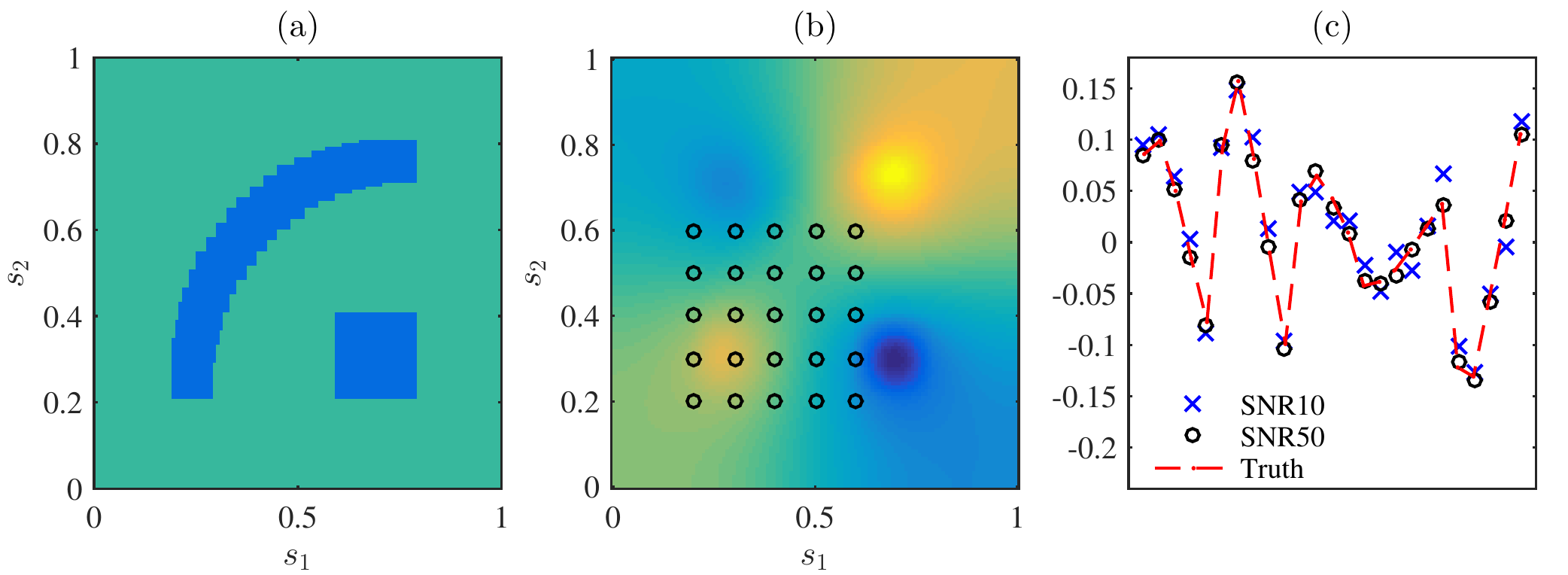}}
\caption{Setup of the elliptic example. (a) The true transmissivity field. (b) The potential $p(s)$ computed from the transmissivity field defined in (a), with measurement locations indicated by circles. (c) Data sets: the dashed line represents the noise-free potential at each measurement location, while crosses and circles are observations of the potential perturbed with Gaussian noise of signal-to-noise ratios 10 and 50, respectively.}
\label{fig:setup_e}
\end{figure}

Partial observations of the potential function are collected at $25$ measurement sensors as shown by the  circles in Figure \ref{fig:setup_e}(b).  The corresponding observation operator $\cM$ yields the outputs
\[
y = \cM p(s) + e, \quad e \sim \normal(0, \sigma^2 \cI_{25}),
\]
where the observed data is $y \in  \R^{25}$.  
Since amount of information carried in the likelihood function affects the structure of the LIS, we use two cases with different noise magnitudes to illustrate this relationship.
Let the signal-to-noise ratio (SNR) be defined as $\max_s\{u(s)\}/\sigma$.
The first data set is generated with signal to noise ratio 10 (SNR10), and the second set is generated with signal to noise ratio 50 (SNR50).
Both data sets are shown in Figure \ref{fig:setup_e}(c).
\added{We note that changing the signal-to-noise ratio will have an impact on the LIS dimension similar to changing the number of measurements; both affect the amount of information carried in the data.}

\subsection{Sampling efficiency}
\label{sec:samplingcompare}

We now compare the posterior sampling performance of our operator-weighted proposals (Proposals \ref{prop:lis_prior}--\ref{prop:mglis_mala}) with that of the pCN-RW and H-Langevin proposals (Proposals \ref{prop:pcn} and \ref{prop:h_mala}). To build the LIS, the adaptive sampling procedure in Section~\ref{sec:adaptsamp} is run for $2\times 10^5$ MCMC iterations, from which $1000$ posterior samples are selected, one every $n_{\rm lag} = 200$ iterations, to compute the expected Hessian. Then we sample the posterior distribution for $10^6$ MCMC iterations using the operator-weighted proposals and the benchmark proposals. 
Samples from the second half of these iterations are used to estimate the autocorrelation of the Onsager-Machlup functional \eqref{eq:omf} and of the parameter $u$ projected onto selected eigenfunctions of the prior covariance operator. As the metrics of sampling efficiency below will indicate, these sample sizes are significantly larger than one might use in practice, particularly with the DILI proposals. Our goal here, however, is to characterize the mixing properties of each chain without any potential bias resulting from burn-in and finite sample size. 
Also, our comparisons of mixing along particular directions in the parameter space use projections onto prior covariance eigenfunctions because these projections provide a \textit{common} basis for comparison---in contrast to projections $w$ onto modes of the LIS, which can be different for different proposals. Moreover, evaluating performance on LIS modes could conceivably yield results that are artificially favorable to the DILI proposals.

For the SNR10 case, Figure~\ref{fig:SNR10_v} shows the autocorrelation functions of selected components of the $v$ parameter. Note that we chose the square root  $\prcov^{\half}$ in Definition~\ref{def:trans_v} such that the selected components of $v$ correspond to projections of $u$ onto the 1st, 10th, 100th, and 1000th eigenfunctions of the prior covariance (i.e., Karhunen-Lo\`{e}ve modes). 
Overall, MGLI-Langevin outperforms the other proposals, in some cases quite dramatically.
On the low-index eigenfunctions (1 and 10), the operator-weighted proposals and the H-Langevin proposal yield similar mixing, as indicated by the decay of the autocorrelation. 
Yet the operator-weighted proposals outperform H-Langevin on the higher-index eigenfunctions, as these directions in the parameter space tend to be less constrained by the data; indeed, they are more aligned with the CS. This effect can already be observed for 100th eigenfunction and becomes more significant for the 1000th.
Chains produced by pCN-RW decorrelate very slowly compared to the other samplers.

The autocorrelation of the OMF, shown in Figure \ref{fig:SNR10_omf} along with trace plots for two of the chains, provides an alternative summary of sampling performance. Again, we observe that the operator-weighed proposals yield shorter autocorrelation times than H-Langevin and pCN-RW. 

Figure \ref{fig:SNR50_v} compares the MGLI-Langevin, H-Langevin, and pCN-RW samplers for the SNR50 case.
In parameter directions corresponding to the smoother prior covariance eigenfunctions (e.g., components 1, 10, and 100 of $v$), the MGLI-Langevin and H-Langevin proposals produce very similar autocorrelations.
Yet, as in the SNR10 case, MGLI-Langevin significantly outperforms H-Langevin for parameter directions corresponding to the high-index eigenfunctions, e.g., $v_{1000}$.
The pCN-RW chain again decorrelates very slowly compared to the other proposals. 
Looking at the OMF in Figure \ref{fig:SNR50_omf}, we observe relative performances similar to the SNR10 case: the MGLI-Langevin proposal outperforms both pCN-RW and H-Langevin.  

\begin{figure}[h!]
\centerline{\includegraphics[width=0.75\textwidth]{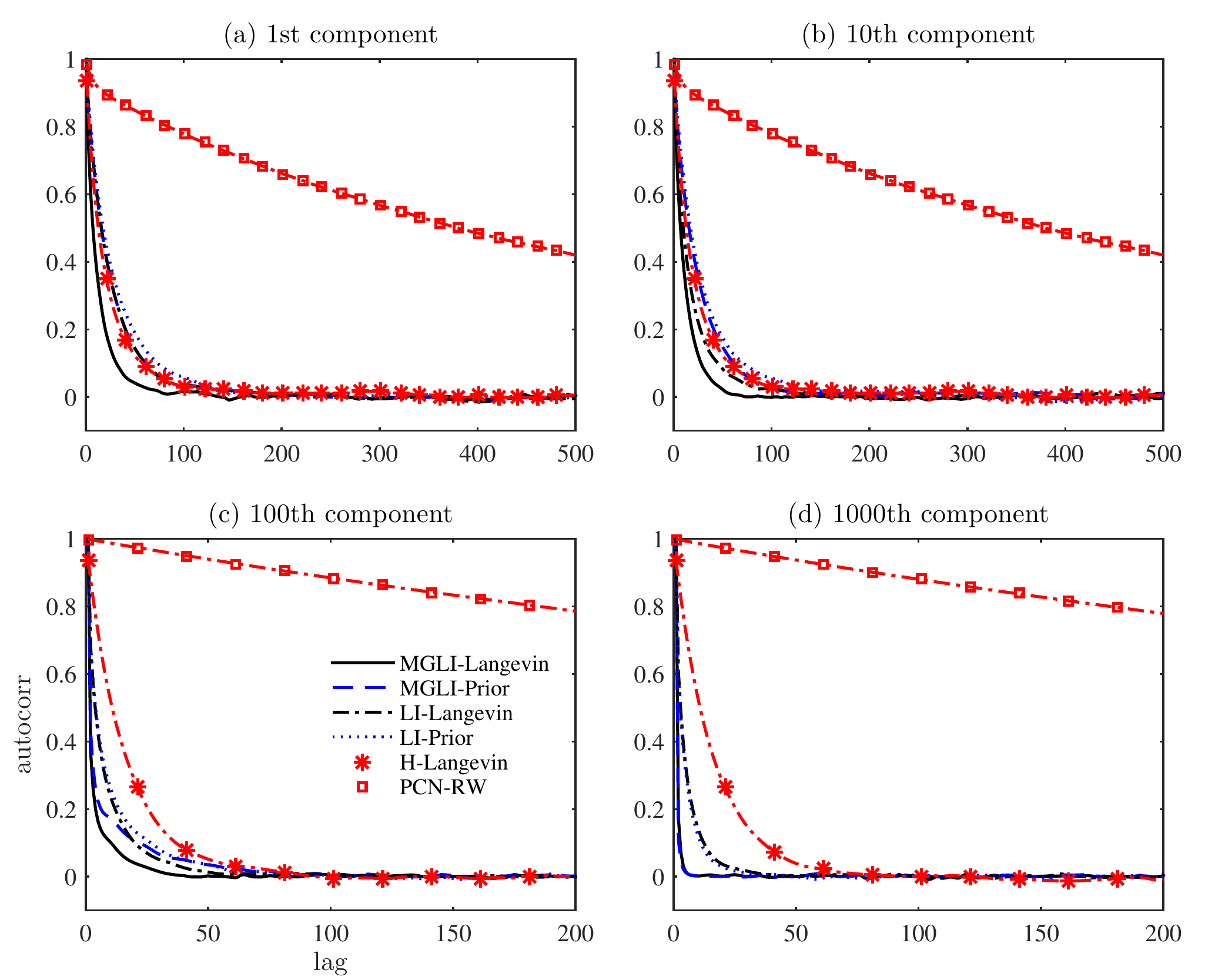}}
\caption{Autocorrelation functions of selected components of the $v$ parameter produced by various proposals, SNR10 case. (a) Projection onto the first eigenfunction of the prior covariance; (b) Projection onto the second eigenfunction; (c) Projection onto the 100th eigenfunction; (d) Projection onto the 1000th eigenfunction.}
\label{fig:SNR10_v}
\end{figure}

\begin{figure}[h!]
\centerline{\includegraphics[width=0.75\textwidth]{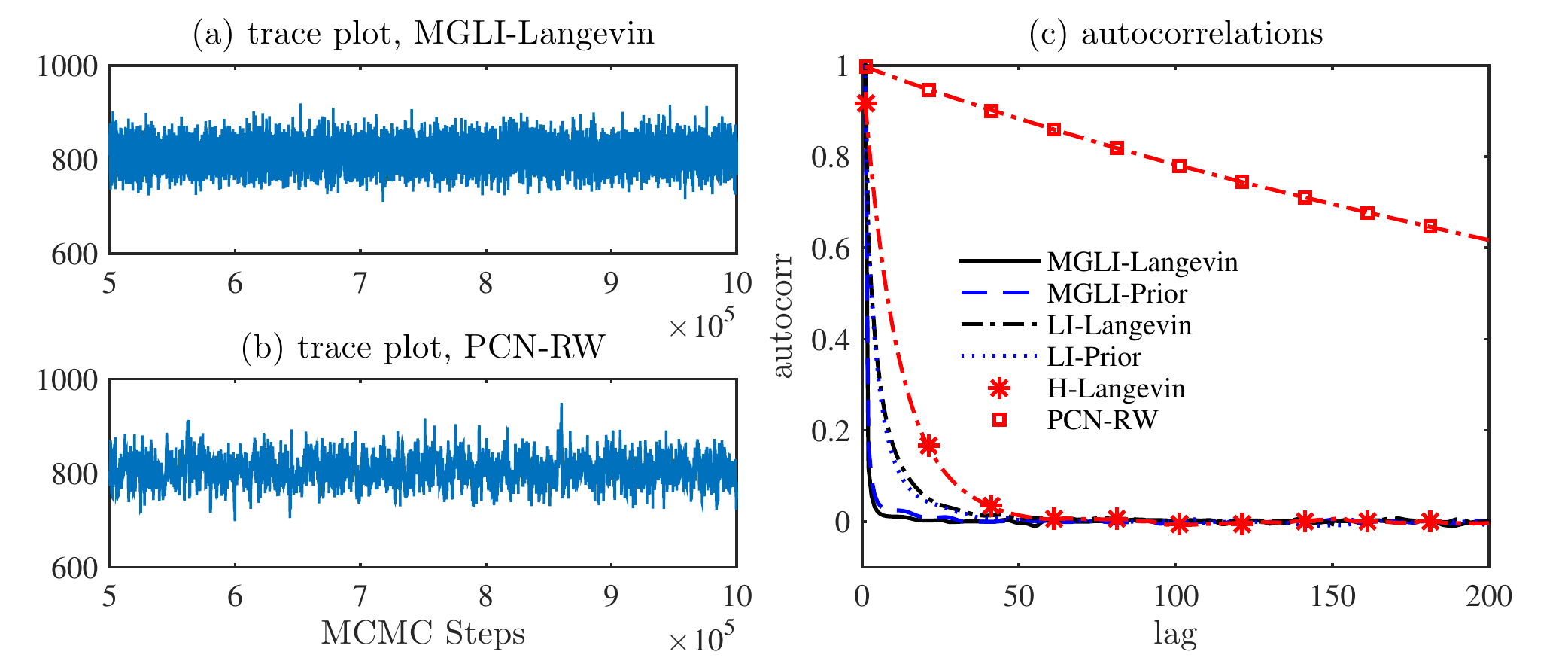}}
\caption{Summaries of MCMC mixing via the Onsager-Machlup functional, SNR10 case. (a) Trace plot of OMF values produced by MCMC with the MGLI-Langevin proposal. (b) Trace plot of OMF values produced by MCMC with the pCN-RW proposal. (c) Autocorrelation of OMF values produced by various proposals.}
\label{fig:SNR10_omf}
\end{figure}

\begin{figure}[h!]
\centerline{\includegraphics[width=0.75\textwidth]{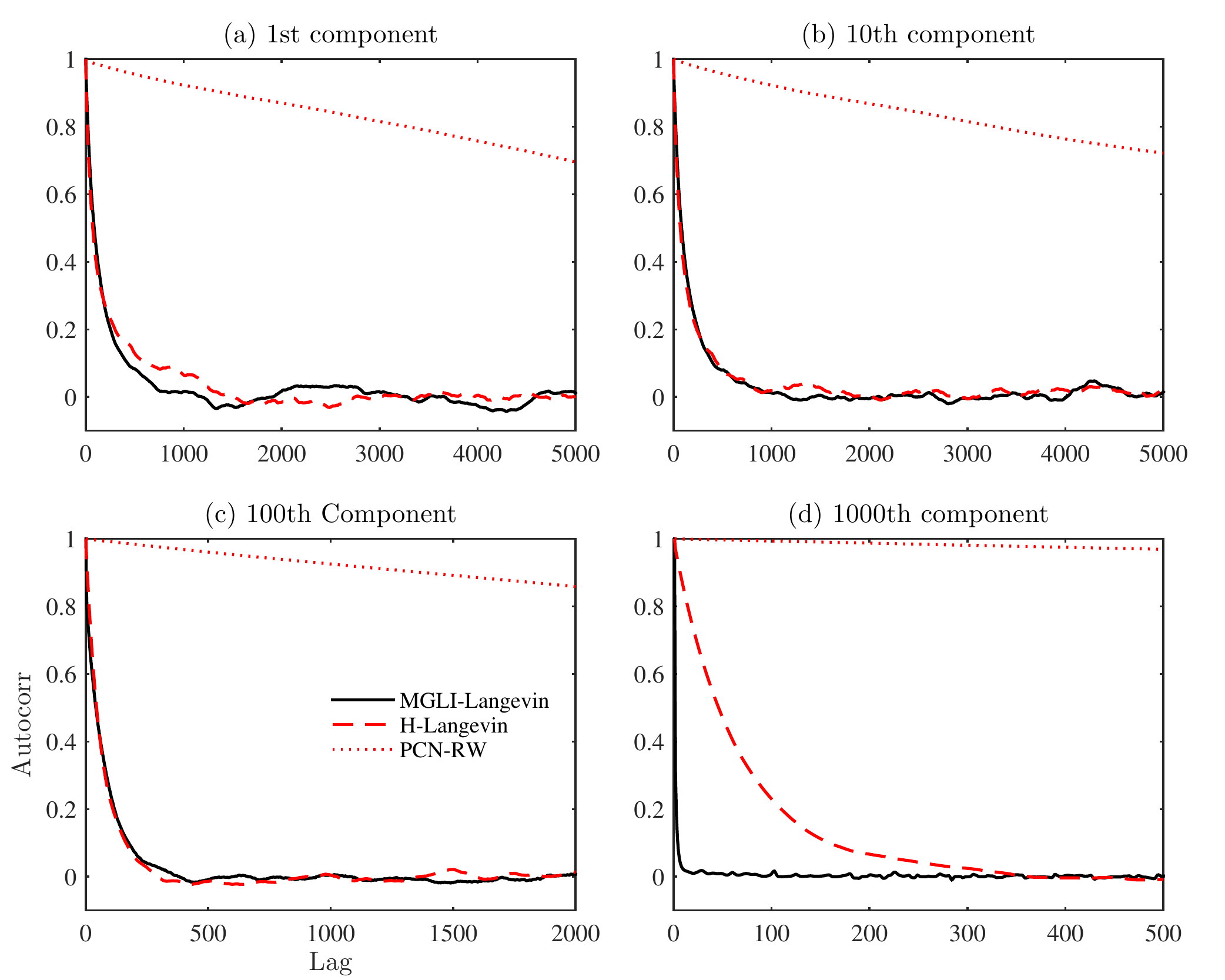}}
\caption{Autocorrelation functions of selected components of the $v$ parameter produced by various proposals, SNR50 case.}
\label{fig:SNR50_v}
\end{figure}

\begin{figure}[h!]
\centerline{\includegraphics[width=0.75\textwidth]{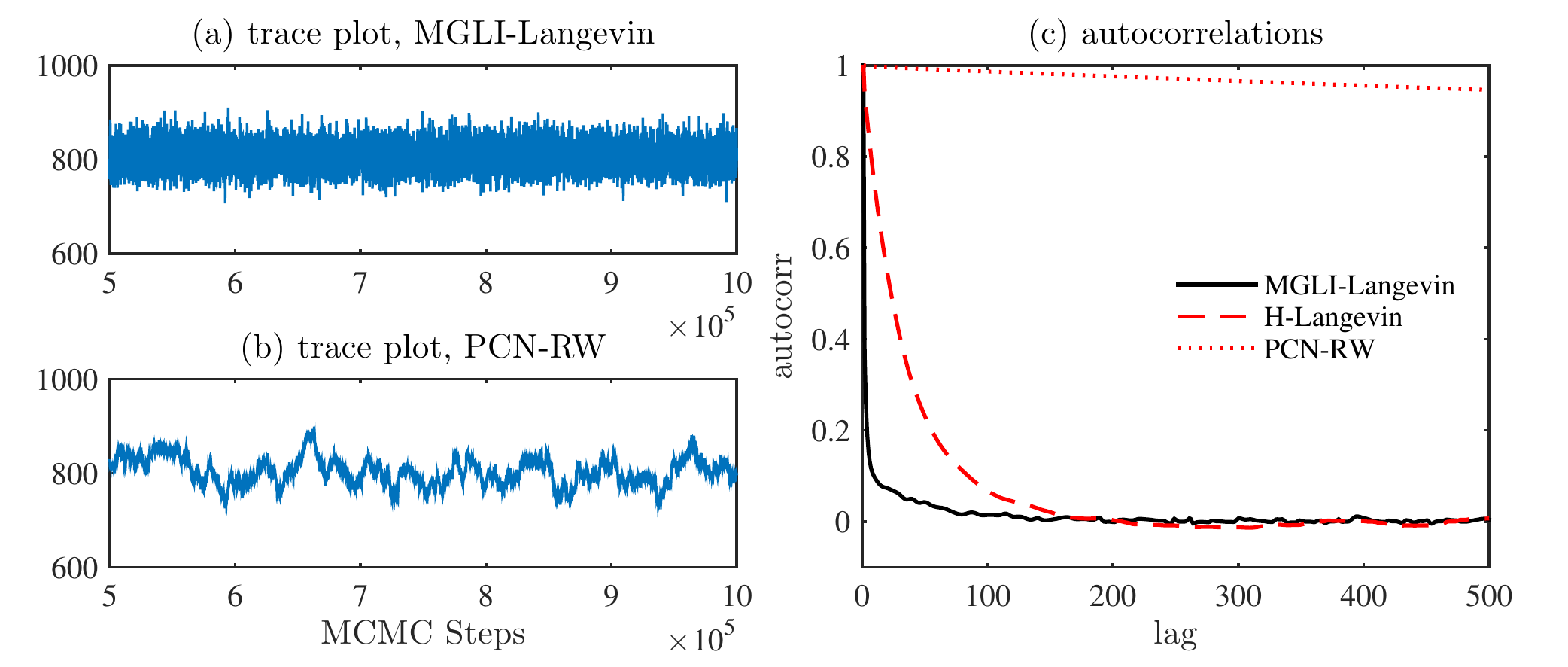}}
\caption{Summaries of MCMC mixing via the OMF, SNR50 case.}
\label{fig:SNR50_omf}
\end{figure}

In both of these test cases, H-Langevin and the operator-weighted proposals have comparable sampling efficiency along parameter directions that are dominated by the likelihood; this is not surprising, as both use Hessian information from the log-likelihood to scale proposed moves. For directions along which the likelihood has a relatively weaker influence than the prior, however, the operator-weighted proposals have better sampling efficiency than H-Langevin.
To further explore this pattern, Figure \ref{fig:auto_lag1} shows the lag-1 autocorrelation of all 1600 components of $v$, for both the SNR10 and  SNR50 cases (subfigures \ref{fig:auto_lag1}(a) and \ref{fig:auto_lag1}(b), respectively).
The star symbols represent the lag-1 autocorrelation of MCMC samples produced by H-Langevin while the square symbols represent results from MGLI-Langevin. 
We observe that the lag-1 autocorrelations produced by H-Langevin stay almost constant across components of $v$ (i.e., across the prior eigenfunctions).
In comparison, the lag-1 autocorrelations produced by MGLI-Langevin decrease significantly for the components of $v$ corresponding to the high-index eigenfunctions, and are lower on average than the autocorrelations from H-Langevin.
The other operator-weighted proposals produce similar results, though the lag-1 autocorrelations of the lowest-index eigenfunctions are not always smaller with the operator-weighed proposals than with H-Langevin. For brevity, these results are not shown. 

\begin{figure}[h!]
\centerline{\includegraphics[width=0.75\textwidth]{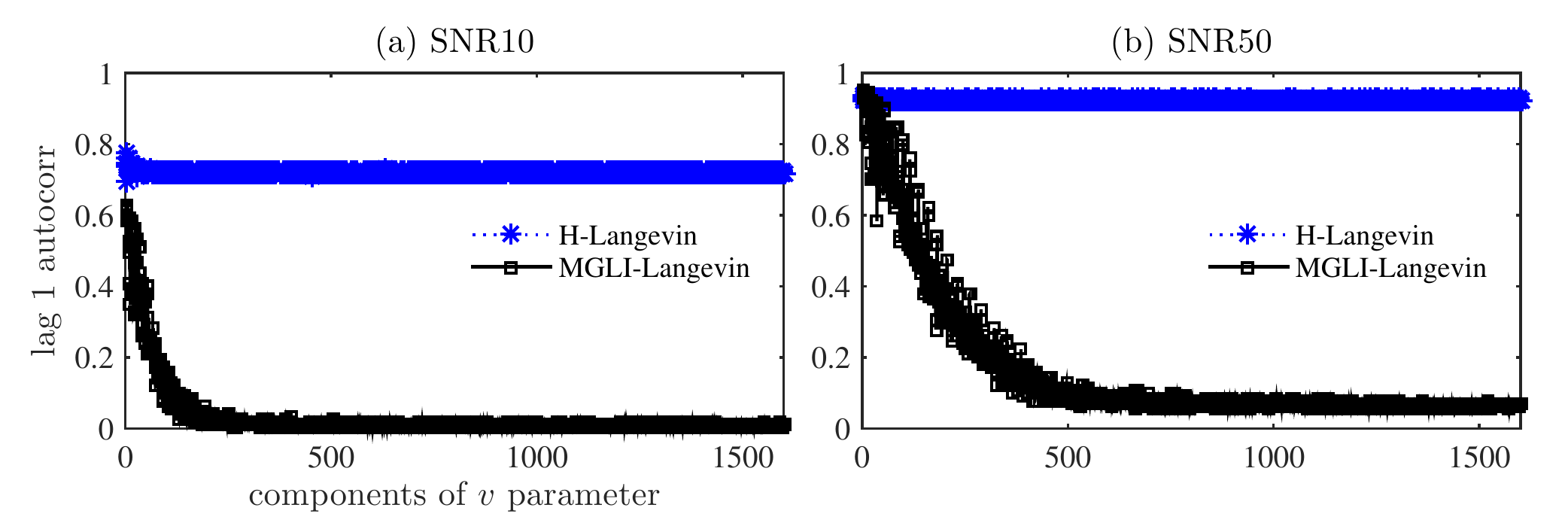}}
\caption{The lag-1 autocorrelation of each component of the $v$ parameter. Star and square symbols are the results of H-Langevin and MGLI-Langevin, respectively. (a) SNR10 case. (b) SNR50 case.}
\label{fig:auto_lag1}
\end{figure}

\added{
In the SNR10 case, the CPU times spent, after adaptive LIS construction, on $10^6$ iterations of MCMC with the LI-Prior, LI-Langevin, pCN-RW, and H-Langevin proposal are $0.88\times 10^5$ seconds, $1.12\times 10^5$ seconds, $0.85\times 10^5$ seconds and $1.25\times 10^5$ seconds, respectively. 
Here the CPU time of LI-Prior is comparable to that of pCN-RW, and the CPU time of LI-Langevin is comparable to that of H-Langevin. 
We also note that the Langevin proposals do not spend much additional CPU time on gradient evaluations compared to LI-Prior and pCN-RW, since in the present example the underlying PDE is self-adjoint and hence the forward solution can be recycled for the adjoint evaluation. 
The CPU times spent on $10^6$ MCMC steps with MGLI-Prior and MGLI-Langevin are $1.70 \times 10^5$ seconds and $2.10\times 10^5$ seconds, respectively. 
These essentially double the CPU times of LI-Prior and LI-Langevin, as expected. 
The CPU time spent on the adaptive construction of the LIS using $1000$ posterior samples is about $0.23 \times 10^5$ seconds, which is not significant relative to the time spent on MCMC sampling.
We note that using $1000$ samples here is a rather conservative choice. From our experience in \cite{Cui_LIS_2014} and the convergence study shown next (see Figure~\ref{fig:SNR50_conv}), about $500$ samples are sufficient to estimate an LIS with good accuracy.  
For the SNR50 case, the CPU times spent on each of the proposals and on the LIS construction are similar to those of the SNR10 case, and thus are not reported for brevity. 
}

\subsection{Global versus local LIS}

It is useful to quantify the impact of ``globalizing'' the LIS on sampling efficiency. In particular, we wish to contrast the performance of operator-weighted proposals constructed from a global LIS, based on the posterior expected Hessian (as in Section \ref{sec:lis}), with that of operator-weighted proposals constructed from a single local LIS, based on the Hessian at the MAP. 
For brevity, we limit our comparisons to Proposal~\ref{prop:mglis_mala} (MGLI-Langevin). 
The global LIS is constructed using the adaptive sampling strategy detailed at the start of Section~\ref{sec:samplingcompare}. Results produced with the global LIS are labeled `Adapt-LIS' in the figures below. Results obtained with the MGLI-Langevin proposal employing a local LIS at the MAP are denoted `MAP-LIS.' Note that the local LIS in the current problem has dimension at most 25, since the inverse problem has 25 observations. The global LIS can be much larger, since it accounts for posterior variation in the dominant eigenspace of the Hessian. In the current setup, the global LIS for the SNR10 case has dimension $r=66$ while for the SNR50 case it has dimension $r=193$; both of these are obtained with truncation thresholds $\tau_{\rm loc} = \tau_{\rm g} = 0.01$. 

Figures \ref{fig:SNR10_lis_map} and \ref{fig:SNR50_lis_map} summarize relative sampling performance for both the SNR10 and SNR50 cases. We show the autocorrelation of the OMF for each algorithm and the lag-1 autocorrelations of each component of $v$.
In both test cases, using a global LIS produces better mixing than a single local LIS. 
The improvement in mixing for the OMF seems slightly more pronounced for the larger-noise case (SNR10) than the smaller-noise case (SNR50). This may be due to the broader posterior of the former; since the nonlinear forward model is the same in both cases, the variation of the Hessian becomes more significant when wider ranges of the parameter space are explored. On the other hand, the lag-1 autocorrelations show greater improvements in the SNR50 case. Here, the higher dimension of the global LIS relative to the local LIS (193 to 25 in the SNR50 case versus 66 to 25 in the SNR10 case) may play a role. The global LIS is larger in the SNR50 case because the data carry more information.

\begin{figure}[h!]
\centerline{\includegraphics[width=0.75\textwidth]{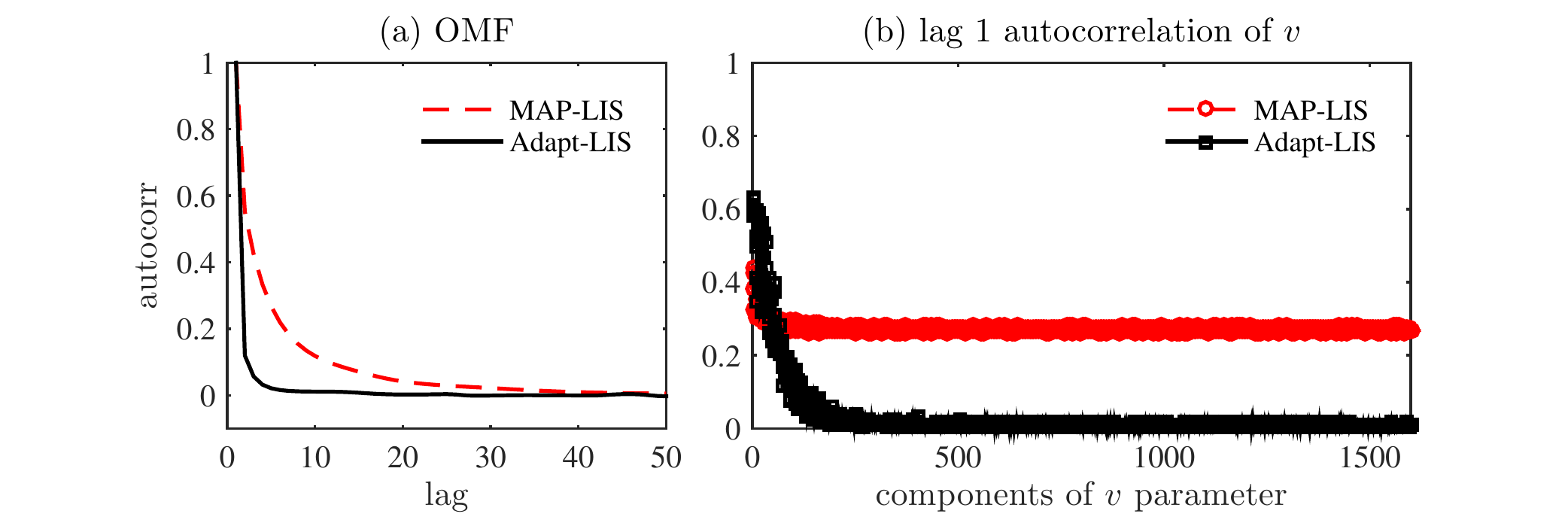}}
\caption{Sampling performance of Adapt-LIS (solid line) and MAP-LIS (dashed line), SNR10 case. (a) Autocorrelation of the OMF. (b) Lag-1 autocorrelation of each component of the $v$ parameter.}
\label{fig:SNR10_lis_map}
\end{figure}

\begin{figure}[h!]
\centerline{\includegraphics[width=0.75\textwidth]{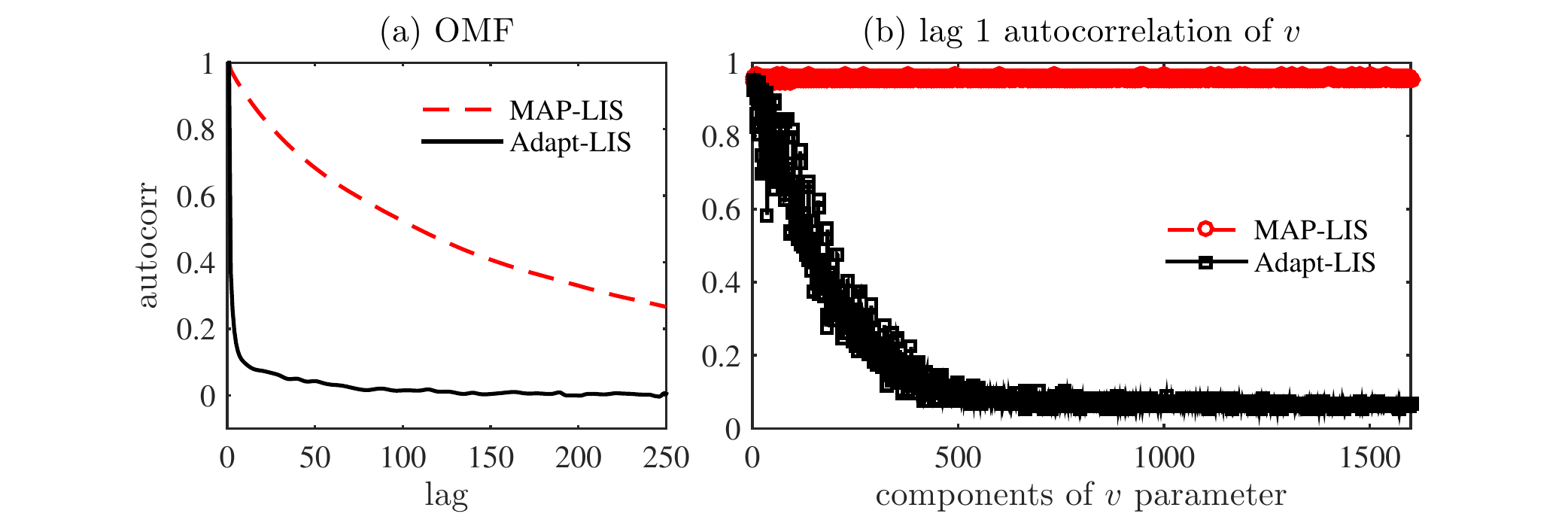}}
\caption{Sampling performance of Adapt-LIS (solid line) and MAP-LIS (dashed line), SNR50 case.}
\label{fig:SNR50_lis_map}
\end{figure}

\subsection{LIS under grid refinement}

We now explore how grid refinement affects the dimensionality and structure of the global LIS, as well as the convergence of the adaptive procedure for constructing it. In the examples below, the transmissivity field $\kappa(s)$ and the associated potential function $p(s)$ are discretized on $40 \times 40$, $80 \times 80$, and $120 \times 120$ grids.

Figure \ref{fig:SNR50_conv} shows the dimension of the LIS and the convergence diagnostic \eqref{eq:lis_conv} versus the number of samples used in the adaptive construction process, for the SNR50 case only.
\begin{figure}[h!]
\centerline{\includegraphics[width=0.75\textwidth]{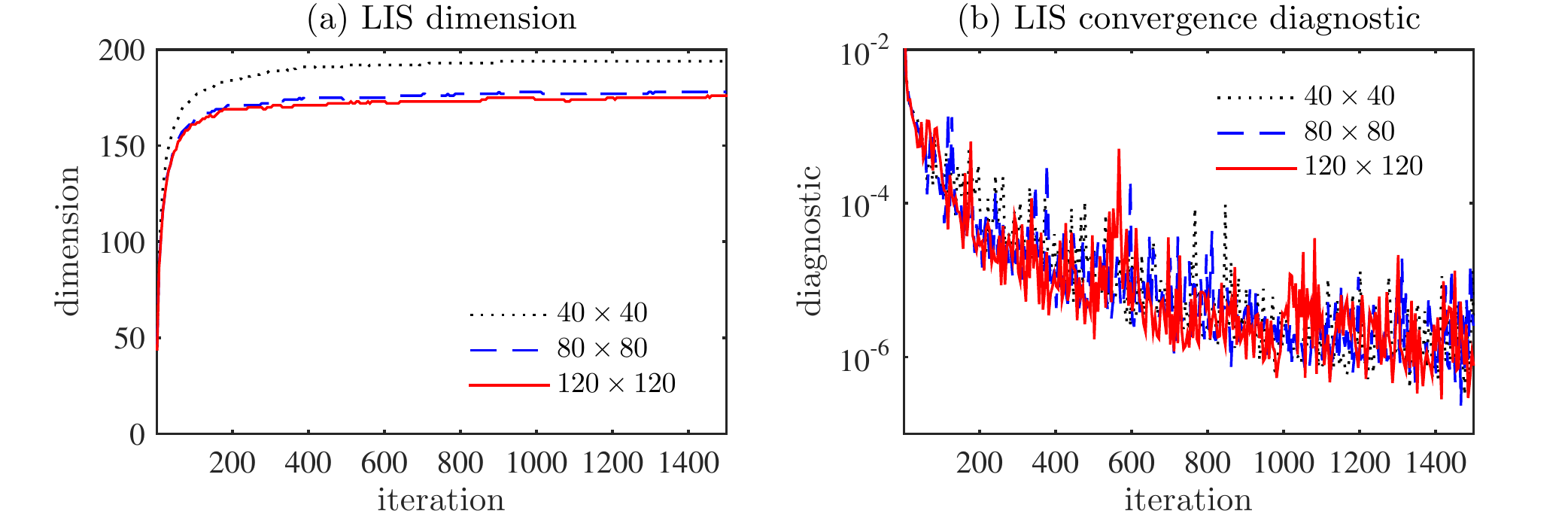}}
\caption{The dimension of the LIS and the subspace convergence diagnostic \eqref{eq:lis_conv} versus the number of samples used in the adaptive algorithm; SNR50 case. Dotted, dashed, and solid lines represent the $40\times 40$, $80\times 80$, and $120\times 120$ grids, respectively. (a) The dimension of the LIS. (b) The convergence diagnostic.}
\label{fig:SNR50_conv}
\end{figure}
For all three discretizations, the distance \eqref{eq:lis_conv} between likelihood-informed subspaces at adjacent iterations drops by several orders of magnitude over the course of the adaptive sampling procedure, as shown in Figure \ref{fig:SNR50_conv}(b). The rates of convergence of this distance are comparable for all three discretizations. 
After a few hundred samples, the dimensionality of LIS also converges for all three discretizations. Note that the $40\times40$ grid yields a slightly higher-dimensional LIS than the two more refined grids: at the end of the adaptive procedure, the LIS of the $40\times 40$ grid has dimension $r=193$, while the $80\times 80$ and $120\times 120$ grids yield $r=178$ and $r=176$, respectively. This effect can be ascribed to discretization errors on the $40\times 40$ grid; since the forward model converges under grid refinement, we expect the dimension of the associated LIS also to converge. The first five basis vectors of the LIS for the various discretizations are plotted in Figure \ref{fig:SNR50_basis}. Similar structures are observed at different levels of grid refinement.

\begin{figure}[h!]
\centerline{\includegraphics[width=0.75\textwidth]{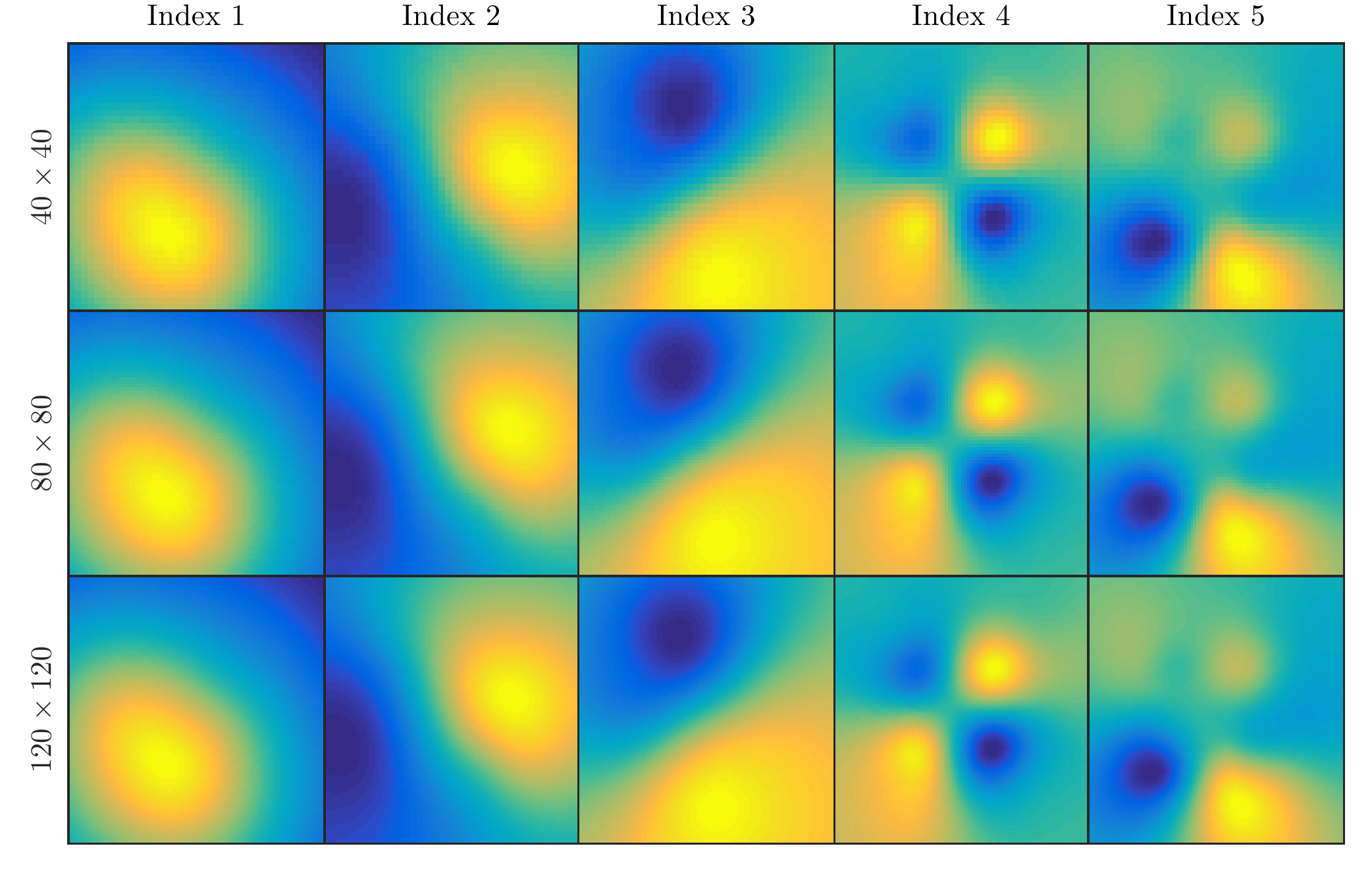}}
\caption{The first five basis functions of the LIS for three different levels of grid refinement; SNR50 case. Top row: $40\times 40$ grid, middle row: $80\times 80$ grid, and bottom row: $120\times 120$ grid.}
\label{fig:SNR50_basis}
\end{figure}

% !TEX root = dili_main.tex

\section{Example 2: Conditioned diffusion}
\label{sec:cd}

In this section, we use noisy point-wise observations of the path of a particle, diffusing in a double-well potential, to infer the driving force on the particle and hence its pushforward to the path itself.  This example is motivated by applications in molecular dynamics.

\subsection{Problem setup}

Consider a stochastic process $p : [0,T] \rightarrow \R$ governed by the following Langevin SDE:
\begin{equation}
\label{eq:langevin1d}
dp_t = f(p_t)dt + du_t, \quad p_0 = 0,
\end{equation}
where $f: \R \rightarrow \R$ is globally Lipschitz and $du_t$ is an increment of the Brownian motion $u \sim \mu_0 = \normal(0,C)$, where $C(t,t')= \min (t,t')$. 
Let the function $f$ have the following form 
\begin{equation*}
f( p ) := \beta p (1-p^2)/(1+p^2),
\end{equation*}
with $\beta>0$.  The corresponding potential is $E(p) = -\int_{-\infty}^{p} f(s) ds$. From any initial condition $p_0$, the state $p_t$ approaches the invariant measure of the SDE, which has a density $Z^{-1} \exp\left (-2E(p) \right )$ with respect to Lebesgue measure, where $Z = \int_{\R} \exp\left (-2E(s) \right ) ds$. 
Since the potential $E$ has a double-well shape, the invariant measure is bimodal and paths of \eqref{eq:langevin1d} will transition from one well to the other, with a probability dependent on the magnitude of the stochastic forcing and on the scale $\beta$ of the potential. We will use $\beta =10$ in the examples below.
This model is ubiquitous in the sciences, perhaps most notably in molecular dynamics where it represents the motion of a particle with negligible mass trapped in an energy potential $E$ with thermal fluctuations represented by the Brownian forcing.

We refer the reader to \cite{HSV_2011} for a proof that the map $u \mapsto p$ is continuous and differentiable from $C([0,T],\R) \rightarrow C([0,T],\R)$. The fact that $\mu_0(C([0,T],\R)) = 1$ follows from the well-known property that continuous functions have probability one under Wiener measure. Therefore, this Bayesian inverse problem is well-defined \cite{HSV_2011}, and it fits into the framework developed in Section \ref{sec:lkd_informed}. 
The observation operator is defined by 
$\cM p := [p_{t_1},p_{t_2},\dots,p_{t_{20}}]^T$, and we let
$$
y = \cM p + e, \quad e \sim  \normal(0,\sigma^2 I_{20}),
$$
where $\sigma=0.1$ and the observation times $t_i$ are equispaced within the interval $[0,10]$. An Euler-Maruyama scheme is used for integration with $\Delta t=10^{-2}$, and hence the dimension of our approximation of this infinite-dimensional pathspace is $N=1000$. 

\begin{figure}[h!]
\centerline{\includegraphics[width=0.75\textwidth]{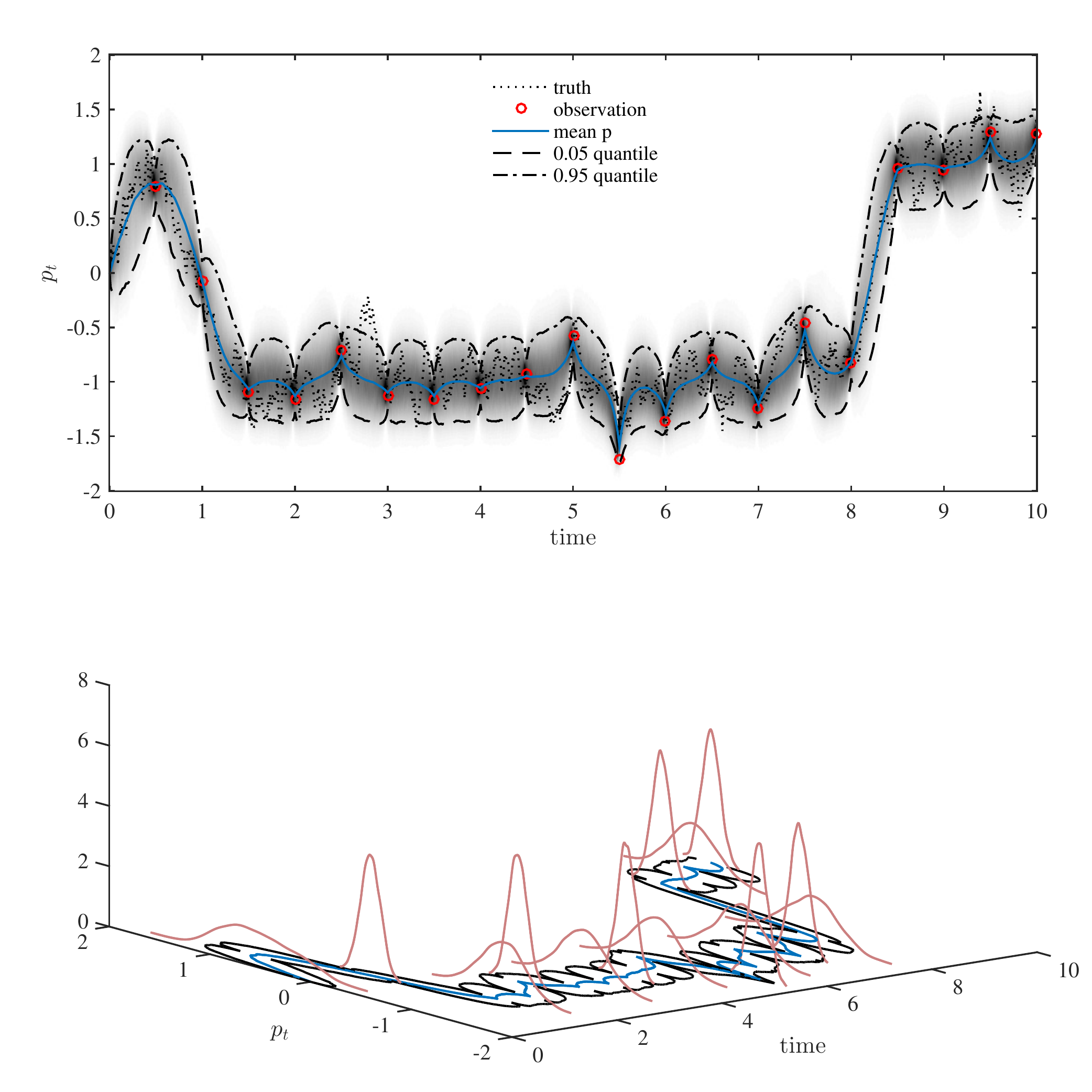}}
\caption{Setup of the conditioned diffusion example. Top: the true trajectory, the noisy observations, the posterior mean of the trajectory, the $0.05$ and $0.95$ quantiles of the trajectory marginalized at each time. The posterior density of the trajectory marginalized at each time is shown as the grayscale background image.
Bottom: the posterior density of the trajectory marginalized at selected times.}
\label{fig:cd_setup}
\end{figure}

\begin{figure}[h!]
\centerline{\includegraphics[width=0.75\textwidth]{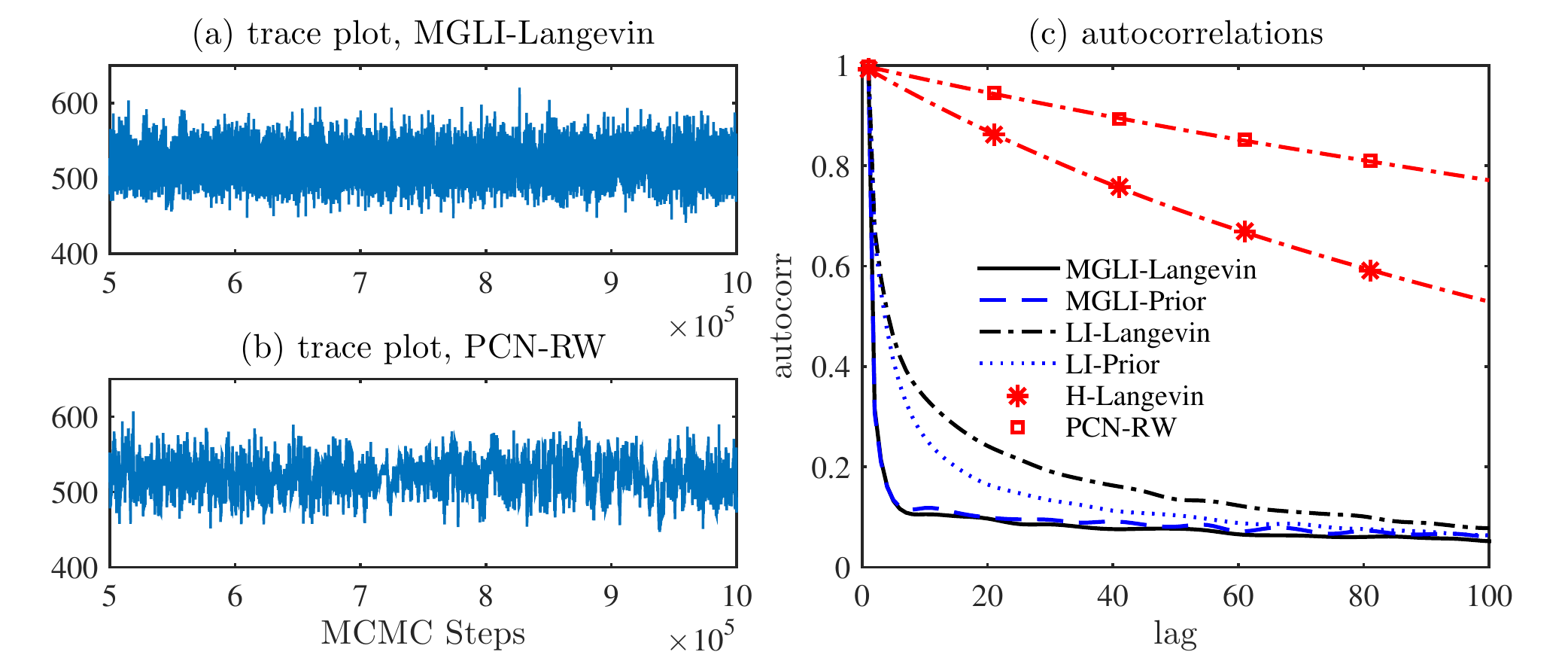}}
\caption{Summaries of MCMC mixing for the conditioned diffusion example. (a) Trace plot of the OMF produced by MGLI-Langevin. (b) Trace plot of the OMF produced by pCN-RW. (c) Autocorrelation of the OMF, as produced by various proposals.}
\label{fig:cd_omf}
\end{figure}
 
\begin{figure}[h!]
\centerline{\includegraphics[width=0.4\textwidth]{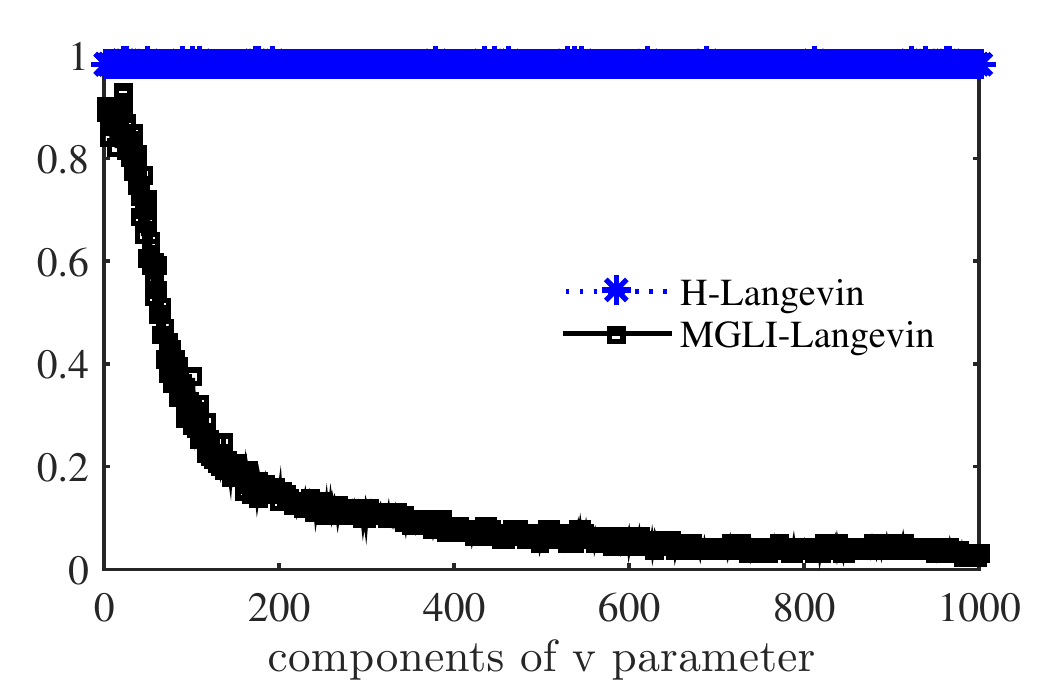}}
\caption{The lag-1 autocorrelation of each component of the $v$ parameter for the conditioned diffusion example. Star and square symbols are produced by H-Langevin and MGLI-Langevin proposals, respectively. }
\label{fig:cd_auto_lag1}
\end{figure}

\begin{figure}[h!]
\centerline{\includegraphics[width=0.75\textwidth]{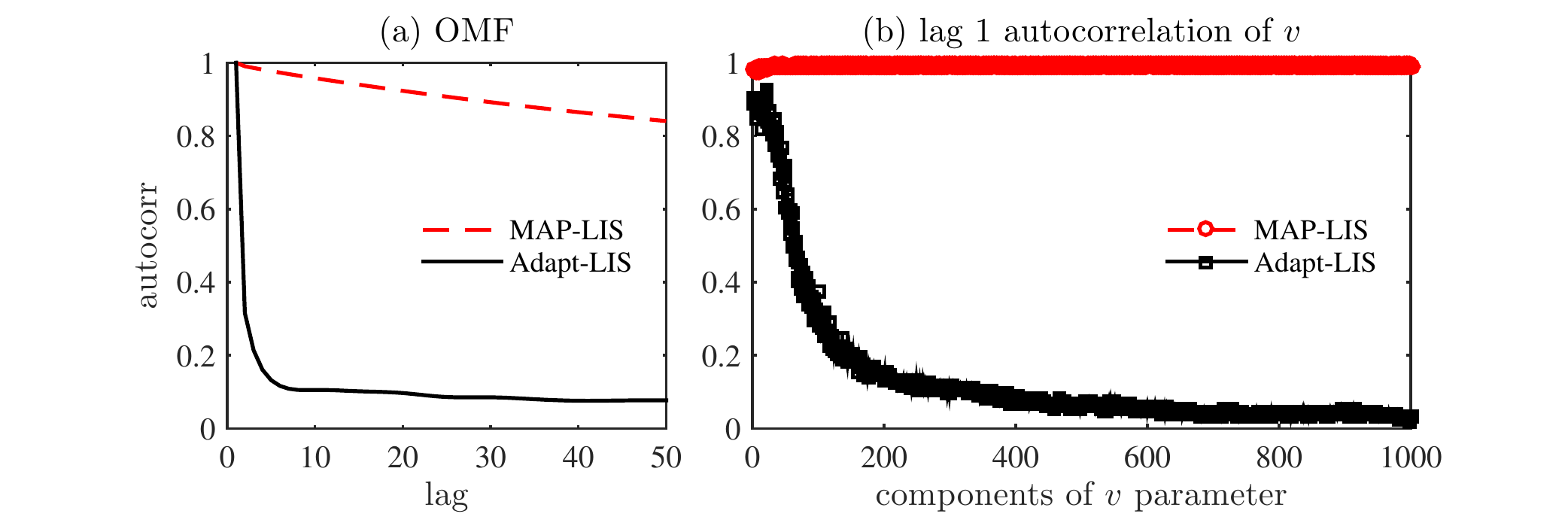}}
\caption{Sampling performance of MGLI-Langevin with a global LIS (Adapt-LIS, solid line) versus a local LIS (MAP-LIS, dashed line) for the conditioned diffusion example. (a) Autocorrelation of the OMF. (b) Lag-1 autocorrelation of each component of the $v$ parameter.}
\label{fig:cd_lis_map}
\end{figure}

\subsection{Numerical results}

We now evaluate the efficiency of the MCMC proposals from Section~\ref{sec:lis_operators} and Section \ref{sec:bench} on the conditioned diffusion example. Figure \ref{fig:cd_setup} illustrates the solution to the forward problem taken as the truth, the noisy observations of this particular path, and the results of subsequent conditioning---i.e., aspects of the posterior. In particular, we show the posterior mean, marginals of the posterior density at each time step, and the $0.05$ and $0.95$ quantiles of the local marginals. 
Note that the time interval $t \in [0,10]$ considered here is long enough to capture two transitions between the potential wells. Our numerical experiments use the same chain lengths and burn-in intervals as in the elliptic example, detailed in Section~\ref{sec:samplingcompare}. We also use the same number of iterations as before to adaptively construct the global LIS.

The numerical results in Figures~\ref{fig:cd_omf}--\ref{fig:cd_lis_map} highlight the strong performance of the DILI samplers developed in this paper. Figure~\ref{fig:cd_omf} summarizes the sampling efficiency of each MCMC scheme via traces and autocorrelations of the OMF. Not only do the DILI algorithms outperform pCN, but they also show dramatically improved mixing over H-Langevin; here, H-Langevin performs almost as poorly as pCN. This contrasts with the elliptic PDE cases examined in Section \ref{sec:elliptic}, where the performance of H-Langevin was reasonably close to that of the DILI proposals. Plots of lag-1 autocorrelation in Figure \ref{fig:cd_auto_lag1} reinforce this notion, showing much faster decay of lag-1 autocorrelation with MGLI-Langevin than with H-Langevin across the first 1000 eigenfunctions of the prior. 

Figure~\ref{fig:cd_lis_map} shows how mixing performance is improved by using a global LIS rather than a local LIS based on the Hessian at the MAP. Both traces in this figure use the MGLI-Langevin algorithm, but those marked `Adapt-LIS' use a global LIS constructed via the adaptive algorithm of Section~\ref{sec:adaptsamp}. The global LIS clearly yields a much faster decay of the OMF autocorrelation and better mixing, as measured by lag-1 autocorrelations, across all components of $v$.

% !TEX root = dili_main.tex

\section{Conclusions}
\label{sec:conc}

This paper has introduced a general class of operator-weighted MCMC proposals that are well defined on function space. In Bayesian inference problems that require exploring a posterior distribution over functions, these proposals yield MCMC sampling performance that is invariant under refinement of the discretization of these functions, and hence dimension-independent. 
While this class includes earlier dimension-independent algorithms \cite{BRSV_2008, CRSW_2012, Proposal_Law_2012} as particular cases, we use the versatility offered by our new family of proposals to design samplers that exploit posterior structure wherein departures from the prior---including non-Gaussianity---are concentrated on a finite number of directions, captured by a global likelihood-informed subspace (LIS). The global LIS is identified by approximating the posterior expectation of the Hessian of a preconditioned data-misfit functional. A further adaptive strategy based on this decomposition yields computationally affordable approximations of the posterior covariance. All of this geometric information is used to construct \textit{dimension-independent} and \textit{likelihood-informed} (DILI) proposal distributions; the four variations studied here (Proposals~\ref{prop:lis_prior}--\ref{prop:mglis_mala}) also make use of local gradient information and Metropolis-within-Gibbs updates. Numerical experiments for two nonlinear inverse problems suggest that these DILI samplers offer significant gains in sampling efficiency over current state-of-the-art algorithms.

This work can be extended in many directions. First, one can certainly consider alternative constructions of the LIS. Of particular interest are parallel methods for constructing the global LIS, strategies for infinite batch-updating of the LIS that are compatible with adaptive MCMC, and even directed strategies for optimally exploring the variation of the preconditioned data-misfit Hessian. 
If only gradients of the data-misfit function are available, one might resort to so-called active subspace methods \cite{Russi_2010, active_subspace_2014} as an alternative way of building up LIS-like information.  If gradients are also not available, then a different approach might be taken, for example adaptively building covariance information from samples and using regularized estimates to build an approximate LIS.

Second, many other operator-weighted proposal constructions are possible. For example, given the approximation of the 
empirical posterior covariance $\Sigma = \basis_r^{} \left( \cD_r^{} - \cI_r^{} \right) \basis_r^{\ast} + \cI$ defined in \eqref{eq:low_rank_cov}, we can define the operators 
\begin{eqnarray*}
\cA & = & (a_r^{} - a_\perp^{}) \basis_r^{} \basis_r^{\ast} + a_\perp^{} \cI \\
\cB & = &  \basis_r^{} ( b_r^{} {\cD_r^{}}^{\half} - b_\perp^{} \cI_r^{} ) \basis_r^{\ast} + b_\perp^{} \cI \\
\cG & = & 0,
\end{eqnarray*}
where $a_r, a_\perp \in (-1,1)$, and $b_r$ and $b_\perp$ take values $\sqrt{1 - a_r^2}$ and $\sqrt{1 - a_\perp^2}$, respectively.
Given a set of posterior samples $\{u_1, \ldots, u_n\}$, by setting the reference variable $m_{\rm ref}$ to be the difference of the empirical posterior mean and the prior mean, i.e., $m_{\rm ref} = \frac{1}{n}\sum_{k=1}^n u_k - m_0$, the resulting proposal \eqref{eq:operator_weighted} has a Gaussian approximation of the posterior as its invariant distribution.  

Finally, while this work has focused on the design of global operators (e.g., $\cA$, $\cB$, and $\cG$ that do not depend on the local parameter value), there is much room to combine global and local operators. As an example of the latter, \cite{beskos2014stable} uses a semi-implicit discretization of a locally-preconditioned Langevin equation to derive a proposal that, under appropriate assumptions, yields a dimension-independent MCMC algorithm. The present work may provide a rather general way of constructing local preconditioners that satisfy these assumptions. In particular, the partition of parameter space induced by the LIS provides an opportunity to introduce a variety of efficient proposals on the LIS (e.g., locally-preconditioned manifold MALA, even RMHMC) while preserving dimension independence through a suitable discretization of a Langevin equation on the CS. Combining proposals in this manner extends beyond the notion of local preconditioning. We leave a fuller development of these ideas to future work.

%%%%%%%%%%%%%%%%%%%%%%%%%%%%%%%%%%%%%%%%%%%%%%%%%%%%%%%%%%%%%%%%%%%%%%

\section*{Acknowledgments}

T.\ Cui and Y.~M.\ Marzouk acknowledge financial support from the US Department of Energy, Office of Advanced Scientific Computing Research (ASCR), under grant number DE-SC0009297, as part of the DiaMonD Multifaceted Mathematics Integrated Capability Center.
K.J.H.~Law was supported by the King Abdullah University of Science and Technology (KAUST) SRI UQ Center, and an Oak Ridge National Laboratory Directed Research and Development Strategic Hire grant.

%%%%%%%%%%%%%%%%%%%%%%%%%%%%%%%%%%%%%%%%%%%%%%%%%%%%%%%%%%%%%%%%%%%%%%

%\section*{References}

% !TEX root = dili_main.tex

\appendix

\section{Feldman-Hajek Theorem}
\label{sec:FH_theorem}
By the Feldman-Hajek Theorem (see Theorem 2.23 of \cite{DaPrato_1992}), a pair of Gaussian measures $\normal( m_1 , \Gamma_1 )$ and $\normal( m_2 , \Gamma_2 )$ are equivalent measures if and only if the following conditions are satisfied: 
\begin{enumerate}

\item The two measures have the same  Cameron-Martin space, i.e., $\im \left (\Gamma_1^{\half} \right ) = \im \left (\Gamma_2^{\half} \right)$.

\item The difference between the mean functions is in the Cameron-Martin space, i.e., $ m_1 - m_2 \in {\rm Im}(\Gamma_1^{\half})$.

\item The operator 
\(
\cT = \left(\Gamma_1^{-\half} \Gamma_2^{\half} \right)
\left(\Gamma_1^{-\half} \Gamma_2^{\half}\right)^{\ast} - \cI 
\)
is Hilbert-Schmidt.
\end{enumerate}

\section{Proof of Theorem \ref{theo:1}}
\label{sec:proof:t1}
Let $q(u, du')$ denote the proposal distribution defined by \eqref{eq:operator_weighted}, which has the form
\begin{eqnarray}
u' & = & \prcov^{\half}\cA\prcov^{-\half} (u - u_{\rm ref}) + u_{\rm ref} + \prcov^{\half}\cB \rand - \prcov^{\half} \cG \prcov^{\half} \gradientu \nonumber \\
 & = & \prcov^{\half}\cA\prcov^{-\half} (u - u_{\rm ref}) + u_{\rm ref} + \prcov^{\half} \basis \left( \cD_\cB \rand' - \cD_\cG \basis^{\ast} \prcov^{\half} \gradientu \right)
\end{eqnarray}
where $\rand \sim \normal(0, \cI)$, $\rand' = \basis^{\ast} \rand$, and $u_{\rm ref} = m_0 + m_{\rm ref}$ with $m_{\rm ref} \in \im(\prcov^{\half})$. We note that $\rand' \sim \normal(0, \cI)$ because $\basis$ is an unitary operator.

The term $\cD_\cB \rand' - \cD_\cG \basis^{\ast} \prcov^{\half} \gradientu$ can be decomposed as
\[
\left\{ \begin{array}{ll} b_i \rand_i' - g_i {u_i}^{\ast} \left( \prcov^{\half} \gradientu \right) & {\rm if} \; b_i \neq 0\\ 0 & {\rm if} \; b_i = 0\end{array} \right.
\]
Therefore the Gaussian laws for $\prcov^{\half} \basis \cD_\cB \rand' $ and
$\prcov^{\half} \basis \left( \cD_\cB \rand' - \cD_\cG \basis^{\ast} \prcov^{\half}
  \gradientu \right)$ are equivalent since, by assumption, $\prcov
  \gradientu \in \im(\prcov^{\half})$.

Following the argument in the proof of Theorem 4.1 of \cite{BRSV_2008}, we can define a simplified proposal distribution $\tilde{q}(u, du')$ by
\begin{eqnarray}
u' & = & \prcov^{\half}\cA\prcov^{-\half} (u - u_{\rm ref}) + u_{\rm ref} + \prcov^{\half} \basis \cD_\cB \rand' \nonumber \\
 & = & \prcov^{\half}\cA\prcov^{-\half} (u - u_{\rm ref}) + u_{\rm ref} + \prcov^{\half}\cB \rand,
\end{eqnarray}
where $\tilde{q}(u, du')$ is equivalent to $q(u, du')$.
Now we consider the pair of measures 
\begin{eqnarray}
\tilde{\nu}(du, du') & = & \tilde{q}(u, du')\mu_0(du) , \nonumber \\
\tilde{\nu}^\bot(du, du') & = & \tilde{q}(u', du) \mu_0(du'),
\end{eqnarray}
which are joint Gaussian measures on $(u, u')$.
Under the assumption that $\mu_y$ and $\mu_0$ are equivalent, $\tilde{\nu}$ and $\tilde{\nu}^\bot$ are equivalent to the measures ${\nu}(du, du') = q(u, du')\mu_y(du)$ and 
${\nu}^\bot(du, du') = q(u', du) \mu_y(du')$, respectively.
To provide a well-defined MH algorithm for the infinite dimensional posterior measure $\mu_y$, one requires that $\tilde{\nu}$ and $\tilde{\nu}^\bot$ are equivalent measures, i.e., that
\[
\frac{d\tilde{\nu}^\bot}{d\tilde{\nu}}(u, u') 
\]
is positive and bounded $\tilde{\nu}$-almost surely. 
In the rest of this proof, we will establish the equivalence of $\tilde{\nu}$ and $\tilde{\nu}^\bot$, given the conditions provided above.

The pair of Gaussian measures $\tilde{\nu}$ and $\tilde{\nu}^\bot$ have the form 
\begin{equation}
\label{eq:ref_measure}
\tilde{\nu}(du, du') = \normal\left( m_1 , \cC_1 \right) \; {\rm and} \;
\tilde{\nu}^\bot(du, du') = \normal\left( m_2 , \cC_2 \right).
\end{equation}
Recall that $u_{\rm ref} = m_0 + m_{\rm ref}$, where $m_{\rm ref} \in \im (\prcov^{\half})$. The mean functions of $\tilde{\nu}$ and $\tilde{\nu}^\bot$ have the form
\begin{equation}
\label{eq:mean_function}
m_1 = \left( \begin{array}{l} m_0 \\ m_0 + m_{\rm ref} - \prcov^{\half}\cA\prcov^{-\half} m_{\rm ref} \end{array} \right)
\; {\rm and} \;
m_2 = \left( \begin{array}{l} m_0 + m_{\rm ref} - \prcov^{\half}\cA\prcov^{-\half} m_{\rm ref} \\ m_0 \end{array} \right),
\end{equation} 
respectively. 
The covariance operators of $\tilde{\nu}$ and $\tilde{\nu}^\bot$ are
\begin{equation}
\label{eq:cov_operator}
\cC_1 = \cL \cQ_1 \cL^{\ast} \; {\rm and} \; \cC_2 = \cL \cQ_2 \cL^{\ast},
\end{equation}
respectively, where
\begin{equation*}
\cQ_1 =
\left( \begin{array}{cc}
\cI & \cD_\cA \\
\cD_\cA & \cD_\cA^2 + \cD_\cB^2
\end{array} \right), \;
\cQ_2 = 
\left( \begin{array}{cc}
\cD_\cA^2 + \cD_\cB^2 & \cD_\cA \\
\cD_\cA & \cI
\end{array} \right), \;{\rm and} \;
\cL = 
\left( \begin{array}{cc} \prcov^{\half}\basis & 0 \\ 0 & \prcov^{\half}\basis \end{array} \right).
\end{equation*}

We now apply the Feldman-Hajek (FH) Theorem (see \ref{sec:FH_theorem}) to
show that $\tilde{\nu}$ and $\tilde{\nu}^\bot$ are equivalent
measures, given that the conditions in Theorem \ref{theo:1} hold. 
First we consider the mean (condition (2) of the FH theorem).
The difference between the mean functions $m_1$ and $m_2$ is 
\[
m_1 - m_2 = \left( \begin{array}{l} \prcov^{\half}\cA\prcov^{-\half} m_{\rm ref} - m_{\rm ref} \\ m_{\rm ref} - \prcov^{\half}\cA\prcov^{-\half} m_{\rm ref} \end{array} \right),
\] 
by \eqref{eq:mean_function}.
Since $m_{\rm ref} \in \im(\prcov^{\half})$ and $\cA$ is a bounded operator, we have $m_1 - m_2 \in \im(\cC_1^{\half})$.

To show the covariance conditions of the FH theorem, first we decompose the space into 
$\hilbert' \oplus \hilbert_0$ where
$\hilbert' = {\rm span} \{\psi_i\}_{b_i=0}$ and 
$\hilbert_0 = {\rm span} \{\psi_i\}_{b_i\neq0}$.
Furthermore, note that due to the block structure of diagonal
operators, we decompose the product space into $\hilbert^{\prime 2} \oplus \hilbert_0^2$, where $\cC_i=\cC'_i \oplus \cC_{0,i}$ and $\cC'_i$ and $\cC_{0,i}$ act only 
on $\hilbert'^2$ and $\hilbert_0^2$ respectively.
Note that $\cC'_1 = \cC'_2$ by Assumption\ref{assum1} (2), so conditions (1) and (3) of the FH theorem are automatically satisfied.  
We will consider the measures restricted to the $\hilbert^2_0$ space.  Indeed, without loss of generality, we will assume $|b_i| \geq c>0$ for all $i\in \N$ for the remainder of the proof, so that 
$\hilbert^2=\hilbert_0^2$.

For condition (1) of the FH theorem, note that given conditions (1)--(3) of Theorem \ref{theo:1}, there exist constants $c_1, c_2 > 0$ such that
\[
c_1 \|x \| < \|\cQ_1 x \| < c_2 \|x \| \; {\rm and} \; c_1 \|x \| < \| \cQ_2 x \| < c_2 \|x \|, \; \forall x \in \hilbert^2. %
\]
Thus the operators $\cQ_1^{\half}$ and $\cQ_2^{\half}$ have the same image, i.e., $\im(\cQ_2^{\half}) = \im(\cQ_2^{\half})$.  
Since $\cC_i^{\half}=\cL \cQ_i^{\half}$ for $i=1,2$, 
we have immediately $\im(\cC_1^{\half}) = \im(\cC_2^{\half})$.

For condition (3) of the FH theorem, note that the operators $\cQ_1$ and $\cQ_2$ can be decomposed as 
$\cQ_1^{} = \cR_1^{} \cR_1^{\ast}$ and $\cQ_2^{} = \cR_2^{} \cR_2^\ast$, where
\begin{equation}
\label{eq:QR}
\cR_1 = 
\left( \begin{array}{cc}
\cI & 0\\
\cD_\cA & \cD_\cB
\end{array} \right) \; {\rm and} \;
\cR_2 = 
\left( \begin{array}{cc}
\cD_\cB & \cD_\cA \\
0 & \cI
\end{array} \right).
\end{equation}
Thus $\cC_1$ and $\cC_2$ can be decomposed as 
\[
\cC_1 = \left( \cL \cR_1 \right)\left( \cL \cR_1 \right)^{\ast} \;{\rm and}\; \cC_2 = \left( \cL \cR_2 \right)\left( \cL \cR_2 \right)^{\ast},
\]
respectively.
Recollecting terms we have 
\begin{equation}
\cT = 
\left( \begin{array}{ll}
\cD_\cA^2 + \cD_\cB^2 - \cI & - \cD_\cA \cD_\cB^{-1} \left( \cD_\cA^2 + \cD_\cB^2 - \cI \right) \\
- \cD_\cA \cD_\cB^{-1} \left( \cD_\cA^2 + \cD_\cB^2 - \cI \right) & \cD_\cA^2 + \cD_\cB^{-2}\left( \cI - \cD_\cA^2 \right)^2 - \cI
\end{array} \right).
\end{equation}
So
\[
{\rm trace} \left( \cT^{\ast} \cT  \right) = \sum_{i = 1}^{\infty} \frac{\left( a_i^2 + b_i^2 - 1 \right)^2 \left( \left( a_i^2 + b_i^2 - 1 \right)^2 + 2b_i^2 \right)}{b_i^4} < K \sum_{i = 1}^{\infty} \left( a_i^2 + b_i^2 - 1 \right)^2,
\]
for some $K<\infty$, and hence the operator $\cT$ is Hilbert-Schmidt given that condition (4) of Theorem \ref{theo:1} holds. 

Therefore for operators $\cA$, $\cB$, and $\cG$ that satisfy the conditions of Theorem \ref{theo:1}, the proposal \eqref{eq:operator_weighted} yields a well-defined MH algorithm for the infinite dimensional posterior measure $\mu_y$ that is equivalent to the prior measure $\mu_0$.

%%%%%%%%%%%%%%%%%
\section{Proof of Corollary \ref{coro:1}}
\label{sec:proof:c1}

Condition (2) of Theorem \ref{theo:1} defines a conditional update, in which the components corresponding to $b_i = 0$ are fixed.
These fixed components do not affect the acceptance probability, so we will assume $|b_i| \geq c>0$ for all $i\in \N$ in this proof.

As in \ref{sec:proof:t1}, we first define the pair of proposals
\begin{eqnarray*}
q(v, dv') & : & v' = \cA v + \cB \rand - \cG \gradientv ,
\\
\tilde{q}(v, dv') & : & v' = \cA v + \cB \rand .
\end{eqnarray*}
Then we define the pair of measures $\nu(dv, dv')$ and $\tilde{\nu}(dv, dv')$ associated with this pair of proposals as $\nu(dv, dv') = q(v, dv') \mu_y (dv)$, and $\tilde{\nu}(dv, dv') = \tilde{q}(v, dv') \mu_0 (dv)$, respectively. 
Here $\mu_0 = \normal(0,\cI)$.
We can define $\nu^\bot(dv, dv')$ and $\tilde{\nu}^\bot(dv, dv')$ in a similar way. 
The acceptance probability has the form
\[
\alpha(v, v') = \min \left\{ 1, \frac{d \nu^\bot}{d \nu}(v, v') \right\}. 
\]
Letting 
\[
\beta(v, v') = \log\left( \frac{d \nu}{d \tilde{\nu}}(v, v') \right) = \log\left( \frac{d q(v, \cdot)}{d \tilde{q}(v, \cdot)}(v')\right) \log\left( \frac{d \mu_y}{d \mu_0}(v) \right),
\]
we have
\begin{equation}
\label{eq:c1_acc}
\frac{d \nu^\bot}{d \nu}(v, v') =  \frac{d \tilde{\nu}^\bot}{d \tilde{\nu}}(v, v') \exp\left( \beta(v', v) - \beta(v, v')\right).
\end{equation}
Recall equation \eqref{eq:tranpost}, 
\[
\frac{d \mu_y}{d \mu_0}(v) = \exp\left(-\potential(v ; y) - \frac12 \|v_{\rm ref}\|^2 - \langle v_{\rm ref} ,v \rangle \right),
\]
where $v_{\rm ref} = \prcov^{-\half}m_{\rm ref}$. 
The term $\frac12 \|v_{\rm ref}\|^2$ is cancelled out in $\beta(v', v)$ and $\beta(v, v')$, and hence we have
\begin{equation}
\label{eq:rho_2}
\beta(v, v') = -\potential(v ; y) - \langle v_{\rm ref} ,v \rangle + \log\left( \frac{d q(v, \cdot)}{d \tilde{q}(v, \cdot)}(v')\right).
\end{equation}
The first two terms in \eqref{eq:rho_2} are $\mu_0$-a.s.\ finite by the discussion preceding \eqref{eq:tranpost}.

To simplify the derivation of the right-hand side of \eqref{eq:c1_acc}, we apply the transformation \eqref{eq:trans_w}.
Then the pair of proposals $q(v, dv')$ and $\tilde{q}(v, dv')$ defined on the transformed parameters $(w, w')$ have the form
\begin{eqnarray*}
q(w, dw') & : & w' = \cD_\cA w + \cD_\cB \rand - \cD_\cG \gradientw,
\\
\tilde{q}(w, dw') & : & w' = \cD_\cA w + \cD_\cB \rand .
\end{eqnarray*}
The pair of measures $\tilde{\nu}^\bot(v, dv')$ and $\tilde{\nu}(v, dv')$ transformed onto $(w, w')$ have the form
\begin{eqnarray*}
\tilde{\nu}^\bot(dw, dw') & = & \tilde{q}(w, dw') \mu^w_0(dw) ,\\
\tilde{\nu}(dw, dw') & = & \tilde{q}(w', dw) \mu^w_0(dw') .
\end{eqnarray*}
where $\mu^w_0 = \normal(0, \cI)$ as the transformation \eqref{eq:trans_w} is unitary.
We want to derive the Radon-Nikodym derivatives
\[
\frac{d \tilde{\nu}^\bot}{d \tilde{\nu}}(w, w') = \frac{d \tilde{\nu}^\bot}{d \tilde{\nu}}(v, v'), 
\; {\rm and} \;
\frac{d q(w, \cdot)}{d \tilde{q}(w, \cdot)}(w') = \frac{d q(v, \cdot)}{d \tilde{q}(v, \cdot)}(v').
\]

Following the same derivation as in \ref{sec:proof:t1}, we can show that $\tilde{\nu}^\bot(dw, dw')$ and $\tilde{\nu}(dw, dw')$ follow the Gaussian measures
\[
\tilde{\nu}(dw, dw') = \normal(0, \cQ_1), \; {\rm and} \; \tilde{\nu}^\bot(dw, dw') = \normal(0, \cQ_2),
\]
where $\cQ_1$ and $\cQ_2$ are defined as in \eqref{eq:cov_operator}.
Thus we have
\begin{eqnarray*}
\log\left(\frac{d \tilde{\nu}^\bot}{d \tilde{\nu}}(w, w')\right) & = & 
- \frac12 \left\langle w', {\cD_\cB}^{-2}\left({\cD_\cA}^2 + {\cD_\cB}^2 - \cI\right) w' \right\rangle \\
&& + \frac12 \left\langle w, {\cD_\cB}^{-2}\left({\cD_\cA}^2 + {\cD_\cB}^2 - \cI\right) w \right\rangle.
\end{eqnarray*}
Both terms in the above equation are $\mu_0$-a.s.\ finite by condition (4) of Theorem \ref{theo:1}.  

Following Theorem 2.21 of \cite{DaPrato_1992}, we have 
\begin{eqnarray*}
\log\left( \frac{d q(w, \cdot)}{d \tilde{q}(w, \cdot)}(w') \right) & = & - \left \langle {\cD_\cB}^{-1} \cD_\cG  \gradientw, {\cD_\cB}^{-1} \left( w' - \cD_\cA w \right) \right \rangle  \\
&& - \frac{1}{2} \left \| {\cD_\cB}^{-1} \cD_\cG \gradientw\right \|^2 .
\end{eqnarray*}
Following the same argument as in \ref{sec:proof:t1}, we can show that
the Gaussian laws of $q(w, dw')$ and $\tilde{q}(w, dw')$ are
equivalent, and hence $\log( \frac{d q(w, \cdot)}{d \tilde{q}(w,
  \cdot)}(w') )$ is also guaranteed to be 
  $\mu_0$-a.s.\ finite.  This can be seen from the equation above, by the conditions of Theorem \ref{theo:1} 
  and Assumption \ref{assum1}(3), since the inner product of an element of 
  $\mathcal{H}$ with white noise is finite almost surely.

Transforming $\log \left (\frac{d \tilde{\nu}^\bot}{d \tilde{\nu}}(w, w') \right)$ back to $(v, v')$, we have
\begin{eqnarray*}
\log\left(\frac{d \tilde{\nu}^\bot}{d \tilde{\nu}}(v, v')\right) & = & 
- \frac12 \left\langle v', \cB^{-2}\left(\cA^2 + \cB^2 - \cI\right) v' \right\rangle \\
&& + \frac12 \left\langle v, \cB^{-2}\left(\cA^2 + \cB^2 - \cI\right) v \right\rangle.
\end{eqnarray*}
Transforming $\log \left ( \frac{d q(w, \cdot)}{d \tilde{q}(w, \cdot)}(w') \right )$
back to $(v, v')$ and substituting into \eqref{eq:rho_2}, we have
\[
\beta(v, v') = -\potential(v ; y) - \langle v_{\rm ref} ,v \rangle - \left \langle {\cB}^{-1} \cG  \gradientv, {\cB}^{-1} \left( v' - \cA v \right) \right \rangle - \frac{1}{2} \left \| {\cB}^{-1} \cG \gradientv\right \|^2.
\]
By collecting terms we can define
\begin{eqnarray*}
\rho(v, v') & = & - \potential(v; y) - \left\langle v_{\rm ref}, v \right\rangle - \frac12 \left\langle v, {\cB}^{-2}\left({\cA}^2 + {\cB}^2 - \cI\right) v  \right\rangle \\
&& - \left \langle {\cB}^{-1} \cG  \gradientv, {\cB}^{-1} \left( v' - \cA v \right) \right \rangle - \frac{1}{2} \left \| {\cB}^{-1} \cG \gradientv\right \|^2  ,
\end{eqnarray*}
and the acceptance probability has the form
\[
\alpha(v, v') = \min \left\{ 1, \exp \left ( \rho (v', v) - \rho(v, v')\right)\right\}.
\]

%%%%%%%%%%%%%%%%%

\end{document}